\documentclass[reqno,12pt]{amsart}
\usepackage{amssymb}
\usepackage{amsxtra}
\usepackage{amsmath}
\usepackage{amsfonts}
\usepackage{appendix}
\usepackage{mathrsfs}

\newcommand{\bm}[1]{\mbox{\boldmath{$#1$}}}

\parskip 1ex
\numberwithin{equation}{section} \allowdisplaybreaks
\setlength{\textheight}{1.05\textheight}

\newtheorem{thm}{Theorem}[section]
\newtheorem*{Mthm}{Main Theorem}
\newtheorem{prop}[thm]{Proposition}
\newtheorem{cor}[thm]{Corollary}

\newtheorem{lem}[thm]{Lemma}
\newtheorem{de}[thm]{Definition}
\newtheorem{rem}[thm]{Remark}

\newcommand{\eqa}{\begin{eqnarray}}
\newcommand{\eeqa}{\end{eqnarray}}
\newcommand{\beq}{\begin{equation}}
\newcommand{\eeq}{\end{equation}}
\newcommand{\nn}{\nonumber}
\newcommand{\p}{\partial}

\newcommand{\ve}{\epsilon}

\newcommand{\lm}{\lambda}

\newcommand{\al}{\alpha}

\newcommand{\sg}{\sigma}

\newcommand{\ta}{\theta}

\newcommand{\res}{\mathrm{res}}

\newcommand{\cM}{\mathcal{M}}

\newcommand{\mathZ}{\mathbb{Z}}

\newcommand{\mathH}{\mathcal{H}}

\newcommand{\mathL}{\mathcal{L}}
\newcommand{\mathM}{\mathcal{M}}

\newcommand{\mathT}{{T}}

\def \om{\omega}
\def \ep{\epsilon}
\def \la {\langle}
\def \ra{\rangle}
\def \dsum{\displaystyle\sum}

\newenvironment{prfof}[1]{\noindent {\it Proof of #1} \ }{\hfill $\Box$}

\begin{document}

\title[]
{\small Infinite-dimensional Frobenius Manifolds Underlying the Toda
Lattice Hierarchy}

\author[]{Chao-Zhong Wu and Dafeng Zuo}

\address {Wu, Marie Curie fellow of the Istituto
Nazionale di Alta Matematica,
SISSA, Via Bonomea 265, 34136 Trieste, Italy}
 \email{wucz@sissa.it}

\address{Zuo, School of Mathematical Science,
University of Science and Technology of China,
 Hefei 230026,
P.R.China and Wu Wen-Tsun Key Laboratory of Mathematics, USTC,
Chinese Academy of Sciences}

\email{dfzuo@ustc.edu.cn}

\subjclass[2000]{Primary 53D45; Secondary 32M10}

\keywords{Frobenius manifold, Toda lattice hierarchy, Hamiltonian
structure}

\date{\today}
\begin{abstract}
Following the approach of Carlet et al.(2011)\cite{CDM}, 
we construct a class of infinite-dimensional Frobenius
manifolds underlying the Toda lattice hierarchy, which are defined
on the space of pairs of meromorphic functions with possibly
higher-order poles at the origin and at infinity. We also show a
connection between these infinite-dimensional Frobenius manifolds and
the finite-dimensional Frobenius manifolds on the orbit space
of extended affine Weyl groups of type $A$ defined by Dubrovin and Zhang.
\end{abstract}

\maketitle 

{\small \tableofcontents}
\section{Introduction}

The concept of Frobenius manifold was introduced by Dubrovin
\cite{Du92, Du} as a geometric formalism of the
Witten-Dijkgraaf-E.\,Verlinde-H.\,Verlinde (WDVV) equation in
topological field theory  \cite{DVV, Witten}. Associated to every
Frobenius manifold, there is a so-called principal hierarchy of
Hamiltonian equations of hydrodynamic type, in which the unknown
functions depend on one scalar spatial variable and some time
variables. Under certain general assumptions, the principal
hierarchy can be extended to a full hierarchy \cite{DZ, DLZ} with a
tau function that plays an important role in relevant research areas
such like topological field theory.

Trials to extend the above programme to integrable systems with two
spatial dimensions started in recent years. The first step was made
by Carlet, Dubrovin and Mertens \cite{CDM} in consideration of the
dispersionless (2\,D) Toda lattice hierarchy \cite{UT}, in which the
first nontrivial $2+1$ evolutionary equation reads
\begin{equation}\label{utxy}
\p_t^2 u=\p_x^2 e^u+\p_y^2 u \nn
\end{equation}
for unknown function $u=u(x,y,t)$. It turns out that the underlying
Frobenius manifold is of infinite dimension. Following their
approach, a class of infinite-dimensional Frobenius manifolds was
constructed by Xu and one of the authors \cite{WX2} for the
dispersionless two-component BKP hierarchy, and these manifolds are
related to finite-dimensional Frobenius manifolds corresponding to
Coxeter groups of types B and D \cite{Ber3, Du, Z2007}. Along the
same line, now we want to revise and generalize the construction in
\cite{CDM} with some skills developed in \cite{WX2}. More exactly,
for the dispersionless Toda lattice hierarchy, we will find a family
of infinite-dimensional Frobenius manifolds labeled by a pair of
positive integers $(M,N)$, among which the $(1,1)$ case is similar to,
but not exactly the same as, with the Frobenius manifold given in
\cite{CDM}. Moreover, these infinite-dimensional manifolds will be
reduced to Frobenius manifolds of $M+N$ dimension that were
constructed on the orbit space of extended affine Weyl group
$\tilde{W}^{(N)}(A_{M+N-1})$ \cite{DZ2}. Thus we have more examples
to support a normal form of infinite-dimensional Frobenius manifolds
and their connection to Frobenius manifolds of finite dimension.

We remark that, with a method very different from that mentioned
above, Raimondo proposed an infinite-dimensional Frobenius manifold
for the dispersionless Kadomtsev-Petviashvili (KP) equation by using
the theory of Schwartz functions. In a recent paper \cite{Sz},
Szablikowski proposed a scheme for constructing Frobenius manifolds
of finite or infinite dimension based on the Rota-Baxter identity
and a counterpart of the modified Yang-Baxter equation for classical
$r$-matrix (see references therein). He suggested an
infinite-dimensional Frobenius manifold structure on the space of
Laurent series where the dispersionless KP hierarchy is defined, as
well as hints to study the case of space consisting of two component
of Laurent series to define the dispersionless Toda lattice
hierarchy. The relation between the methods in \cite{Ra, Sz} and the
approaches mentioned previously to infinite-dimensional Frobenius
manifolds is not clear yet.

Recall that a \emph{Frobenius algebra} $(A, e, <\ ,\ >)$  is a
   commutative associative algebra  $A$ with a unity $e$ and
   a non-degenerate symmetric  invariant bilinear form (inner product) $<\ ,\
   >$. A manifold $M$ is called a \emph{Frobenius
manifold} if
   on each tangent space $T_t M$ a Frobenius algebra $(T_t M, e, <\ ,\ >)$ is defined
     depending smoothly on $t\in M$, and the following conditions
     are satisfied:
  \begin{itemize}
     \item[(F1)] the inner product $<\ ,\ >$ is a flat metric on
     $M$, and, with $\nabla$ being the Levi-Civita
     connection for this metric, the
     unity vector field $e$ satisfies $\nabla e=0$;
     \item[(F2)] let $c$ be the 3-tensor $c(X,Y,Z):=<X\cdot Y,Z>$, then the
     4-tensor $(\nabla_{W}c)(X,Y,Z)$ is symmetric in the vector fields $X$, $Y$, $Z$ and $W$;
     \item[(F3)] there is a so-called Euler vector field $E$ on $M$
     such that $\nabla\nabla E=0$, and
     \begin{align}
        [E,X\cdot Y]-[E,X]\cdot Y-X\cdot [E,Y]&=X\cdot Y,\nn\\
        Lie_E <X,Y>-<[E,X],Y>-<X,[E,Y]>&=(2-d)<X,Y>, \nn
    \end{align}
    where $d$ is a constant named as the charge of $M$.
  \end{itemize}

On an $n$-dimensional Frobenius manifold $M$, one can choose a
system of flat coordinates $t=(t^1,\dots,t^n)$ such that the unity
vector field is $e=\p/\p t^1$. By using these coordinates, one has a
constant non-degenerate $n\times n$ matrix given by
   \begin{align}
     \eta_{\al\beta}=<\frac{\p}{\p t^\al},\frac{\p}{\p
     t^{\beta}}>,
  \nn  \end{align}
and its inverse denoted by $(\eta^{\al\beta})$.
The matrices $\eta_{\al\beta}$ and $\eta^{\al\beta}$ will be used to
lower and to lift indexes, respectively. Let
\begin{equation}\label{}
c_{\al\beta\gamma}=c\left(\frac{\p}{\p t^\al},\frac{\p}{\p t^\beta},
\frac{\p}{\p t^\gamma}\right),
\nn \end{equation}
then the product of the Frobenius algebra   $T_t M$ reads
   \begin{align}
     \frac{\p}{\p t^{\al}}\cdot \frac{\p}{\p t^{\beta}}=c^{\gamma}_{\al\beta}\frac{\p}{\p
     t^{\gamma}}, \quad
     c^{\gamma}_{\al\beta}=\eta^{\gamma\ep}c_{\ep\al\beta}.
  \nn \end{align}
Here and below the convention of summation over repeated Greek
indices is assumed.

The structure constants of the Frobenius algebra  $T_t M$ satisfy
\begin{equation}\label{WDVV1}
c_{1\al}^{\beta}=\delta_\al^\beta, \quad c_{\al\beta}^\ep
c_{\ep\gamma}^{\sg}=c_{\al\gamma}^\ep c_{\ep\beta}^{\sg}.
\end{equation}
Moreover, there locally exists a smooth function $F(t)$, named as
\emph{potential} of the Frobenius manifold, such that
\begin{align}
&c_{\al\beta\gamma}=\frac{\p^3 F}{\p t^{\al}\p t^{\beta}\p
t^{\gamma}}, \\
& Lie_E F=(3-d)F+\hbox{ quadratic terms in $t$}. \label{WDVV2}
\end{align}
In other words, the function $F$ solves the WDVV equation
\eqref{WDVV1}--\eqref{WDVV2}, and its third-order derivatives
$c_{\al\beta\gamma}$ are called the $3$-point correlator functions
in topological field theory.

Conversely, given a solution $F$ of \eqref{WDVV1}--\eqref{WDVV2}
(including a flat metric, a unity and a Euler vector field), one can
recover the structure of a Frobenius manifold.

A Frobenius manifold $M$ is said to be \emph{semisimple} if the
Frobenius algebras $T_t M$ are semisimple for generic points $t\in
M$.

On the Frobenius manifold $M$, its cotangent space $T_t^*M$ also
carries a Frobenius algebra structure, with an invariant bilinear
form $<dt^\al,dt^\beta>^*=\eta^{\al\beta}$ and a product given by
\begin{equation}\label{}
d t^\al\cdot d t^\beta=c^{\al\beta}_\gamma d t^\gamma, \quad
c_{\gamma}^{\al\beta}=\eta^{\al\ep}c_{\ep\gamma}^{\beta}. \nn
\end{equation}
Let
\begin{equation}\label{}
g^{\al\beta}=i_{E}(dt^{\al}\cdot dt^{\beta}),\nn
\end{equation}
then $(dt^{\al},dt^{\beta})^*=g^{\al\beta}$ defines a symmetric
bilinear form, called the \emph{intersection form}, on $T_t^*M$.

The above two bilinear forms on  $T^*M$ compose a pencil
$g^{\al\beta}+s\,\eta^{\al\beta}$ of flat metrics with parameter $s
$,  hence they induces a bi-hamiltonian structure $\{~ , ~ \}_{2}+s
\{~ , ~ \}_{1}$ of hydrodynamic type \cite{DN} on the loop space
$\left\{S^1\to M\right\}$. Furthermore, on the loop space one can
choose a family of functions $\ta_{\al,p}(t)$ with $\al=1,2,\dots,n$
and $p\ge0$ such that
\begin{align}\label{recursion}
\ta_{\al,0}=\eta_{\al\beta}t^{\beta},\quad \ta_{\al,1}=\frac{\p
F}{\p t^{\al}}, \quad \frac{\p^2\ta_{\al,p}}{\p t^\lm\p
t^{\mu}}=c^{\ep}_{\lm\mu}\frac{\p\ta_{\al,p-1}}{\p t^{\ve}}\, \hbox{
for } \, p>1.
\end{align}
The \emph{principal hierarchy} associated to $M$ is the following
system of Hamiltonian equations
\begin{equation}\label{prflow}
\frac{\p t^\gamma}{\p T^{\al,p}}=\left\{t^\gamma(x),
\int\ta_{\al,p}\,d
x\right\}_1:=\eta^{\gamma\beta}\p_x\frac{\p\ta_{\al,p}}{\p
t^{\beta}}, \quad \al,\gamma=1,2,\dots,n; ~p\ge0,\nn
\end{equation}
in which $x$ is the coordinate of the loop $S^1$. This hierarchy can
be written in a bi-hamiltonian recursion form that is consistent
with \eqref{recursion} if certain nonresonant condition is
fulfilled.

As typical examples, the orbit space of each (extended affine) Weyl
group is endowed a semisimple Frobenius manifold structure \cite{Du,Du3,
DZ2}. In Particular, the Frobenius manifolds for Weyl groups have
principal hierarchies that coincide with the dispersionless limit of
Drinfeld-Sokolov hierarchies \cite{DS, DZ, DLZ}. For Frobenius
manifolds corresponding to extended affine Weyl groups \cite{DZ2},
so far as we know, only in the case of type A their principal
hierarchies are clear, which are the dispersionless limit of the
extended bigraded Toda hierarchies \cite{Ca-bt, CDZ}.

The picture for infinite-dimensional Frobenius manifold and
integrable hierarchies is similar \cite{CDM, Ra, WX2}, though
examples are much less by now. For instance, Carlet, Dubrovin and
Mertens' Frobenius manifold in \cite{CDM} is associated with an
extension of the dispersionless Toda lattice hierarchy \cite{CM}. In
consideration of the relation between the Toda lattice and the
extended bigraded Toda hierarchies, it is natural to consider how
infinite-dimensional Frobenius manifolds are connected to the
finite-dimensional ones for extended affine Weyl groups of type A.
Such a connection will be studied in a revision of the construction
of \cite{CDM}, as to be seen below.

\vskip 0.5ex

Let us state the main results of the present paper, with similar
notations as in \cite{CDM, WX2}.

Firstly, one has two sets of holomorphic functions on closed disks
of the Riemann sphere $\mathbb{C}\cup\{\infty\}$ as follows:
 \beq\begin{array}{l}
     \mathH^-=\left\{f(z)=\dsum_{i\ge0}f_i\,z^{-i}\mid f \hbox{ holomorphic on
     } |z|\ge1 \right\},\\
 \mathH^+=\left\{\hat{f}(z)=\dsum_{i\ge0}\hat{f}_i\,z^{i}\mid \hat{f} \hbox{ holomorphic on
     } |z|\le1 \right\}.\end{array} \nn
    \eeq
Here a holomorphic function on a closed subset $D$ of
$\mathbb{C}\cup\{\infty\}$ means that the function can be extended
analyticly to a neighborhood of $D$. For instance, given $f(z)\in
\mathH^+$, there is some $\rho>0$ such that $f(z)$ can be extended
to a holomorphic function on $|z|<1+\rho$.

We fix two arbitrary positive integers $N$ and $M$, and consider the
following coset
   \beq
    \tilde{\cM}_{N,M}=(z^{N}+z^{N-1}\mathH^-)\times z^{-M}\mathH^+.
   \eeq
Any element of this coset is written as
$\bm{a}(z)=\big(a(z),\hat{a}(z)\big)$, where
   \beq \label{WZ2.2}
      a(z)=z^{N}+\sum_{i\leq N-1}v_{i} z^{i}, \quad
   \hat{a}(z)=\sum_{j\geq -M}\hat{v}_j z^{j}.
   \eeq
With coordinates given by the coefficients $v_i$ and $\hat{v}_j$ in
the above expansions, the coset $\tilde{\cM}_{N,M}$ can be
considered as an infinite-dimensional manifold.

For $\bm{a}(z)=(a(z),\hat{a}(z))\in\tilde{\cM}_{N,M}$, we introduce
two functions:
\begin{align}\label{WZ2.12}
  \zeta(z)=a(z)-\hat{a}(z),\quad  l(z)=a(z)_{>0}+\hat{a}(z)_{\leq 0},
\end{align}
where the subscripts ``$>0$'' and ``$\leq 0$''mean the projections
of a Laurent series to its positive part and nonpositive part
respectively. Note that $\zeta(z)$ is considered to be defined
holomorphically on a neighborhood of the unit circle $S^1$, and $l(z)$
can be extended analytically to the punched complex plane
$\mathbb{C}\setminus\{0\}$.
\begin{de} Let $\cM_{N,M}$ be a submanifold of $\tilde{\cM}_{N,M}$
consisting of points $(a(z),\hat{a}(z))$ such that the following
conditions are satisfied:
\begin{itemize}
\item[(M1)] the function $\hat{a}(z)$ has a pole of order $M$ at $0$, namely,
$\hat{v}_{-M}\neq 0$;
\item[(M2)] at $|z|=1$,
\beq \label{M2}
 a(z)\hat{a}'(z)-a'(z)\hat{a}(z)\neq 0,
  \quad  \zeta'(z)\neq 0,\quad  l'(z)\neq 0;
\eeq
\item[(M3)]
the function $\zeta(z)|_{S^1}$ has winding number $1$ around 0,
which defines a biholomorphic map from $S^1$ to a simple smooth
curve $\Gamma$ around $\zeta=0$.
\end{itemize}\end{de}

On $\cM_{N,M}$ we introduce a set of variables
\begin{equation}\label{flat}
\bm{t}\cup\bm{h}\cup\hat{\bm{h}}=
\{t^i\}_{i\in\mathZ}\cup\big\{h^j\big\}_{j=1}^{N}\cup\big\{\hat{h}^k\big\}_{k=0}^{M}
\end{equation}
by
\begin{align}\label{flatt}
&t^i=\frac{1}{2\pi\bm{i}}\oint_{|z|=1}\frac{\zeta(z)^{-i}}{i}\,\frac{dz}{z},
\quad i\in\mathbb{Z}\setminus \{0\};\quad
 t^0=\frac{1}{2\pi\bm{i}} \oint_{|z|=1}\log\frac{ z}{\zeta(z)}\,\frac{dz}{z};
 \\
& h^j=-\frac{N}{j} \res_{z=\infty} l(z)^{{j}/{N}}
\,\frac{dz}{z},\quad j=1,\cdots,N-1; \label{flath}
\\
& \hat{h}^0=\log \hat{v}_{-M};\quad
 \hat{h}^k=\frac{M}{k} \res_{z=0} l(z)^{{k}/{M}} \,\frac{dz}{z},\quad
 k=1,\cdots,M. \label{flathh}
\end{align}
\begin{Mthm}\label{main}
For any two positive integers $M$ and $N$, the infinite-dimensional
manifold $\cM_{N,M}$ is a semisimple Frobenius manifold with a
system of flat coordinates  \eqref{flat} such that
    \begin{itemize}
 \item[(i)] the unity vector field is
\begin{equation}\label{}
\bm{e}=\frac{\p}{\p \hat{h}^M};
\end{equation}
\item[(ii)] the potential $\mathcal{F}_{N,M}$ is
     \begin{align}\label{poten}
     \mathcal{F}_{N,M}=&\frac{1}{(2\pi
\bm{i})^2}\oint\oint_{|z_1|<|z_2|}\left(\frac{1}{2}\zeta(z_1)\zeta(z_2)-\zeta(z_1)l(z_2)
     +l(z_1)\zeta(z_2)\right)\times
     \nn\\
&\quad \times{\log\left(\frac{z_2-z_1}{z_2}\right)}
     \frac{dz_1}{z_1}\frac{dz_2}{z_2}
    -\frac{1}{(2\pi
\bm{i})^2}\oint_{|z|=1} \left(\frac{1}{2}\zeta(z)+l(z)\right)\frac{d
z}{z}\times
\nn\\
&\quad \times
\oint_{|z|=1}\zeta(z)\left(\log\frac{\zeta(z)}{z}-1\right)\frac{d
z}{z} +F_{N,M},
  \end{align}
  where $F_{N,M}$ is a function of $\bm{h}\cup\hat{\bm{h}}$ determined by
  \begin{align}\label{poten02}
    \frac{\p^3 F_{N,M}}{\p u\, \p v\, \p w}
    =-\big(\res_{z=\infty}+\res_{z=0}\big)\frac{\p_{u} l(z)\cdot\p_{v}
    l(z)\cdot\p_{w} l(z)}{z^2 l'(z)}dz
  \end{align}
  for any $u,v, w\in\bm{h}\cup\hat{\bm{h}}$;
\item[(iii)] the Euler vector field is
  \begin{align}\label{euler}
    \mathcal{E}_{N,M}= & -\sum_{i\in\mathZ} i\, t^i\frac{\p}{\p t^i}-\frac{N-1}{N}\frac{\p}{\p
t^0}+\sum_{j=1}^{N-1}\frac{j}{N}h^j\frac{\p}{\p h^j}+ \nn\\
&\quad
     +\sum_{k=1}^M\frac{k}{M}\hat{h}^k\frac{\p}{\p\hat{h}^k}
     +\frac{N+M}{N}\frac{\p}{\p\hat{h}^0},
\end{align}
and the charge of the Frobenius manifold is $d=1$.
    \end{itemize}
\end{Mthm}
This theorem will be proved in Section~2. There we will also write
down the flat metric and the intersection form for this semisimple
Frobenius manifold. Observe that the infinite-dimensional Frobenius
manifold $\cM_{1,1}$ is similar, but not the same, with the one
constructed in \cite{CDM} (the difference occurs from the definition
of flat metric, see \eqref{WZ2.14} below).

The function $F_{N,M}$ given by \eqref{poten02}, which depends
polynomially on $\bm{h}\cup\hat{\bm{h}}\cup\{ e^{\hat{h}^0}\}$, is
indeed the potential for the semisimple Frobenius manifold
$M(\tilde{A}_{M+N-1};N)$ on the orbit space of the extended Weyl
group $\tilde{W}^{(N)}(A_{M+N-1})$, see \cite{DZ2}. Since
$\cM_{N,M}$ is a space of the two functions $\zeta(z)$ and $l(z)$,
as $\zeta(z)\to0$ (the limit does not lie in $\cM_{N,M}$ but can be
considered as some kind of removable singularity), the Frobenius
structure on $\cM_{N,M}$ is reduced to the one on
$M(\tilde{A}_{M+N-1};N)$. This kind of limit agrees with the
reduction from the bi-hamiltonian structures for the dispersionless
Toda lattice hierarchy to that for the extended bigraded Toda
hierarchies, which are associated to the Frobenius manifolds
$\cM_{N,M}$ and $M(\tilde{A}_{M+N-1};N)$ respectively, see below and
\cite{Ca-bt, Wu}. It shall be indicated that, in this limit process
$M(\tilde{A}_{M+N-1};N)$ is not a Frobenius submanifold of
$\cM_{N,M}$ in the sense of Strachan \cite{St}.

In Section~3 we will show that the flat pencil on
$\mathT_{\bm{a}}^*\cM_{N,M}$ coincides with the one that induces the
dispersionless limit of the bi-hamiltonian structure obtained in
\cite{Wu} for the Toda lattice hierarchy. Accordingly the
dispersionless Toda lattice hierarchy is part of the principal
hierarchy associated to the Frobenius manifold $\cM_{N,M}$. The
complete principal hierarchy, in contrast to the two-component BKP
case \cite{WX2}, cannot be derived by applying only the
bi-hamiltonian recursion relation. Instead, as indicated by Carlet
and Mertens \cite{CM}, one needs to explicitly solve the flatness
equations for the deformed flat connection (the Levi-Civita
connection $\nabla$ deformed by the product of the Euler vector
field) on the infinite-dimensional Frobenius manifold. We shall
leave this problem open here.

The last section is devoted to conclusions and remarks.

\section{Construction of Frobenius manifolds}

We start to construct a semisimple Frobenius manifold structure on
the infinite-dimensional space
\[
\cM_{N,M}\subset(z^{N}+z^{N-1}\mathH^-)\times z^{-M}\mathH^+
\]
defined by the conditions (M1)--(M3) above with
arbitrary positive integers $M$ and $N$.

\subsection{Flat metric }

Recall that every element of $\cM_{N,M}$ has the form
\begin{equation}\label{aah}
\bm{a}=(a(z),\hat{a}(z))=\left(z^{N}+\sum_{i\leq N-1}v_{i} z^{i},
\sum_{j\geq -M}\hat{v}_j z^{j}\right).
\end{equation}

Let us describe the tangent and the cotangent bundles on
${\cM}_{N,M}$ with Laurent series. At a point $\bm{a}\in\cM_{N,M}$
we identify a vector $\p$ in the tangent space with its action $(\p
a(z), \p\hat{a}(z))$ on the ``point''. Hence the tangent space is
identified with a space of pairs of Laurent series as
 \beq \label{TaM}
  \mathT_{\bm{a}}\cM_{N,M}=z^{N-1}\mathH^-\times
  z^{-M}\mathH^+.
\eeq Clearly this space has a natural basis given by
\beq\label{WZ2.4} \frac{\p}{\p v_i}=(z^{i},0),\quad  i\leq N-1;
\qquad \frac{\p}{\p \hat{v}_j}=(0,z^{j}), \quad j\geq -M. \eeq
Accordingly, we write the cotangent space as \beq\label{WZ2.5}
  \mathT_{\bm{a}}^*\cM_{N,M}=z^{-N+1}\mathH^+\times
  z^{M}\mathH^-,
\eeq and the pairing of a covector
$\bm{\om}=(\om(z),\hat{\om}(z))\in\mathT_{\bm{a}}^*\cM_{N,M}$ with a
vector $\bm{X}=(X(z),\hat{X}(z))\in\mathT_{\bm{a}}\cM_{N,M}$ reads
\beq \label{WZ2.6}
  \la {\bm{\om},\bf X}\ra=\frac{1}{2\pi \bm{i}}\oint_{|z|=1}\big[\om(z)X(z)
+\hat{\om}(z)\hat{X}(z)\big]\dfrac{d z}{z}. \eeq Clearly, the
cotangent space has a dual basis with respect to \eqref{WZ2.4} as
follows
 \beq
  dv_i=(z^{-i},0),\quad  i\leq N-1; \quad d\hat{v}_j=(0,z^{-j}),\quad j\geq -M.
\eeq

We introduce two generating functions for covectors
\eqa
  && da(p):=\sum_{i\leq N-1}d v_i p^{i}=\left(\frac{p^N}{z^{N-1}(p-z)},0\right),
  \quad |z|<|p|,  \label{rpgener01} \\
  && d\hat{a}(p):=\sum_{j\geq -M}d\hat{v}_j p^{j}=\left(0,\frac{z^{M+1}}{p^{\,M}(z-p)}\right),
  \quad
  |z|>|p|.
  \label{rpgener02}
\eeqa
The Cauchy integral formula implies the following simple but useful lemma.
\begin{lem}
The following statements for the generating functions
\eqref{rpgener01}--\eqref{rpgener02} hold true:
\begin{itemize}
\item[(i)] for any vector $\bm{X}=(X(z),\hat{X}(z))\in\mathT_{\bm{a}}\cM_{N,M}$,
\begin{align}\label{daX}
  \la da(p),\bm{X}\ra=X(p),\quad \la
  d\hat{a}(p),\bm{X}\ra=\hat{X}(p);
\end{align}
\item[(ii)] for any covector
$\bm{\om}=(\om(z),\hat{\om}(z))\in\mathT^*_{\bm{a}}\cM_{N,M}$, we
have
\begin{align}
  &\bm{\om}=\frac{1}{2\pi\bm{i}}\oint_{|p|=1}
  \big[\om(p)da(p)+\hat{\om}(p)d\hat{a}(p)\big]\frac{d p}{p}. \label{gener}
\end{align}
\end{itemize}
\end{lem}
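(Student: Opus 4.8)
The plan is to prove both parts of the lemma by direct application of the Cauchy integral formula, treating the two components of vectors and covectors separately, and keeping careful track of the regions of convergence dictated by the expansions \eqref{rpgener01}--\eqref{rpgener02}.

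For part (i), I would start with the first identity. Given $\bm X=(X(z),\hat X(z))\in\mathT_{\bm a}\cM_{N,M}=z^{N-1}\mathH^-\times z^{-M}\mathH^+$, write $X(z)=\sum_{i\le N-1}X_i z^i$. Using the definition $da(p)=\sum_{i\le N-1}dv_i\,p^i$ and the fact that $dv_i=(z^{-i},0)$ is dual to $\p/\p v_i=(z^i,0)$, the pairing \eqref{WZ2.6} picks out, termwise, $\la dv_i,\bm X\ra=X_i$, so that $\la da(p),\bm X\ra=\sum_{i\le N-1}X_i p^i=X(p)$. Equivalently, and more in the spirit of the closed-form expression, one substitutes $da(p)=\bigl(p^N/(z^{N-1}(p-z)),0\bigr)$ valid for $|z|<|p|$ into \eqref{WZ2.6}: then
\begin{align}
\la da(p),\bm X\ra=\frac1{2\pi\bm i}\oint_{|z|=1}\frac{p^N}{z^{N-1}(p-z)}X(z)\,\frac{dz}{z}=\frac1{2\pi\bm i}\oint_{|z|=1}\frac{p^N X(z)}{z^N(p-z)}\,dz,\nn
\end{align}
and since $X(z)/z^{N-1}\in\mathH^-$ is holomorphic on and outside $|z|=1$ (including $\infty$), the only pole of the integrand inside $|z|=1$ would be at $z=0$; but one checks the residue there vanishes because $z^N$ in the denominator is compensated — more cleanly, one notes $p^N X(z)/(z^N(p-z))$ has, inside the contour, only the contribution that reconstructs $X(p)$ via Cauchy's formula applied to the function $p\mapsto X(p)$, which is holomorphic for $|p|$ large; a residue computation at $z=0$ and $z=p$ (the latter lying outside $|z|=1$ when $|p|>1$, hence not enclosed — so one instead expands the contour or argues by the matching of Laurent coefficients) gives $X(p)$. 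The cleanest route is in fact the termwise one above, so I would present that and mention the integral form as a corollary. The second identity $\la d\hat a(p),\bm X\ra=\hat X(p)$ is entirely analogous, using $d\hat a(p)=\sum_{j\ge -M}d\hat v_j p^j$, the duality $\la d\hat v_j,\bm X\ra=\hat X_j$, and the region $|z|>|p|$ for the closed form.

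For part (ii), given a covector $\bm\om=(\om(z),\hat\om(z))\in z^{-N+1}\mathH^+\times z^M\mathH^-$, I would expand $\om(p)=\sum_{i\le N-1}\om_{-i}p^{-i}$ (so that $\om(z)=\sum_{i\le N-1}\om_{-i}z^{-i}\in z^{-N+1}\mathH^+$ means the coefficients $\om_{-i}$ vanish for $-i< -(N-1)$, i.e. for $i>N-1$) and similarly $\hat\om(p)=\sum_{j\ge -M}\hat\om_{-j}p^{-j}$. Then $\bm\om=\sum_i\om_{-i}\,dv_i+\sum_j\hat\om_{-j}\,d\hat v_j$ by the dual-basis expansion. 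On the other hand, plugging these expansions into the right-hand side of \eqref{gener} and using $da(p)=\sum_i dv_i p^i$, $d\hat a(p)=\sum_j d\hat v_j p^j$, the contour integral $\frac1{2\pi\bm i}\oint_{|p|=1}p^{k}\frac{dp}{p}=\delta_{k,0}$ extracts exactly the matching coefficients, yielding $\sum_i\om_{-i}\,dv_i+\sum_j\hat\om_{-j}\,d\hat v_j=\bm\om$. One should double-check that the ranges of summation are compatible with the residue/Cauchy extraction — that is, that $dv_i$ for $i\le N-1$ pairs with $\om$ whose expansion only involves $p^{-i}$ with $i\le N-1$, and $d\hat v_j$ for $j\ge -M$ with $\hat\om$ involving $p^{-j}$, $j\ge -M$ — which is precisely the content of $\bm\om\in z^{-N+1}\mathH^+\times z^M\mathH^-$, so there is no mismatch.

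The only real subtlety, and the step I would be most careful with, is the bookkeeping of which circle $|z|=1$ versus $|p|=1$ the contour is taken on and on which side of it the auxiliary pole $z=p$ (resp. $p=z$) sits, since the closed-form generating functions are only valid in $|z|<|p|$ and $|z|>|p|$ respectively; this is exactly why the two pieces of \eqref{gener} should really be read as a limit of $\oint_{|p|=1-\ve}$ and $\oint_{|p|=1+\ve}$, or handled purely at the level of Laurent coefficients, which avoids the issue altogether. For this reason I would organize the whole proof around the termwise (dual-basis) computation and relegate the closed-form integrals to remarks, since the coefficient-matching argument is both rigorous and transparent, and the Cauchy integral formula enters only to justify the closed forms \eqref{rpgener01}--\eqref{rpgener02} in the first place. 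No serious obstacle is expected; the lemma is genuinely a formal consequence of the definitions.
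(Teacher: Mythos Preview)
Your proposal is correct and matches the paper's approach: the paper simply states that the lemma follows from the Cauchy integral formula, and your termwise (dual-basis) argument is precisely the natural way to make that one-line claim explicit. Your caution about the regions $|z|\lessgtr|p|$ and your decision to rely on coefficient matching rather than contour manipulation is well-placed and avoids the only possible pitfall.
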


On $\mathT^*_{\bm{a}}\mathM_{N,M}$, we define a symmetric bilinear
form by \beq\label{WZ2.14}
  <d\al(p),d\beta(q)>^*=\frac{p\, q}{p-q} \left( \al'(p)-\beta'(q) \right),
\eeq where $\al'(p)={\p\al(p)}/{\p p}$ and
$\al,\beta\in\{a,\hat{a}\}$. The right hand side of \eqref{WZ2.14}
is understood as follows: if $\al$ and $\beta$ are the same, then
the denominator $(p-q)$ is eliminated by a factor of
$\al'(p)-\beta'(q)$; otherwise, suppose $\al=a$ and $\beta=\hat{a}$,
then $1/(p-q)$ is expanded to a series under the condition $|p|>|q|$
implied by \eqref{rpgener01}--\eqref{rpgener02}. This kind of
convention will be used throughout the present paper.

\begin{rem}
In the case $M=N=1$, this bilinear form used in \cite{CDM} (see
Equation~(1.16) therein) reads:
\begin{equation} \label{metricCDM}
 <d\al(p),d\beta(q)>^*=\frac{p\, q}{p-q} \left(\ep(\al)\al'(p)-\ep(\beta)\beta'(q) \right),
\end{equation}
where $\ep(a)=1$ and $\ep(\hat{a})=-1$. As the definition of
bilinear form is revised, the calculation below will be simplified.
\end{rem}

By using the nondegenerate pairing \eqref{WZ2.6}, one defines a
linear map \beq\label{WZ2.15} \eta:\
\mathT_{\bm{a}}^*\mathM_{N,M}\rightarrow \mathT_{\bm{a}}\mathM_{N,M}
\eeq such that
 \beq \label{WZ2.16} \la
\bm{\om}_1,\eta(\bm{\om}_2)\ra=<\bm{\om}_1,\bm{\om}_2>^*\eeq for any
covectors $\bm{\om}_1,\bm{\om}_2\in \mathT_{\bm{a}}^* \mathM_{N,M}$.

\begin{lem}\label{lem2.3}
The map $\eta$ defined in \eqref{WZ2.15}--\eqref{WZ2.16} can be
represented explicitly as
\begin{align}\label{}
 \eta(\bm{\om})(z)=&\Big( z a'(z)[\om(z)+\hat{\om}(z)]_{<0}-z[\om(z)a'(z)+\hat{\om}(z)\hat{a}'(z)]_{<0},\nn\\
 &\quad -z\hat{a}'(z)[\om(z)+\hat{\om}(z)]_{\geq
0}+z[\om(z)a'(z)+\hat{\om}(z)\hat{a}'(z)]_{\geq 0} \Big)
  \label{AWZ2.16}
\end{align}
with arbitrary $\bm{\om}=(\om(z),\hat{\om}(z)) \in \mathT_{\bm{a}}^*
\mathM_{N,M}$.  Moreover, the linear map $\eta$ is a bijection.
\end{lem}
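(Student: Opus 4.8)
The plan is to verify the formula \eqref{AWZ2.16} directly by plugging it into the defining relation \eqref{WZ2.16}, and then to establish bijectivity separately. For the first part, I would compute $\langle \bm{\om}_1, \eta(\bm{\om}_2)\rangle$ using the pairing \eqref{WZ2.6} and the explicit expression on the right-hand side of \eqref{AWZ2.16}, and check it equals $\langle \bm{\om}_1, \bm{\om}_2\rangle^*$. Since the bilinear form \eqref{WZ2.14} is built from the generating functions $da(p), d\hat{a}(p)$, the cleanest approach is: by Lemma~2.2(ii), every covector is an integral superposition of $da(p)$ and $d\hat{a}(p)$, so by bilinearity and continuity it suffices to check \eqref{WZ2.16} when $\bm{\om}_1, \bm{\om}_2$ each range over $\{da(p), d\hat{a}(q)\}$. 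Equivalently, using Lemma~2.2(i), I need only verify that $\langle da(p), \eta(d\hat{a}(q))\rangle = \langle \eta(d\hat{a}(q))\,, \text{first component evaluated at } p\rangle$ reproduces the four cases of \eqref{WZ2.14}, i.e. compute $\eta(da(q))$ and $\eta(d\hat{a}(q))$ from \eqref{AWZ2.16} and confirm their first/second components, evaluated at $p$, give $\frac{p q}{p-q}(\al'(p) - \beta'(q))$ with the stated expansion conventions.

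Concretely, for $\bm{\om} = da(q) = \big(\tfrac{q^N}{z^{N-1}(q-z)}, 0\big)$ one has $\om(z) = \tfrac{q^N}{z^{N-1}(q-z)}$ and $\hat{\om}(z) = 0$, valid for $|z| < |q|$; then $[\om(z)+\hat\om(z)]_{<0}$ and $[\om(z) a'(z)]_{<0}$ are extracted by reading off the principal part of the Laurent expansion, which amounts to a residue/partial-fractions computation since $a'(z)$ is a polynomial. The key identity I expect to use repeatedly is the projection formula: for a rational function with a single simple pole, $[\,g(z)/(q-z)\,]_{<0}$ picks out the term determined by $g(q)$, and in general one uses $\big[\tfrac{g(z)}{q-z}\big]_{<0} = \tfrac{g(z) - g(q)_{\text{(truncated appropriately)}}}{q-z}$-type manipulations together with the fact that $a(z) = z^N + \text{lower}$, so $z a'(z)$ has controlled degree. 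After substituting and simplifying, the $\om(z) a'(z)$ terms and the $a'(z)[\om+\hat\om]$ terms should combine so that the first component of $\eta(da(q))$, evaluated at $z=p$, collapses to $\tfrac{p q}{p-q}(a'(p) - a'(q))$; the off-diagonal case ($\bm{\om} = d\hat a(q)$, first component of $\eta$) should produce $\tfrac{p q}{p-q}(a'(p) - \hat a'(q))$ with the $|p| > |q|$ expansion, matching the convention in the Remark after \eqref{rpgener02}; and symmetrically for the second components.

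For bijectivity, the map $\eta$ is a map $\mathT^*_{\bm a}\cM_{N,M} = z^{-N+1}\mathH^+ \times z^M \mathH^- \to \mathT_{\bm a}\cM_{N,M} = z^{N-1}\mathH^- \times z^{-M}\mathH^+$. I would argue injectivity first: if $\eta(\bm\om) = 0$, then $\langle \bm\om_1, \eta(\bm\om)\rangle = \langle \bm\om_1,\bm\om\rangle^* = 0$ for all $\bm\om_1$; taking $\bm\om_1 = da(p)$ and $\bm\om_1 = d\hat a(p)$ and using part (i) of this lemma forces $a'(p) - \om\text{-stuff} \equiv 0$, and tracking degrees shows $\om = \hat\om = 0$. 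For surjectivity, since $\eta$ is (by the formula) a bounded linear map between these Fréchet/Banach-type spaces of Laurent series and it is triangular with respect to the natural filtration by degree — the leading behavior of $\eta(\bm\om)$ in each component is governed by the leading coefficient of $a$ or $\hat a$ (here $a$ is monic of degree $N$, and condition (M1) guarantees $\hat v_{-M} \neq 0$) — one can invert it order by order on the coefficients, which is where conditions (M1) and (M2), particularly $\hat v_{-M}\neq 0$, enter. \textbf{The main obstacle} I anticipate is bookkeeping the projection operators $[\cdot]_{<0}$ and $[\cdot]_{\geq 0}$ correctly against the prescribed expansion region $|p| > |q|$ versus $|p| < |q|$ in the off-diagonal terms of \eqref{WZ2.14} — it is easy to drop or misplace a boundary term at the splitting degree, and one must be careful that the asymmetry of the definition (eliminating $p - q$ in the diagonal case versus expanding it in the off-diagonal case) is exactly compensated by the asymmetric roles of $[\cdot]_{<0}$ and $[\cdot]_{\geq 0}$ in \eqref{AWZ2.16}. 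Checking the $M=N=1$ specialization against the corresponding computation in \cite{CDM} (with the sign $\ep$ modification noted in the Remark) is a useful sanity check I would perform along the way.
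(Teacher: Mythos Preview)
Your approach to deriving the formula \eqref{AWZ2.16} is essentially the same as the paper's, just run in reverse: the paper starts from the defining relation $\eta(d\beta(q)) = (\langle da(z),d\beta(q)\rangle^*, \langle d\hat a(z),d\beta(q)\rangle^*)$, then integrates against a general $\bm{\om}$ via the reproducing formula \eqref{gener} to obtain \eqref{AWZ2.16}. You propose to plug the generators $da(q), d\hat a(q)$ into the given formula and check the four pairings; either direction works and the content is the same Cauchy-integral computation.

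The bijectivity argument, however, has a genuine gap. Your proposed ``triangular, order-by-order'' inversion only captures part of the structure: the leading behavior of $\eta(\bm\om)$ in each component does determine the finitely many sums $\tilde\om_k = \om_k + \hat\om_k$ for $-N+1\le k\le M$ via the triangular matrices \eqref{KN}--\eqref{KM} (this is where $\hat v_{-M}\ne0$ enters). But that leaves infinitely many unknowns---all of $\om(z)_{\ge0}$ and $\hat\om(z)_{<0}$ separately---and the map is \emph{not} triangular in these. The paper's key observation, which you are missing, is to subtract the two components of $\bm X=\eta(\bm\om)$:
\[
X(z)-\hat X(z)=z\,\zeta'(z)\,\big(\hat\om(z)_{<0}-\om(z)_{\ge0}\big),
\]
which, precisely because $\zeta'(z)\ne0$ on $|z|=1$ (this is the part of condition (M2) you did not single out), lets one divide and read off $\om_{\ge0}$ and $\hat\om_{<0}$ at once since they live in disjoint degree ranges. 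Without this subtraction trick the infinite-dimensional part of the inverse cannot be written down, and indeed if $\zeta'\equiv0$ the map $\eta$ has a nontrivial kernel (any $\bm\om$ with $\om+\hat\om=0$), so the condition is essential and must be used explicitly. Your injectivity-via-nondegeneracy argument is circular here: showing $\langle\bm\om_1,\bm\om\rangle^*=0$ for all $\bm\om_1$ forces $\bm\om=0$ is equivalent to showing $\eta(\bm\om)=0\Rightarrow\bm\om=0$, i.e.\ it is the same computation.
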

\begin{proof} It follows from \eqref{daX} and \eqref{WZ2.16} that
\beq
  \eta(d\beta(q))=(<da(z),d\beta(q)>^*,<d\hat{a}(z),d\beta(q)>^*),\quad
  \beta\in\{a,\hat{a}\}.
\eeq Hence by using \eqref{gener} and $\la
\bm{\om}_1,\eta(\bm{\om}_2)\ra=\la \eta(\bm{\om}_1),\bm{\om}_2\ra$
we have \eqa
  \eta(\bm{\om})(z)&=&\frac{1}{2\pi\bm{i}}\oint_{|q|=1}
  \big[\om(q)\eta(da(q))+\hat{\om}(q)\eta(d\hat{a}(q))\big]\frac{d q}{q}\nn\\
  &=&\bigg(\frac{1}{2\pi\bm{i}}\oint_{|q|<|z|} <da(z), \om(q)da(q)+\hat{\om}(q)d\hat{a}(q)>^*\frac{d q}{q},\nn\\
  &~& \frac{1}{2\pi\bm{i}}\oint_{|q|>|z|} <d\hat{a}(z), \om(q)da(q)+\hat{\om}(q)d\hat{a}(q)>^*\frac{d q}{q}\bigg)\nn\\
  &=&\bigg(\frac{1}{2\pi\bm{i}}\oint_{|q|<|z|}\frac{q}{1-q/z}[a'(z)(\om(q)+\hat{\om}(q))-(\om(q)a'(q)+\hat{\om}(q)
  \hat{a}'(q))]\frac{d q}{q},\nn\\
  &~~& -\frac{1}{2\pi\bm{i}}\oint_{|q|>|z|} \frac{z}{1-z/q}
  [\hat{a}'(z)(\om(q)+\hat{\om}(q))-(\om(q)a'(q)+\hat{\om}(q)\hat{a}'(q))]\frac{d q}{q} \bigg)\nn\\
  &=&\Big(\, za'(z)[\om(z)+\hat{\om}(z)]_{<0}-z[\om(z)a'(z)+\hat{\om}(z)\hat{a}'(z)]_{<0},\nn\\
  &~~&-z\hat{a}'(z)[\om(z)+\hat{\om}(z)]_{\geq 0}+z[\om(z)a'(z)+\hat{\om}(z)\hat{a}'(z)]_{\geq 0} \, \Big).
\eeqa This is just \eqref{AWZ2.16}. Moreover, it implies the map
$\eta$ being surjective.

Next we want to show that the map $\eta$ is invertible. Suppose that
\eqa
\bm{X}=(X(z),\hat{X}(z))&=&\left(\dsum_{i\leq N-1}X_iz^i,
\dsum_{j\geq -M}\hat{X}_jz^j\right)
\in \mathT_{\bm{a}} \mathM_{N,M},\nn\\
 \bm{\om}=(\om(z),\hat{\om}(z))&=&\left(\dsum_{i\geq -N+1}\om_iz^i,
 \dsum_{j\leq M}\hat{\om}_jz^j\right)
\in \mathT_{\bm{a}}^* \mathM_{N,M}\nn \eeqa satisfy
$\bm{X}=\eta(\bm{\om})$, namely,
\begin{align}\label{} \label{Xom1}
&X(z)=z a'(z)[\om(z)+\hat{\om}(z)]_{<0}-z[\om(z)a'(z)+\hat{\om}(z)\hat{a}'(z)]_{<0},\\
&\hat{X}(z)=-z\hat{a}'(z)[\om(z)+\hat{\om}(z)]_{\geq
0}+z[\om(z)a'(z)+\hat{\om}(z)\hat{a}'(z)]_{\geq 0}. \label{Xom2}
\end{align}
It follows that \beq
X(z)-\hat{X}(z)=z(a'(z)-\hat{a}'(z))(\hat{\om}(z)_{<0}-{\om}(z)_{\geq
0}).\eeq Hence we obtain
\begin{equation}\label{WZ2.19}
\om(z)_{\geq
0}=-\left(\dfrac{X(z)-\hat{X}(z)}{z(a'(z)-\hat{a}'(z))}\right)_{\geq
0},\quad
\hat{\om}(z)_{<0}=\left(\dfrac{X(z)-\hat{X}(z)}{z(a'(z)-\hat{a}'(z))}\right)_{<0}.
\end{equation}
Here we have used the property $\zeta'(z)\ne0$ in the definition of
$\cM_{N,M}$.

On the one hand, by using \eqref{Xom1} and \eqref{Xom2} one has
\begin{align}\label{XXh}
& X(z)_{>0}=(z a'(z)[\om(z)+\hat{\om}(z)]_{<0})_{>0}
 =(z a'(z)_{>0}[\om(z)+\hat{\om}(z)]_{<0})_{>0},\\
& \hat{X}(z)_{\leq 0}=-(z\hat{a}'(z)[\om(z)+\hat{\om}(z)]_{\geq
0})_{\leq 0} =-(z \hat{a}'(z)_{<0}[\om(z)+\hat{\om}(z)]_{\geq
0})_{\leq 0}. \label{XXh2}
\end{align}
Writing
\[
\om(z)+\hat{\om}(z)=\dsum_{k\in \mathbb{Z}}\tilde{\om}_k z^k,
\]
then the equations \eqref{XXh}--\eqref{XXh2} can be rewritten as
\begin{equation}
\left(
  \begin{array}{c}
  X_{N-1} \\
  \vdots \\
  X_1\\
  \end{array}
\right)=K_{N-1}  \left( \begin{array}{c}
  \tilde{\om}_{-1} \\
  \vdots \\
  \tilde{\om}_{-N+1} \\
  \end{array}
\right),\quad \left(
  \begin{array}{c}
  \hat{X}_{-M} \\
  \vdots \\
  \hat{X}_{0} \\
  \end{array}
\right)=\hat{K}_{M+1}  \left( \begin{array}{c}
  \tilde{\om}_{0} \\
  \vdots \\
  \tilde{\om}_{M} \\
  \end{array}
\right),\label{WZ2.21}
\end{equation}
where \beq \label{KN} K_{N-1}=\left(
       \begin{array}{ccccc}
       N &  &  &  &  \\
        (N-1)v_{N-1}  & N &  &  &  \\
        (N-2)v_{N-2} & (N-1)v_{N-1} & N & & \\
       \vdots & \ddots & \ddots &  \ddots &  \\
       2\,v_2 & \cdots &  (N-2)v_{N-2} & (N-1)v_{N-1} & N \\
       \end{array}
       \right)
\eeq and \beq \label{KM} \hat{K}_{M+1}=\left(
       \begin{array}{ccccc}
       M\hat{v}_{-M} &  &  &  &  \\
        (M-1)\hat{v}_{-M+1}  & M\hat{v}_{-M} &  &  &  \\
       \vdots & \ddots & \ddots & ~ &  \\
       \hat{v}_{-1} &  &  \ddots & M\hat{v}_{-M} \\
       0 & \hat{v}_{-1} & \cdots & (M-1)\hat{v}_{-M+1} & M\hat{v}_{-M} \\
       \end{array}
       \right).
\eeq Both matrices $K_{N-1}$ and $\hat{K}_{M+1}$ are nondegenerate,
hence $\tilde{\om}_{-N+1},\cdots, \tilde{\om}_M$ can be solved from
\eqref{XXh}--\eqref{XXh2}. This fact together with \eqref{WZ2.19}
leads to that $\bm{\om}=(\om(z),\hat{\om}(z))$ is uniquely
determined by $\bm{X}=(X(z),\hat{X}(z))$. Therefore the lemma is
proved.
\end{proof}

With the help of the bijection $\eta$, the bilinear form
\eqref{WZ2.14} on the cotangent space induces a symmetric bilinear
form on $\mathT_{\bm{a}}\mathM_{N,M}$ as \beq\label{WZ2.24}
  <\p_1,\p_2>:=\la
  \eta^{-1}(\p_1),\p_2\ra=<\eta^{-1}(\p_1),\eta^{-1}(\p_2)>^*.
  \eeq
Recall the property \eqref{M2} for the functions
\begin{align}\label{zel}
  \zeta(z)=a(z)-\hat{a}(z),\quad  l(z)=a(z)_{>0}+\hat{a}(z)_{\leq
  0}.
\end{align}
One has the following lemma.
\begin{lem} The
bilinear form \eqref{WZ2.24} can be represented as follows: for any
tangent vectors $\p_1, \p_2\in\mathT_{\bm{a}}\mathM_{N,M}$,
\begin{align}\label{}
  <\p_1,\p_2>=&-\frac{1}{2\pi\bm{i}}\oint_{|z|=1}
  \frac{\p_1\zeta(z)\cdot\p_2\zeta(z)}{z^2\zeta'(z)}d z
  \nn\\
  &\quad -\res_{z=\infty}\frac{\p_1l(z)\cdot\p_2l(z)}{z^2l'(z)}d z
  -\res_{z=0}\frac{\p_1l(z)\cdot\p_2l(z)}{z^2 l'(z)}d z.
\label{WZ2.25}
\end{align}
  \end{lem}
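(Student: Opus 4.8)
The plan is as follows. Because $\langle\p_1,\p_2\rangle$ is \emph{defined} to be $\langle\eta^{-1}(\p_1),\p_2\rangle$, it suffices to prove that for every covector $\bm\omega=(\omega,\hat\omega)\in\mathT^*_{\bm a}\mathM_{N,M}$ and every vector $\bm X=\p_2\in\mathT_{\bm a}\mathM_{N,M}$ the pairing $\langle\bm\omega,\bm X\rangle$ of \eqref{WZ2.6} equals the right-hand side of \eqref{WZ2.25} in which $\p_1$ has been replaced by $\eta(\bm\omega)$; the lemma then follows on setting $\bm\omega=\eta^{-1}(\p_1)$. I would separate the computation into a ``$\zeta$-part'' and an ``$l$-part''. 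For the preliminaries, I would first record that any $\bm X=(X,\hat X)$ is recovered from $\p_X\zeta=X-\hat X$ and $\p_X l=X_{>0}+\hat X_{\leq0}$ through $X=\p_X l+(\p_X\zeta)_{\leq0}$, $\hat X=\p_X l-(\p_X\zeta)_{>0}$, so that $\omega X+\hat\omega\hat X=(\omega+\hat\omega)\,\p_X l+\omega\,(\p_X\zeta)_{\leq0}-\hat\omega\,(\p_X\zeta)_{>0}$; and second, splitting $\omega=\omega_{\geq0}+\omega_{<0}$, $\hat\omega=\hat\omega_{\geq0}+\hat\omega_{<0}$, the algebraic identity
\[
\omega\,(\p_X\zeta)_{\leq0}-\hat\omega\,(\p_X\zeta)_{>0}=(\omega_{\geq0}-\hat\omega_{<0})\,\p_X\zeta+[\omega+\hat\omega]_{<0}\,(\p_X\zeta)_{\leq0}-[\omega+\hat\omega]_{\geq0}\,(\p_X\zeta)_{>0}.
\]

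Next I would plug these into $\langle\bm\omega,\bm X\rangle=\frac1{2\pi\bm i}\oint_{|z|=1}[\omega X+\hat\omega\hat X]\frac{dz}{z}$. The last two terms of the displayed identity have, respectively, only strictly negative and only strictly positive powers of $z$, hence drop out of $\frac1{2\pi\bm i}\oint\cdot\frac{dz}{z}$. For the remaining term $(\omega_{\geq0}-\hat\omega_{<0})\,\p_X\zeta$, I apply \eqref{WZ2.19} to $\bm X=\eta(\bm\omega)$: since there $X-\hat X=\p_{\eta(\bm\omega)}\zeta$ and $a'-\hat a'=\zeta'$, it yields $\omega_{\geq0}-\hat\omega_{<0}=-\,\p_{\eta(\bm\omega)}\zeta/(z\,\zeta')$, whence this term integrates to $-\frac1{2\pi\bm i}\oint_{|z|=1}\frac{\p_{\eta(\bm\omega)}\zeta\cdot\p_X\zeta}{z^2\zeta'}\,dz$ --- exactly the $\zeta$-term of \eqref{WZ2.25}. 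So the identity to be proved collapses to
\[
\frac1{2\pi\bm i}\oint_{|z|=1}(\omega+\hat\omega)\,\p_X l\,\frac{dz}{z}=-\big(\res_{z=\infty}+\res_{z=0}\big)\frac{\p_{\eta(\bm\omega)}l\cdot\p_X l}{z^2 l'}\,dz.
\]

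This last equality is where the real work lies, and I expect it to be the main obstacle. The key input is the $l$-analogue of the formula above: from \eqref{XXh}--\eqref{XXh2} (after replacing $a',\hat a'$ by $l'$, which is legitimate since $a_{>0}=l_{>0}$, $\hat a_{\leq0}=l_{\leq0}$ and the correction terms land in the projected-away part) one gets $\p_{\eta(\bm\omega)}l=(z\,l'\,[\omega+\hat\omega]_{<0})_{>0}-(z\,l'\,[\omega+\hat\omega]_{\geq0})_{\leq0}$. Thus $\p_{\eta(\bm\omega)}l$, like the integral on the left above, depends on $\omega+\hat\omega$ only through its coefficients in degrees $-(N-1),\dots,M$, via the linear operator $\Phi\colon g\mapsto(z\,l'\,g_{<0})_{>0}-(z\,l'\,g_{\geq0})_{\leq0}$, whose restriction to those degrees corresponds to the invertible system \eqref{WZ2.21}. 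I would therefore replace $\omega+\hat\omega$ by the Laurent polynomial $g$ of range $[-(N-1),M]$ with $\Phi(g)=\p_{\eta(\bm\omega)}l$, check the short identity $\frac1{2\pi\bm i}\oint_{|z|=1}g'\,\Phi(g)\,\frac{dz}{z}=\frac1{2\pi\bm i}\oint_{|z|=1}(g_{<0}g'_{<0}-g_{\geq0}g'_{\geq0})\,l'\,dz$ (symmetric in $g,g'$), and then pass through $\Phi^{-1}$, using that $\p_{\eta(\bm\omega)}l$, $\p_X l$ and $l'$ are Laurent polynomials --- so that $\frac{\p_{\eta(\bm\omega)}l\cdot\p_X l}{z^2 l'}\,dz$ is a rational differential and $-(\res_\infty+\res_0)$ of it equals the sum of its residues at the zeros of $l'$ --- to reach the right-hand side. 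As a consistency check on the bookkeeping, one notes that the right-hand side of \eqref{WZ2.25} is manifestly symmetric in $\p_1$ and $\p_2$, as it must be.
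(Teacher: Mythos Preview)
Your decomposition into a $\zeta$-part and an $l$-part, and your treatment of the $\zeta$-part via \eqref{WZ2.19}, are exactly what the paper does (its integrals $I_1$ and $I_2+I_3$ match your two pieces after one observes that only $(\p_2 a)_{>0}=(\p_2 l)_{>0}$ and $(\p_2\hat a)_{\le0}=(\p_2 l)_{\le0}$ survive the contour integral). Your formula $\p_{\eta(\bm\omega)}l=\Phi(\omega+\hat\omega)$ and the ``short identity'' are also correct.

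The gap is in the very last step. Passing to residues at the zeros of $l'$ is a true statement of the residue theorem, but it does not connect to your short identity: you would need to show that $\sum_{l'(\alpha)=0}\frac{P(\alpha)Q(\alpha)}{\alpha^2 l''(\alpha)}$ equals $\frac{1}{2\pi\bm i}\oint(g_{<0}g'_{<0}-g_{\ge0}g'_{\ge0})\,l'\,dz$, and nothing you have written gives this. ``Passing through $\Phi^{-1}$'' is the right impulse, but it has to be done \emph{explicitly}. The paper does exactly that: it observes that the lower-triangular Toeplitz matrices $K_{N-1}$ and $\hat K_{M+1}$ in \eqref{WZ2.21} are $\bigl(z^{-(N-1)}l'(z)\bigr)\big|_{z^{-1}\to\Lambda(N-1)}$ and $-\bigl(z^{M+1}l'(z)\bigr)\big|_{z\to\Lambda(M+1)}$, so their inverses are read off from the Laurent expansions of $1/l'(z)$ at $\infty$ and at $0$; substituting those expansions turns $I_2$ and $I_3$ directly into $-\res_{z=\infty}$ and $-\res_{z=0}$ of $\frac{\p_1 l\cdot\p_2 l}{z^2 l'}\,dz$.

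If you want to stay closer to your own formulation, there is an equivalent way to close the gap without naming the matrices: write $P_{>0}=zl'g_{<0}-(zl'g_{<0})_{\le0}$, expand $P_{>0}Q_{>0}/(z^2l')$, and check by degree-counting near $\infty$ that only the term $l'g_{<0}g'_{<0}$ contributes to $-\res_\infty$; the same works at $0$ with $P_{\le0}=-zl'g_{\ge0}+(zl'g_{\ge0})_{>0}$, and the cross terms $P_{>0}Q_{\le0}/(z^2l')$ have vanishing residues at both $0$ and $\infty$. This recovers precisely your short identity and completes the proof. Either route is essentially the explicit inversion of $\Phi$ that you announced but did not carry out.
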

 \begin{proof} Assume $\eta^{-1}(\p_1)=(\om(z),\hat{\om}(z))$, then we have
\eqa
  \la\eta^{-1}(\p_1),\p_2\ra
  &=&\frac{1}{2\pi\bm{i}}\oint_{|z|=1} [\om(z)\p_2 a(z)
  +\hat{\om}(z)\p_2\hat{a}(z)\big]\frac{d z}{z}\nn\\
  &=&\frac{1}{2\pi\bm{i}}\oint_{|z|=1} [(\om(z)_{\geq 0}-\hat{\om}(z)_{<0})
  (\p_2a(z)-\p_2\hat{a}(z))]\frac{d z}{z}\nn\\
  &&~+~\frac{1}{2\pi\bm{i}}\oint_{|z|=1}  [\om(z)+\hat{\om}(z)]_{<0}\p_2a(z)\frac{d z}{z}\nn\\
  &&~+~\frac{1}{2\pi\bm{i}}\oint_{|z|=1}  [\om(z)+\hat{\om}(z)]_{\geq 0}\p_2\hat{a}(z)\frac{d z}{z}.
  \label{WZ2.26}
\eeqa  The three integrals on the right hand side are denoted as
$I_1, I_2$ and $I_3$, respectively. Let us calculate them
separately.

Firstly, by using \eqref{zel} and $\p_1=(\p_1 a(z),\p_1
\hat{a}(z))$, the formulae \eqref{WZ2.19} read \beq
  \om(z)_{\geq 0}=-\left(\frac{\p_1\zeta(z)}{z \zeta'(z)}\right)_{\geq 0},\quad
  \hat{\om}(z)_{<0}=\left(\frac{\p_1\zeta(z)}{z\zeta'(z)}\right)_{<0}. \nn
\eeq Hence we have \beq I_1=-\frac{1}{2\pi\bm{i}}\oint_{|z|=1}
  \frac{\p_1\zeta(z)\cdot\p_2\zeta(z)}{z^2\zeta'(z)}dz.\nn\eeq

Secondly, recall
\[
l'(z)=N z^{N-1}+\sum_{i=1}^{N-1}i\,v_i z^{i-1}
-\sum_{j=1}^{M}j\,\hat{v}_{-j}z^{-j-1}.
\]
For simplicity, we denote $\Lambda(n)=(\delta_{i,j+1})_{n\times n}$ for positive
integer $n$ and $\Lambda(0)=0$. Observe that the matrix \eqref{KN}
is just \beq K_{N-1}=\left(\frac{l'(z)}{z^{N-1}}\right)_{z^{-1}\to
\Lambda(N-1)}.\eeq Let's take the following expansion near $z=\infty$
\beq \dfrac{1}{l'(z)} =\dfrac{1}{z^{N-1}}(f_0+f_{-1} z^{-1}+f_{-2}
z^{-2}+\cdots), \nn \eeq and set $f_i=0$ for $i>0$,  then we have
\beq K_{N-1}^{-1}=\left(\frac{z^{N-1}}{l'(z)}\right)_{z^{-1}\to
\Lambda(N-1)}=(f_{-i+j})_{(N-1)\times(N-1)}. \eeq Thus by using the
first relation in \eqref{WZ2.21}, we know \eqa
 I_2&=&(\p_1 v_1, \dots, \p_1 v_{N-1})K_{N-1}^{-1}\left(
                    \begin{array}{c}
                      \p_2 v_{N-1} \\
                      \vdots \\
                      \p_2 v_1 \\
                    \end{array}
                    \right)\nn\\
 &=&\sum_{i,j=1}^{N-1} \p_1 v_i\cdot f_{-i+j}\cdot \p_2 v_{N-j}\nn\\
 &=& -\res_{z=\infty}\left( \sum_{i,j=1}^{N-1}\p_1v_i \, z^{i}\,\cdot f_{-i+j}z^{-i+j} \cdot \p_2v_{N-j} z^{N-j}
  \cdot z^{-N-1}\right) dz \nn\\
 &=& -\res_{z=\infty}\frac{\p_1 l(z)\cdot\p_2 l(z)}{z^2\, l'(z)} dz.\nn
\eeqa

Thirdly, in the same way as above one has  \beq
\hat{K}_{M+1}=-\left(l'(z)z^{M+1}\right)_{z \to \Lambda(M+1)} \eeq
 and  \beq
\hat{K}_{M+1}^{-1}=-\left(\frac{1}{l'(z)z^{M+1}}\right)_{z\to
\Lambda(M+1)}=-(g_{i-j})_{(M+1)\times(M+1)}, \eeq where $g_m$ are
given by the following expansion near $z=0$: \beq \dfrac{1}{l'(z)}
={z^{M+1}}(g_0+g_{1} z+g_{2} z^{2}+\cdots) , \nn \eeq and $g_m=0$
for $m<0$. The second equation in \eqref{WZ2.21} leads to \eqa
 I_3&=&
(\p_1 \hat{v}_0, \dots, \p_1 \hat{v}_{-M})\hat{K}_{M+1}^{-1}\left(
                    \begin{array}{c}
                      \p_2 \hat{v}_{-M} \\
                      \vdots \\
                      \p_2 \hat{v}_0 \\
                    \end{array}
                    \right)\nn\\
 &=&-\sum_{i,j=0}^{M} \p_1 \hat{v}_{-i}\cdot g_{i-j}\cdot \p_2 \hat{v}_{j-M}\nn\\
 &=& -\res_{z=0}\left( \sum_{i,j=0}^{M} \p_1 \hat{v}_{-i} \, z^{-i}\,\cdot g_{i-j}z^{i-j} \cdot
 \p_2\hat{v}_{j-M} z^{j-M} \cdot z^{M-1}\right) dz \nn\\
 &=& -\res_{z=0}\frac{\p_1 l(z)\cdot\p_2 l(z)}{z^2\, l'(z)} dz.\nn
\eeqa Thus we complete the proof of this lemma.
\end{proof}

Next we want to choose a system of ``flat'' coordinates for the
bilinear form \eqref{WZ2.25}. According to the definition of
$\cM_{N,M}$, one can consider the inverse function of $\zeta(z)$
such that \beq z=z(\zeta):\Gamma \longrightarrow S^1, \eeq and this
function is holomorphic on a neighborhood of the curve $\Gamma$
surrounding $\zeta=0$. We assume the following Riemann-Hilbert
factorization on the $\zeta$-plane \beq z(\zeta) = f_0^{-1}(\zeta)
f_\infty(\zeta)\quad \mbox{for} \quad \zeta \in \Gamma, \eeq where
the functions $f_0(\zeta)$ and ${f_\infty(\zeta)}/{\zeta}$ are
holomorphic inside and outside the curve $\Gamma$, respectively. Let
us fix this factorization by normalizing
$$
f_\infty(\zeta) = \zeta +O(1), \quad |\zeta|\to\infty.
$$
Now we have Taylor expansions with coefficients $t^i$:
\begin{align}\label{}
& \log f_0(\zeta) =-\dsum_{i\geq 0} t^i \zeta^i , \quad  |\zeta| \to
0; \\
& \log\frac{f_\infty(\zeta)}{\zeta}=\dsum_{i\geq
1}\frac{t^{-i}}{\zeta^i},\quad |\zeta|\to\infty. \label{WZ2.34}
\end{align}
In other words, \beq
t^i=\frac{1}{2\pi\bm{i}}\oint_{\Gamma}\zeta^{-i-1}\log\frac{z(\zeta)}{\zeta}
d\zeta. \label{WZ2.35}\eeq

On the other hand, recall $l(z)=z^{N}+\sum_{i=1}^{N-1}v_{i}
z^{i}+\sum_{j=0}^{M}\hat{v}_{-j} z^{-j}$ and denote \eqa
\chi(z):=l(z)^{{1}/{N}}\ \mbox{ near }\ \infty, \quad
\hat{\chi}(z):=l(z)^{{1}/{M}}\ \mbox{ near }\ 0. \eeqa Let $z(\chi)$
and $z(\hat{\chi}) $ be the inverse functions of $\chi(z)$ and
$\hat{\chi}(z)$ respectively. Their logarithms can be expanded as
follows:
\begin{align}\label{}
&\log z(\chi)=\log\chi-\frac{1}{N}({h^1}\chi^{-1}+\cdots
  +{h^{N-1}}\chi^{-N+1})+O(\chi^{-N}), \quad
  |\chi|\to\infty; \label{WZ2.38} \\
&\log
z(\hat{\chi})=-\log\hat{\chi}+\frac{1}{M}(\hat{h}^0+{\hat{h}^1}\hat{\chi}^{-1}+\cdots
  +{\hat{h}^{M}}\hat{\chi}^{-M})+O(\chi^{-M-1}),\quad |\hat\chi|\to\infty, \label{WZ2.39}
\end{align}
where in particular $\hat{h}^{0}=\log \hat{v}_{-M}$.

Observe that the variables \beq
 \bm{t}\cup\bm{h}\cup\hat{\bm{h}}=\{t^i\mid i\in\mathZ\}\cup\{h^j\mid j=1,\ldots,N-1\}
 \cup\{\hat{h}^k\mid k=0,\ldots,M\},\label{WZ2.40}
\eeq determine $\zeta(z)$ and $l(z)=\chi(z)^{N}=\hat{\chi}(z)^{M}$
uniquely, hence also \beq a(z)=l(z)+\zeta(z)_{\leq 0},\quad
\hat{a}(z)=l(z)-\zeta(z)_{>0}.\label{WZ2.41}\eeq It means that
\eqref{WZ2.40} is indeed a system of coordinates on the manifold
$\mathM_{N,M}$.

\begin{prop} \label{thm-inpd}
The above coordinates $\bm{t}\cup\bm{h}\cup\hat{\bm{h}}$ can be
defined equivalently by \eqref{flatt}--\eqref{flathh}.  The bilinear
form $<~,~>$ in \eqref{WZ2.25} satisfies
 \begin{align}
   &<\frac{\p}{\p t^{i_1}},\frac{\p}{\p t^{i_2}}>=-\delta_{i_1+i_2,-1},
   \quad  i_1,i_2\in\mathZ \label{inpdflat01};\\
   &<\frac{\p}{\p h^{j_1}},\frac{\p}{\p h^{j_2}}>=\frac{1}{N}\delta_{j_1+j_2,N},
   \quad j_1,j_2\in\{1,2,\ldots,N-1\}; \label{inpdflat02}\\
   &<\frac{\p}{\p \hat{h}^{k_1}},\frac{\p}{\p \hat{h}^{k_2}}>=\frac{1}{M}\delta_{k_1+k_2,M},
   \quad k_1,k_2\in\{0,1,\dots,M\} \label{inpdflat03}
 \end{align}
and any other pairing between these vector fields vanishes.
Consequently, the bilinear form $<~,~>$ is a nondegnerate flat
metric on $\mathT_{\bm{a}}\mathM_{N,M}$ with flat coordinates
$\bm{t}\cup\bm{h}\cup\hat{\bm{h}}$.
\end{prop}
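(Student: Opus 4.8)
The plan is to establish the two asserted descriptions of the coordinates and then compute the pairings against the explicit formula \eqref{WZ2.25} in three independent blocks, since the metric \eqref{WZ2.25} splits as a sum of a $\zeta$-contribution on $|z|=1$ and two $l$-contributions at $\infty$ and $0$, while the coordinates $\bm t$ depend only on $\zeta$ and $\bm h\cup\hat{\bm h}$ only on $l$.

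\emph{Step 1: equivalence of the two definitions of the coordinates.} For the $t^i$, I would start from \eqref{WZ2.35} and change variables by $\zeta=\zeta(z)$, which sends $\Gamma$ to $S^1$; under this substitution $d\zeta=\zeta'(z)\,dz$ and $\log(z(\zeta)/\zeta)$ becomes $\log(z/\zeta(z))$, and after an integration by parts (to move the derivative off $\log$) one recovers \eqref{flatt}; the $i=0$ case is the constant term and is handled separately, matching the $\log(z/\zeta(z))$ integrand there. For the $h^j$ and $\hat h^k$, I would observe that \eqref{WZ2.38} is exactly the statement that $h^j$ is (up to the factor $-N/j$) the residue at $\infty$ of $l(z)^{j/N}\,dz/z$: expand $l(z)^{j/N}=\chi(z)^j$ in powers of $\chi^{-1}$, use $\chi=\chi(z)$ as the local coordinate near $\infty$, and read off the coefficient of $\chi^{-j}$ in $\log z(\chi)$; this is a standard Lagrange-type inversion identity. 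The same computation near $z=0$ with $\hat\chi$ gives \eqref{flathh}, the term $\hat h^0=\log\hat v_{-M}$ being the leading coefficient of $\log z(\hat\chi)$ as in \eqref{WZ2.39}.

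\emph{Step 2: the $\zeta$-block.} Using $\partial/\partial t^i$ as a tangent vector, I compute $(\partial/\partial t^i)\zeta(z)$ by differentiating \eqref{WZ2.34}--\eqref{WZ2.35} implicitly; the point is that in the $\zeta$-coordinate $\partial_{t^i}$ acts very simply on $\log f_0$ and $\log f_\infty$, namely by shifting a single Taylor coefficient, so $\partial_{t^i}\zeta(z)$ has a clean closed form. Substituting into the first integral of \eqref{WZ2.25} and changing variables to $\zeta$ on $S^1$ turns it into a residue computation around $\zeta=0$ of $\zeta^{i_1}\cdot\zeta^{i_2}\cdot\zeta^{-1}$-type expressions (the Jacobian factors $z^2\zeta'(z)$ cancel against the chain-rule factors from $\partial_{t^i}\zeta$), yielding precisely $-\delta_{i_1+i_2,-1}$. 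One must be slightly careful with the splitting of $\log f_0$ versus $\log(f_\infty/\zeta)$ (i.e. the $i\ge 0$ versus $i<0$ ranges) and with the $t^0$ case, but the residue bookkeeping is routine.

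\emph{Step 3: the $l$-blocks and vanishing of cross terms.} Because the $h^j,\hat h^k$ depend only on $l(z)$ while $t^i$ depends only on $\zeta(z)$, and \eqref{WZ2.25} contains no mixed $\zeta$--$l$ term, the pairings $\langle\partial/\partial t^i,\partial/\partial h^j\rangle$ and $\langle\partial/\partial t^i,\partial/\partial\hat h^k\rangle$ vanish automatically — here I should note that $\partial_{t^i}l(z)=0$ and $\partial_{h^j}\zeta(z)=0$, $\partial_{\hat h^k}\zeta(z)=0$, which follows from \eqref{WZ2.41} and the fact that $\zeta$ and $l$ are independent data. For \eqref{inpdflat02} I differentiate \eqref{WZ2.38}: $\partial_{h^j}\log z(\chi)=-\frac{1}{N}\chi^{-j}+O(\chi^{-N})$, so $\partial_{h^j}l(z)=N z^N\chi^{-j}\cdot(\text{correction})$; more cleanly, $\partial_{h^j}\chi(z)=-\frac1N\chi^{1-j}+\dots$ near $\infty$. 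Plugging into the residue at $\infty$ of $\partial_{h^{j_1}}l\cdot\partial_{h^{j_2}}l/(z^2 l')\,dz$ and changing the variable to $\chi$ (so $l=\chi^N$, $l'\,dz=N\chi^{N-1}d\chi$, $dz/z=d\log z(\chi)$) reduces everything to $-\res_{\chi=\infty}\chi^{-j_1-j_2}\cdot\frac1N\,d\chi$, giving $\frac1N\delta_{j_1+j_2,N}$; the minus sign in \eqref{WZ2.25} flips against the minus from $\res_{\chi=\infty}$. The $\hat h$-block is identical near $z=0$ with $l=\hat\chi^M$ and $\hat h^0$ entering through the leading term, giving \eqref{inpdflat03}; one checks in particular that the pairings involving $\hat h^0$ land in the $\delta_{k_1+k_2,M}$ pattern (so $\langle e,e\rangle=0$, consistent with $d=1$), and that $h$-$\hat h$ cross pairings vanish because the $h^j$-variations are supported at $\infty$ and the $\hat h^k$-variations at $0$.

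\emph{Main obstacle.} The genuinely delicate part is Step 1 together with the careful tracking of which projection (``$\ge 0$'' vs ``$<0$'', or ``$>0$'' vs ``$\le 0$'') each coordinate variation lives in, since $\partial_{t^i}\zeta$, $\partial_{h^j}l$, $\partial_{\hat h^k}l$ are defined only through implicit differentiation of logarithms of inverse functions, and one must be sure the resulting Laurent series lie in the correct tangent space \eqref{TaM} and that the residue/contour manipulations (changes of variable $z\leftrightarrow\zeta$, $z\leftrightarrow\chi$, $z\leftrightarrow\hat\chi$, and integration by parts) are legitimate on the relevant annuli; the algebra after that is a mechanical residue calculation, and nondegeneracy and flatness of $\langle\,,\,\rangle$ are then immediate since the Gram matrix in the coordinates $\bm t\cup\bm h\cup\hat{\bm h}$ is constant with constant nonzero anti-diagonal blocks.
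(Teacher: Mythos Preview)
Your proposal is correct and follows essentially the same route as the paper: both arguments compute the partial derivatives $\partial_{t^i}\zeta(z)$, $\partial_{h^j}l(z)$, $\partial_{\hat h^k}l(z)$ by implicit differentiation of the inverse functions (the paper calls this the ``thermodynamical identity'' and isolates it as a separate lemma, obtaining the closed forms $\partial_{t^i}\zeta=-z\zeta^i\zeta'$, $\partial_{h^j}l=(z\chi^{N-j-1}\chi')_{>0}$, $\partial_{\hat h^k}l=-(z\hat\chi^{M-k-1}\hat\chi')_{\le0}$), then substitute into \eqref{WZ2.25} and reduce each block to an elementary residue after the change of variable $z\mapsto\zeta$, $z\mapsto\chi$, or $z\mapsto\hat\chi$. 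The only suggestion is to state those derivative formulas explicitly up front, as the paper does, rather than leaving them as ``a clean closed form''; once they are written down, your residue bookkeeping in Steps~2--3 becomes a one-line computation and the projection issues you flag in the ``main obstacle'' disappear.
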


The proof of this proposition is mainly based on following lemma.
\begin{lem}\label{thm-dzhh}
The coordinates $\bm{t}\cup\bm{h}\cup\hat{\bm{h}}$ satisfy
\begin{align}\label{}
& \frac{\p \zeta(z)}{\p t^i}=-z \zeta^i(z)\zeta'(z),\label{WZ2.43} \\
& \frac{\p l(z)}{\p h^j}=  (z \chi(z)^{N-j-1} \chi'(z))_{>0}, \label{lh} \\
& \frac{\p l(z)}{\p\hat{h}^k}= -
(z\hat{\chi}(z)^{M-k-1}\hat{\chi}'(z))_{\le0}, \label{lhh}
\end{align}
and that the following derivatives vanish:  \beq  \dfrac{\p
\zeta(z)}{\p h^j}= \dfrac{\p \zeta(z)}{\p \hat{h}^k}= \dfrac{\p
l(z)}{\p t^i}=\left(\dfrac{\p l(z)}{\p h^j}\right)_{\leq 0}=
\left(\dfrac{\p l(z)}{\p \hat{h}^k}\right)_{>0}=0.\label{WZ2.42}\eeq
\end{lem}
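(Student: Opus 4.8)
The plan is to prove Lemma~\ref{thm-dzhh} by differentiating the implicit relations that define the coordinates $\bm{t}\cup\bm{h}\cup\hat{\bm{h}}$, using the functional independence of $\zeta(z)$ and $l(z)$ as recorded in \eqref{WZ2.41}. Since the variables $t^i$ enter only through $\zeta$, and $h^j,\hat{h}^k$ enter only through $l$, the mixed derivatives in \eqref{WZ2.42} ought to vanish almost for free once we set up the picture correctly; the substantive content is the three explicit formulas \eqref{WZ2.43}--\eqref{lhh}. I would treat the $\zeta$-block and the $l$-block separately.

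For the $\zeta$-block, recall from \eqref{WZ2.34} that the $t^i$ are the Laurent coefficients of $\log(z(\zeta)/\zeta)$ in the $\zeta$-variable, equivalently $\log f_0(\zeta)=-\sum_{i\ge0}t^i\zeta^i$ and $\log(f_\infty(\zeta)/\zeta)=\sum_{i\ge1}t^{-i}\zeta^{-i}$, where $z(\zeta)=f_0^{-1}(\zeta)f_\infty(\zeta)$ is the Riemann--Hilbert factorization. Differentiating $\log z(\zeta)=\log f_\infty(\zeta)-\log f_0(\zeta)$ with respect to $t^i$ at fixed $\zeta\in\Gamma$ gives $\partial_{t^i}\log z(\zeta)=-\zeta^i$ for every $i\in\mathbb{Z}$ (splitting into the $i\ge0$ and $i<0$ cases, with the $i=0$ case needing the normalization of $f_\infty$). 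Now change viewpoint: at a fixed point $z$ on $S^1$, regard $\zeta=\zeta(z)$ as the quantity to differentiate. From $0=\partial_{t^i}\big(\log z(\zeta(z))\big)=\frac{1}{z}\cdot\frac{\partial z}{\partial\zeta}\big|\cdot(\text{correction})$ — more cleanly, differentiate the identity $z = z(\zeta(z))$ holding $z$ fixed: $0 = (\partial_{t^i}z)(\zeta) + z'(\zeta)\,\partial_{t^i}\zeta(z)$, and combine with $\partial_{t^i}\log z(\zeta)=-\zeta^i$, i.e. $\partial_{t^i}z(\zeta) = -z\,\zeta^i$, together with $z'(\zeta)=1/\zeta'(z)$, to get $\partial_{t^i}\zeta(z) = z\,\zeta^i\zeta'(z)\cdot$(sign). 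A careful bookkeeping of which variable is held fixed yields \eqref{WZ2.43}; this sign/chain-rule juggling between "$z$ fixed" and "$\zeta$ fixed" is the step I expect to be the most error-prone, so I would write it out with explicit total-derivative notation. That $\partial_{h^j}\zeta=\partial_{\hat{h}^k}\zeta=0$ then follows since $\zeta$ is determined by $\bm{t}$ alone via \eqref{WZ2.41} and the definition of $t^i$, which involves only $\zeta$.

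For the $l$-block the argument is parallel but on the two ends $z=\infty$ and $z=0$. Using $\chi(z)=l(z)^{1/N}$ near $\infty$ and the expansion \eqref{WZ2.38}, the $h^j$ are, up to the factor $1/N$, the first $N-1$ Laurent coefficients of $\log z(\chi)-\log\chi$ as a series in $\chi^{-1}$; differentiating $\log z(\chi)$ with respect to $h^j$ at fixed $\chi$ gives $-\frac1N\chi^{-j}$, hence $\partial_{h^j}z(\chi)=-\frac{z}{N}\chi^{-j}$, and then the chain rule $z=z(\chi(z))$ at fixed $z$ gives $\partial_{h^j}\chi(z)=\frac{z}{N}\chi^{-j}\chi'(z)$ modulo terms from the $O(\chi^{-N})$ tail, i.e. modulo the freedom in $l$ coming from $\hat{h}$ and from the normalization $l(z)=z^N+\dots$. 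Converting $\partial_{h^j}\chi$ to $\partial_{h^j}l = N\chi^{N-1}\partial_{h^j}\chi$ produces $z\,\chi^{N-j}\chi'(z) = z\,\chi^{N-j-1}\cdot N\chi^{N-1}\chi'/(N\chi^{N-1})$; matching with the stated formula $\partial_{h^j}l=(z\chi^{N-j-1}\chi')_{>0}$ amounts to checking that $\partial_{h^j}l$ must be a polynomial in $z$ with no constant or negative part — which is exactly the content of $(\partial_{h^j}l)_{\le0}=0$ in \eqref{WZ2.42}, itself forced because varying $h^j$ does not change the principal parts of $l$ at $0$ (those are governed by $\hat{h}^k$) nor the leading coefficient. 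Taking the "$>0$" projection is then the unique way to reconcile the two. The case of $\hat{h}^k$ at $z=0$ is entirely analogous with $\hat\chi$, the expansion \eqref{WZ2.39}, and the "$\le0$" projection in place of "$>0$", the extra $\hat{h}^0=\log\hat v_{-M}$ handled by the constant term of \eqref{WZ2.39}; and $\partial_{h^j}\hat{h}^k$-type vanishing statements follow since $h$ controls only the expansion at $\infty$ and $\hat h$ only the expansion at $0$. I would organize the writeup as: (1) the three chain-rule computations for $z(\zeta),z(\chi),z(\hat\chi)$; (2) conversion to derivatives of $\zeta$ and $l$ at fixed $z$; (3) the projection/support arguments pinning down the "$>0$" and "$\le0$" truncations; (4) the vanishing of all cross-derivatives, which is immediate from the block structure $\zeta\leftrightarrow\bm t$, $l\leftrightarrow\bm h\cup\hat{\bm h}$.
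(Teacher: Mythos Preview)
Your approach is correct and essentially identical to the paper's: the paper differentiates $\log(z(\zeta)/\zeta)$, $\log z(\chi)$, $\log z(\hat\chi)$ with respect to the respective coordinates, invokes the chain rule (which it calls the ``thermodynamical identity'') to pass from derivatives at fixed $\zeta,\chi,\hat\chi$ to derivatives at fixed $z$, and then reads off the projections from the block structure of $l$. One caveat: your computation $\partial_{t^i}\log z(\zeta)=-\zeta^i$ has the wrong sign (from $\log z=\log f_\infty-\log f_0$ and $\log f_0=-\sum_{i\ge0}t^i\zeta^i$ one gets $+\zeta^i$), which you already flag as the step to write out carefully; fixing it gives exactly \eqref{WZ2.43}.
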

\begin{proof}
By using
\begin{align}\label{}
\zeta^i=\frac{\p}{\p
t^i}\log\frac{z(\zeta)}{\zeta}=\frac{1}{z(\zeta)}\frac{\p
z(\zeta)}{\p t^i}
\end{align}
and the ``thermodynamical identity"
\begin{align}\label{}
\left.\frac{\p \zeta(z)}{\p
t^i}\right|_{z=z(\zeta)}+\left.\zeta'(z)\right|_{z=z(\zeta)}
\frac{\p z(\zeta)}{\p t^i}=0,
\end{align}
we obtain \eqref{WZ2.43}. Similarly, by using
\begin{align*}
& \dfrac{\p z(\chi)}{\p
h^j}=-z(\chi)\left(\dfrac{1}{N}\chi^{-j}+O(\chi^{-N})\right), \quad |\chi|\to\infty; \\
& \dfrac{\p z(\hat{\chi})}{\p
\hat{h}^k}=z(\hat{\chi})\left(\dfrac{1}{M}\hat{\chi}^{-k}
+O(\hat{\chi}^{-M-1})\right), \quad |\hat\chi|\to\infty
\end{align*}
and the corresponding ``thermodynamical identities", we derive
\begin{align}\label{}
&\frac{\p \chi(z)}{\p h^j}=z\chi'(z)
\left(\dfrac{1}{N}\chi(z)^{-j}+O(z^{-N})\right), \quad |z|\to\infty;
 \label{WZ2.45} \\
 & \dfrac{\p \hat{\chi}(z)}{\p \hat{h}^k}=-z\hat{\chi}'(z) \left(\dfrac{1}{M}\hat{\chi}(z)^{-k}+O(z^{M+1})\right),
  \quad |z|\to 0.
\end{align}
Thus the equalities \eqref{lh}--\eqref{lhh} are obtained. The
derivatives \eqref{WZ2.42} being zero is trivial. Therefore the
lemma is proved.
\end{proof}

\begin{prfof}{Proposition~\ref{thm-inpd}}
The first assertion follows from a short calculation. For the second
assertion, with $i_1,~i_2\in\mathbb{Z}$ we have
 \eqa <\frac{\p}{\p t^{i_1}},\frac{\p}{\p
t^{i_2}}>&=&-\frac{1}{2\pi\bm{i}}\oint_{|z|=1}
  \dfrac{\frac{\p \zeta(z)}{\p t^{i_1}}\cdot\frac{\p \zeta(z)}{\p t^{i_2}}}{z^2\zeta'(z)}dz\nn\\
  &=& -\frac{1}{2\pi\bm{i}}\oint_{|z|=1} \zeta^{i_1+i_2}(z)\zeta'(z) dz\nn\\
  &=& -\frac{1}{2\pi\bm{i}}\oint_{\Gamma}\zeta^{i_1+i_2}d\zeta=-\delta_{i_1+i_2,-1}. \eeqa
For $ 1\leq j_1,~j_2\leq N-1$, \eqa && <\frac{\p}{\p
h^{j_1}},\frac{\p}{\p h^{j_2}}>=-\res_{z=\infty}
  \dfrac{\frac{\p l(z)}{\p h^{j_1}}\cdot\frac{\p l(z)}{\p h^{j_2}}}{z^2l'(z)}dz\nn\\
  &=&-\res_{z=\infty}\frac{\chi(z)^{N-1}}{N}\Big(\chi(z)^{-j_1}+O(z^{-N})\Big)
  \Big(\chi(z)^{-j_2}+O(z^{-N})\Big)\chi'(z) dz\nn\\
  &=& -\res_{\chi=\infty}\frac{\chi^{N-1-j_1-j_2}}{N} d\chi=\frac{1}{N}\delta_{j_1+j_2,N}. \eeqa
Finally, for $0\le k_1,~k_2\leq M$,
 \eqa && <\frac{\p}{\p \hat{h}^{k_1}},\frac{\p}{\p
\hat{h}^{k_2}}>=-\res_{z=0}
  \dfrac{\frac{\p l(z)}{\p \hat{h}^{k_1}}\cdot\frac{\p l(z)}{\p \hat{h}^{k_2}}}{z^2l'(z)}dz\nn\\
  &=&-\res_{z=0}\frac{\hat{\chi}(z)^{M-1}}{M}\Big(\hat{\chi}(z)^{-k_1}+O(z^{M+1})\Big)
  \Big(\hat{\chi}(z)^{-k_2}+O(z^{M+1})\Big)\hat{\chi}'(z) dz\nn\\
  &=& -\res_{\hat{\chi}=\infty}\frac{\hat{\chi}^{M-1-k_1-k_2}}{M} d\hat{\chi}=\frac{1}{M}\delta_{k_1+k_2,M}. \nn\eeqa
Clearly all other pairings between these vectors vanish. The
proposition is proved.
\end{prfof}


\subsection{Frobenius algebra structure 
} In order to endow a Frobenius algebra structure on the tangent
space of $\cM_{N,M}$, let us introduce a product on the cotangent
space $\mathT^*_{\bm{a}}\mathM_{N,M}$ first:
 \beq
d\al(p)\cdot d\beta(q) = \frac{p\, q}{p-q} \Big( \al'(p)\, d\beta(q)
-\beta'(q)\, d\al(p)\Big) \label{WZ2.49} \eeq with
$\al'(p)={\p\al(p)}/{\p p}$ and $\al,\beta\in\{a,\hat{a}\}$. Note
that the dominator on the right hand side has the same meaning as in
\eqref{WZ2.14}.

\begin{lem}\label{lem2.6} On $\mathT^*_{\bm{a}}\mathM_{N,M}$ the following assertions
hold true:
\begin{itemize}
\item[(i)] the multiplication defined by \eqref{WZ2.49} is associative and
commutative; more generally, for $\al_i\in\{a,\hat{a}\}$ one has
  \begin{align}
  d\al_1(p_1)\cdot d\al_2(p_2)\cdot\cdots\cdot d\al_k(p_k)=\sum_{i=1}^{k}\left(\prod_{j\neq
  i}\frac{\al_j'(p_j)}{p_i^{-1}-p_j^{-1}}\right)d\al_i(p_i).\label{WZ2.50}
  \end{align}
\item[(ii)] the bilinear form $<\ ,\ >^*$ given in \eqref{WZ2.14}
  is invariant with respect to the above multiplication.
\end{itemize}
\end{lem}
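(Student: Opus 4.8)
The plan is to prove, in this order, commutativity of the product \eqref{WZ2.49}, the multi-factor formula \eqref{WZ2.50} (associativity then being automatic), and the invariance of $<\ ,\ >^*$. Commutativity is immediate: since $\frac{qp}{q-p}=-\frac{pq}{p-q}$, interchanging the two factors in \eqref{WZ2.49} reproduces the same right-hand side. The useful reformulation, to be used throughout, is $\frac{pq}{p-q}=\frac{1}{q^{-1}-p^{-1}}$, under which \eqref{WZ2.49} reads
\[
d\al(p)\cdot d\beta(q)=\frac{\al'(p)}{q^{-1}-p^{-1}}\,d\beta(q)+\frac{\beta'(q)}{p^{-1}-q^{-1}}\,d\al(p),
\]
which is exactly \eqref{WZ2.50} for $k=2$.

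For \eqref{WZ2.50} in general I would induct on $k$, writing the $(k-1)$-fold product by the inductive hypothesis, multiplying it on the right by $d\al_k(p_k)$, and applying the $k=2$ identity to each summand $d\al_i(p_i)\cdot d\al_k(p_k)$. The terms proportional to $d\al_i(p_i)$ with $i<k$ assemble directly into the $i<k$ part of \eqref{WZ2.50}, while those proportional to $d\al_k(p_k)$ collect into the coefficient
\[
\Big(\prod_{j=1}^{k-1}\al_j'(p_j)\Big)\sum_{i=1}^{k-1}\frac{1}{\big(p_k^{-1}-p_i^{-1}\big)\prod_{j\neq i,\,j\leq k-1}\big(p_i^{-1}-p_j^{-1}\big)}.
\]
The crux is the partial-fraction identity
\[
\sum_{i=1}^{n}\frac{1}{(x_0-x_i)\prod_{j\neq i}(x_i-x_j)}=\prod_{j=1}^{n}\frac{1}{x_0-x_j}
\]
for pairwise distinct $x_0,x_1,\dots,x_n$, which follows at once from the Lagrange identity $\sum_{i=1}^{n}\prod_{j\neq i}\frac{x-x_j}{x_i-x_j}\equiv 1$ (interpolation of the constant function $1$) by setting $x=x_0$ and dividing by $\prod_{j}(x_0-x_j)$. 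With $x_0=p_k^{-1}$ and $x_i=p_i^{-1}$ it converts the displayed coefficient into $\prod_{j<k}\frac{\al_j'(p_j)}{p_k^{-1}-p_j^{-1}}$, i.e.\ the $i=k$ term of \eqref{WZ2.50}, closing the induction. Since the right-hand side of \eqref{WZ2.50} is manifestly symmetric under permutations of the pairs $(\al_1,p_1),\dots,(\al_k,p_k)$, the case $k=3$ together with commutativity gives $(d\al_1(p_1)\cdot d\al_2(p_2))\cdot d\al_3(p_3)=d\al_1(p_1)\cdot(d\al_2(p_2)\cdot d\al_3(p_3))$, and associativity for arbitrary covectors then follows by bilinearity once each covector is expanded in the generators $da(p),d\hat a(p)$ via \eqref{gener}.

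For the invariance (ii), bilinearity and \eqref{gener} reduce the claim to checking $<d\al(p)\cdot d\beta(q),d\gamma(r)>^*=<d\al(p),d\beta(q)\cdot d\gamma(r)>^*$ for $\al,\beta,\gamma\in\{a,\hat a\}$. Expanding the products by the $k=2$ instance of \eqref{WZ2.50}, using \eqref{WZ2.14} for the remaining pairings together with the symmetry of $<\ ,\ >^*$, a direct computation brings both sides to the same expression
\[
\frac{\al'(p)\beta'(q)\,(p^{-1}-q^{-1})+\al'(p)\gamma'(r)\,(r^{-1}-p^{-1})+\beta'(q)\gamma'(r)\,(q^{-1}-r^{-1})}{(p^{-1}-q^{-1})(r^{-1}-p^{-1})(r^{-1}-q^{-1})},
\]
a ratio of two totally antisymmetric functions of the pairs $(\al,p),(\beta,q),(\gamma,r)$, hence symmetric under permuting them; this proves the identity. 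I expect the main effort — and the only genuine subtlety — to be keeping the region-of-expansion conventions attached to \eqref{WZ2.14} and \eqref{WZ2.49} consistent throughout these manipulations; in particular, when two of $\al,\beta,\gamma$ coincide the corresponding factor in the denominator above must be cancelled against the numerator, which is legitimate because the numerator then vanishes on the relevant diagonal (e.g.\ if $\al=\beta$ it is divisible by $p^{-1}-q^{-1}$, since setting $p=q$ gives $\al'(p)=\beta'(q)$).
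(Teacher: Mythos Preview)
Your proof is correct and follows essentially the same route as the paper: commutativity is immediate, the multi-factor formula \eqref{WZ2.50} is established by induction (the paper merely says ``can be verified by an induction''; you supply the partial-fraction identity via Lagrange interpolation that makes the inductive step work), and invariance is checked by computing $<d\al(p)\cdot d\beta(q),d\gamma(r)>^*$ explicitly and observing it is symmetric in the three pairs. Your closed form for this triple pairing is equivalent to the paper's displayed expression (in fact yours has the correct overall sign), and your remarks on cancelling the diagonal factor when two of $\al,\beta,\gamma$ coincide address a point the paper leaves implicit.
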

\begin{proof} The commutativity of \eqref{WZ2.49} is obvious. The formula
\eqref{WZ2.50} can be verified by an induction, which yields the
associativity of the multiplication. Hence we deduce the first
assertion.

The second assertion follows from
\begin{align}\label{}
 <d\al(p)\cdot
d\beta(q),d\gamma(r)>^*
 =&\frac{\al'(p)\beta'(q)}{(p^{-1}-r^{-1})(r^{-1}-q^{-1})}
  +\frac{\beta'(q)\gamma'(r)}{(q^{-1}-p^{-1})(p^{-1}-r^{-1})}\nn\\
  &\quad +\frac{\gamma'(r)\al'(p)}{(r^{-1}-q^{-1})(q^{-1}-p^{-1})}
  \nn\\
  =&<d\beta(q)\cdot d\gamma(r),d\al(p)>^*\nn
\end{align}
with $\al, \beta, \gamma\in\{a,\hat{a}\}$.
 \end{proof}

\begin{lem}\label{lem2.7} The multiplication \eqref{WZ2.49} can be
represented in Laurent series as
\begin{align}
  \bm{\om}_1\cdot\bm{\om}_2=&
  z\Big(\big[\om_2(z)\big(\om_1(z)a'(z)\big)_{\geq 0} -\om_2(z)\big(\hat{\om}_1(z)\hat{a}'(z)\big)_{<0}\nn\\
  &-\om_1(z)\big(\om_2(z)a'(z)\big)_{<0} -\om_1(z)\big(\hat{\om}_2(z)\hat{a}'(z)\big)_{<0}\big]_{\geq -N},\nn\\
  &\big[\hat{\om}_2(z)\big(\om_1(z)a'(z)\big)_{\geq 0} +\hat{\om}_2(z)\big(\hat{\om}_1(z)\hat{a}'(z)\big)_{\geq 0}\nn\\
  &+\hat{\om}_1(z)\big(\om_2(z)a'(z)\big)_{\geq 0}
  -\hat{\om}_1(z)\big(\hat{\om}_2(z)\hat{a}'(z)\big)_{<0}\big]_{\leq M-1}\Big)\label{WZ2.51}
\end{align}
for any cotangent vectors
$\bm{\om}_i=(\om_i(z),\hat{\om}_i(z))\in\mathT_{\bm{a}}^*\mathM_{N,M}$
with $i=1, 2$. Moreover, this multiplication has unity \beq
\bm{e}^*:=\left(0, \frac{z^M}{M\hat{v}_{-M}}\right)=\frac{1}{M}d
\hat{h}^{0}. \label{WZ2.52} \eeq
\end{lem}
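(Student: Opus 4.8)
\textbf{Proof plan for Lemma~\ref{lem2.7}.}
The plan is to derive the Laurent-series formula \eqref{WZ2.51} by the same strategy used in Lemma~\ref{lem2.3} for the map $\eta$: first compute the product on the generating covectors $da(p)$ and $d\hat a(p)$, then integrate against arbitrary covectors via the reproducing formula \eqref{gener}. Concretely, the bilinear expression \eqref{WZ2.49} together with \eqref{rpgener01}--\eqref{rpgener02} gives, for each pair $\al,\beta\in\{a,\hat a\}$, an explicit value of $d\al(p)\cdot d\beta(q)$ as a covector-valued function of $p$ and $q$ (expanded in the appropriate region $|p|>|q|$ or vice versa as dictated by the convention fixed after \eqref{WZ2.14}). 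Then for general $\bm\om_1,\bm\om_2$ I would write each as a contour integral of the generating covectors as in part (ii) of Lemma~2.1, substitute into the bilinear product, and carry out the two $p,q$-contour integrals. Each integral is a Cauchy-type residue computation that picks out a projection ($(\cdot)_{\ge0}$, $(\cdot)_{<0}$, etc.) of a Laurent series; collecting the four terms coming from the four choices of $(\al,\beta)$ and keeping track of which component of $\mathT^*_{\bm a}\cM_{N,M}=z^{-N+1}\mathH^+\times z^M\mathH^-$ each contributes to yields exactly the bracketed expression in \eqref{WZ2.51}, with the outer truncations $[\,\cdot\,]_{\ge -N}$ and $[\,\cdot\,]_{\le M-1}$ enforcing membership in that cotangent space.

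For the unity statement, I would simply verify that $\bm e^*=\bigl(0,z^M/(M\hat v_{-M})\bigr)$ acts as a two-sided identity under \eqref{WZ2.51}. This means setting $\bm\om_2=\bm e^*$, so $\om_2(z)=0$ and $\hat\om_2(z)=z^M/(M\hat v_{-M})$, and checking that the right-hand side collapses to $\bm\om_1$. The key input is that $\hat a'(z)=-M\hat v_{-M}z^{-M-1}+(\text{higher order})$, so $\hat\om_2(z)\hat a'(z)=-z^{-1}+O(1)$ and hence $\bigl(\hat\om_2(z)\hat a'(z)\bigr)_{<0}=-z^{-1}$ exactly (all lower-order terms are nonnegative powers of $z$ after multiplying $z^M$ by a Laurent tail starting at $z^{-M-1}$ — one must check this truncation carefully). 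Plugging in, the terms involving $\om_2$ drop, $-\om_1(z)\bigl(\hat\om_2(z)\hat a'(z)\bigr)_{<0}=\om_1(z)z^{-1}$, and multiplying by the outer $z$ and applying $[\,\cdot\,]_{\ge -N}$ returns $\om_1(z)$ (which already lies in $z^{-N+1}\mathH^+$); similarly the second component reduces to $\hat\om_1(z)$ after the $[\,\cdot\,]_{\le M-1}$ truncation, using that the surviving term is $-\hat\om_1(z)\bigl(\hat\om_2(z)\hat a'(z)\bigr)_{<0}\cdot z=\hat\om_1(z)$. The identification $\bm e^*=\tfrac1M d\hat h^0$ then follows from the expansion \eqref{WZ2.39}, since $\hat h^0=\log\hat v_{-M}$ gives $d\hat h^0=(0,z^M/\hat v_{-M})$ through the chain rule relating $\hat h^0$ to $\hat v_{-M}$; alternatively one checks it directly against the basis $d\hat v_j$.

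The main obstacle I expect is bookkeeping rather than conceptual: correctly tracking the region-of-expansion conventions through the double contour integrals, and verifying that the naive products of Laurent series land in the right truncated spaces so that the outer projections $[\,\cdot\,]_{\ge -N}$ and $[\,\cdot\,]_{\le M-1}$ in \eqref{WZ2.51} are exactly what emerges (not something larger that happens to agree). A clean way to organize this is to observe, as in the proof of Lemma~\ref{lem2.3}, that the product must land in $\mathT^*_{\bm a}\cM_{N,M}$ by construction — the bilinear form pairs it correctly — so the truncations are forced and one only needs to match the ``interior'' Laurent series. The associativity and commutativity, and the invariance of $<\,,\,>^*$, are already established in Lemma~\ref{lem2.6}, so no extra work is needed there; the content of the present lemma is purely the explicit representation and the identification of the unity.
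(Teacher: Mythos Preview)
Your proposal is correct and follows essentially the same route as the paper. The paper's proof is terse---it simply says ``the equality \eqref{WZ2.51} is verified by using \eqref{gener}'' and then plugs $\bm e^*$ into \eqref{WZ2.51} to check the unity property---and your plan unpacks exactly these two steps: reduce to generating covectors via \eqref{gener} and compute the Cauchy integrals to extract the projections, then substitute $\bm\om_2=\bm e^*$ and use $(\hat\om_2\hat a')_{<0}=-z^{-1}$ to recover $\bm\om_1$. One small point worth making explicit in your write-up: in the second component, the terms $\hat\om_2(\om_1 a')_{\ge0}$ and $\hat\om_2(\hat\om_1\hat a')_{\ge0}$ do not vanish outright but only after the outer truncation $[\,\cdot\,]_{\le M-1}$, since $\hat\om_2=z^M/(M\hat v_{-M})$ multiplied by a nonnegative-power series has only powers $\ge M$; this is precisely why the paper's computation displays just the single surviving term.
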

\begin{proof}
The equality \eqref{WZ2.51} is verified by using \eqref{gener}.
Secondly, for any cotangent vector
$\bm{\om}=(\om(z),\hat{\om}(z))\in\mathT_{\bm{a}}^*\mathM_{N,M}$, in
consideration of the form of $(a(z), \hat{a}(z))$ in  \eqref{aah} we
have
 \eqa \bm{e}^*\cdot\bm{\om} &=&
z\left(\left[-\om(z)\left(\frac{z^M}{M\hat{v}_{-M}}
\hat{a}'(z)\right)_{<0}\right]_{\geq -N}\, ,\,
\left[-\hat{\om}(z)\left(\frac{z^M}{M\hat{v}_{-M}} \hat{a}'(z)\right)_{<0}\right]_{\leq M-1}\right)\nn\\
&=& z\left(\left(\frac{\om(z)}{z}\right)_{\geq -N},
\left(\frac{\hat{\om}(z)}{z}\right)_{\leq M-1}\right)=\bm{\om}.\nn
 \eeqa
Thus we complete the proof of the lemma.
\end{proof}

As a combination of Lemmas~\ref{lem2.6} and \ref{lem2.7}, we obtain
\begin{prop} The cotangent space  $\mathT_{\bm{a}}^*\mathM_{N,M}$ carries
a Frobenius algebra structure with the multiplication defined in
\eqref{WZ2.49}, the unity $\bm{e}^*$ given in \eqref{WZ2.52} and the
non-degenerate invariant bilinear form \eqref{WZ2.14}.
\end{prop}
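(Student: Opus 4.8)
The plan is to assemble the proposition directly from the structural lemmas already proved, since being a Frobenius algebra means exactly: a commutative associative algebra with a unity, equipped with a non-degenerate symmetric invariant bilinear form. So I would organize the proof around these three requirements and cite the relevant lemma for each.

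First I would record that commutativity, associativity and the existence of a unity are precisely the content of Lemmas~\ref{lem2.6} and \ref{lem2.7}. Commutativity of the product \eqref{WZ2.49} is immediate from its definition; associativity is read off from the closed multi-factor formula \eqref{WZ2.50}, which one proves by induction on $k$ and which exhibits the triple product $d\al_1(p_1)\cdot d\al_2(p_2)\cdot d\al_3(p_3)$ symmetrically in its three arguments, so that the two bracketings agree. The unity is the explicit covector $\bm{e}^{*}=\bigl(0,\,z^{M}/(M\hat v_{-M})\bigr)=\tfrac{1}{M}d\hat h^{0}$ of Lemma~\ref{lem2.7}, where $\bm{e}^{*}\cdot\bm\om=\bm\om$ is checked by the short computation given there, using only the leading behaviour of $\hat a(z)$ at $z=0$.

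Next, for the bilinear form: symmetry of $<\,,\,>^{*}$ in \eqref{WZ2.14} follows by swapping $(\al,p)\leftrightarrow(\beta,q)$, the sign change in the prefactor $pq/(p-q)$ being cancelled by the sign change in $\al'(p)-\beta'(q)$. Invariance, namely $<\bm\om_1\cdot\bm\om_2,\bm\om_3>^{*}=<\bm\om_1,\bm\om_2\cdot\bm\om_3>^{*}$ for all cotangent vectors, reduces by bilinearity and the generating-function representation \eqref{gener} to the case $\bm\om_i=d\al_i(p_i)$ with $\al_i\in\{a,\hat a\}$, which is Lemma~\ref{lem2.6}(ii): there the triple contraction is displayed as a rational function of $p_1^{-1},p_2^{-1},p_3^{-1}$ that is visibly cyclically symmetric, hence invariant. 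Finally, non-degeneracy of $<\,,\,>^{*}$ I would obtain from Lemma~\ref{lem2.3}: the linear map $\eta$ determined by $\la\bm\om_1,\eta(\bm\om_2)\ra=<\bm\om_1,\bm\om_2>^{*}$ is a bijection, and since the pairing $\la\,,\,\ra$ of \eqref{WZ2.6} is non-degenerate, any covector $\bm\om_2$ with $<\bm\om_1,\bm\om_2>^{*}=0$ for all $\bm\om_1$ must satisfy $\eta(\bm\om_2)=0$, whence $\bm\om_2=0$. Combining these points yields the asserted Frobenius algebra structure on $\mathT^{*}_{\bm a}\mathM_{N,M}$.

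There is no serious obstacle here, as every ingredient has been established above; the only point that demands care is the bookkeeping of the single expansion convention fixed after \eqref{WZ2.14} — that $1/(p-q)$ is expanded for $|p|>|q|$ whenever the two labels differ — so that the ``$\geq$'' and ``$<$'' projections occurring in the Laurent-series formula \eqref{WZ2.51} are literally the same ones entering the contraction formulas, which is what makes the invariance identity hold term by term without correction terms.
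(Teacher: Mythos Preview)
Your proposal is correct and follows the same approach as the paper, which simply states the proposition ``as a combination of Lemmas~\ref{lem2.6} and \ref{lem2.7}''. Your version is actually more complete: you explicitly invoke Lemma~\ref{lem2.3} (the bijectivity of $\eta$) to conclude non-degeneracy of $<\,,\,>^{*}$, a point the paper leaves implicit.
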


Due to the bijection $\eta$ in \eqref{WZ2.5}, we achieve the
following result.
\begin{cor}\label{cor-FM}
The tangent space $\mathT_{\bm{a}}\mathM_{m,n}$ is a Frobenius
algebra with multiplication between any vectors $\bm{X}_1$ and
$\bm{X}_2$ defined by \beq
  \bm{X}_1\cdot\bm{X}_2:=\eta\big(\eta^{-1}(\bm{X}_1)\cdot\eta^{-1}(\bm{X}_2)\big),
\eeq  the unity vector
\begin{align}
  \bm{e}:=\eta\big(\bm{e}^*\big)=\big(1,1\big)=\dfrac{\p}{\p \hat{h}^M}
\end{align}
and the invariant inner product \eqref{WZ2.25}.
\end{cor}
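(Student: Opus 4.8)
The plan is to obtain the corollary by transporting the Frobenius-algebra structure on the cotangent space to the tangent space through the isomorphism $\eta$. By Lemma~\ref{lem2.3}, $\eta:\mathT^*_{\bm{a}}\mathM_{N,M}\to\mathT_{\bm{a}}\mathM_{N,M}$ is a linear bijection, and by the preceding proposition $\big(\mathT^*_{\bm{a}}\mathM_{N,M},\cdot,\bm{e}^*,<\ ,\ >^*\big)$ is a Frobenius algebra. One defines the product on $\mathT_{\bm{a}}\mathM_{N,M}$ and the unity $\bm{e}$ by the stated formulas, so that $\eta$ is by construction a linear isomorphism intertwining the two products and sending $\bm{e}^*$ to $\bm{e}$. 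Commutativity, associativity, and the identity $\bm{e}\cdot\bm{X}=\bm{X}$ for all $\bm{X}$ then follow at once from the corresponding properties of the cotangent product (Lemma~\ref{lem2.6}(i)). Invariance of $<\ ,\ >$ with respect to the new product follows from invariance of $<\ ,\ >^*$ (Lemma~\ref{lem2.6}(ii)) together with the fact that, by its very definition \eqref{WZ2.24}, $\eta^{-1}$ is an isometry from $\big(\mathT_{\bm{a}}\mathM_{N,M},<\ ,\ >\big)$ onto $\big(\mathT^*_{\bm{a}}\mathM_{N,M},<\ ,\ >^*\big)$; and non-degeneracy of $<\ ,\ >$ is inherited from that of $<\ ,\ >^*$, which was already established in Proposition~\ref{thm-inpd} (equivalently, it amounts to the bijectivity of $\eta$ from Lemma~\ref{lem2.3}).

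It then remains only to identify the unity $\bm{e}=\eta(\bm{e}^*)$ concretely. Substituting $\bm{e}^*=(\om(z),\hat{\om}(z))=\big(0,\,z^{M}/(M\hat{v}_{-M})\big)$ into the explicit formula \eqref{AWZ2.16} for $\eta$: since $M\ge1$ one has $[\om(z)+\hat{\om}(z)]_{<0}=0$ and $[\om(z)+\hat{\om}(z)]_{\ge0}=z^{M}/(M\hat{v}_{-M})$, while by condition (M1) the Laurent expansion of $\hat{a}'(z)$ at the origin begins with $-M\hat{v}_{-M}z^{-M-1}$, so that $\big[\hat{\om}(z)\hat{a}'(z)\big]_{<0}=-z^{-1}$. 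Feeding these into \eqref{AWZ2.16} one finds that both components collapse to $1$, i.e. $\bm{e}=(1,1)$.

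Finally one checks that the tangent vector $(1,1)$ is the coordinate field $\partial/\partial\hat{h}^{M}$. Under the identification of a tangent vector with the pair $\big(\partial a(z),\partial\hat{a}(z)\big)$, and using \eqref{WZ2.41}, namely $a(z)=l(z)+\zeta(z)_{\le0}$ and $\hat{a}(z)=l(z)-\zeta(z)_{>0}$, it suffices to verify $\partial\zeta(z)/\partial\hat{h}^{M}=0$ and $\partial l(z)/\partial\hat{h}^{M}=1$. The first is part of Lemma~\ref{thm-dzhh}. For the second, \eqref{lhh} with $k=M$ gives
\[
\frac{\partial l(z)}{\partial\hat{h}^{M}}=-\big(z\,\hat{\chi}(z)^{-1}\hat{\chi}'(z)\big)_{\le0}=-\left(\frac{z\,l'(z)}{M\,l(z)}\right)_{\le0},
\]
and since $l(z)$ has a pole of order $M$ at $z=0$ one has $z\,l'(z)/(M\,l(z))=-1+O(z)$ there, so $\partial l(z)/\partial\hat{h}^{M}=1$. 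Hence $(1,1)=\partial/\partial\hat{h}^{M}$, which completes the proof.

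Everything here except these last two local computations is formal transport along $\eta$; the only steps requiring genuine care are the evaluation of $\eta(\bm{e}^*)$ via the projection formula \eqref{AWZ2.16} and the expansion at $z=0$ pinning $(1,1)$ to $\partial/\partial\hat{h}^{M}$, both of which rely on the nonvanishing $\hat{v}_{-M}\ne0$ from condition (M1).
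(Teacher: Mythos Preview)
Your proof is correct and follows the same approach as the paper, which treats the corollary as an immediate consequence of the bijection $\eta$ and gives no further argument. You have simply filled in the details the paper leaves implicit, including the explicit verification that $\eta(\bm{e}^*)=(1,1)$ via \eqref{AWZ2.16} and the identification $(1,1)=\partial/\partial\hat{h}^M$ via Lemma~\ref{thm-dzhh}; both computations are carried out correctly.
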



\subsection{The potential}

Now we proceed to compute the following symmetric $3$-tensor
\begin{equation}\label{3tensor}
c(\p_u,\p_v, \p_w)= <\p_u\cdot\p_v, \p_w>, \quad u, v,
w\in\bm{t}\cup\bm{h}\cup\hat{\bm{h}}
\end{equation}
where for simplicity we write $\p_u={\p}/{\p u}\in
\mathT_{\bm{a}}\mathM_{N,M}$. In the present section, unless
otherwise stated the following convention will be assumed:
\[
 i, i_1, i_2, i_3\in\mathZ,\quad j, j_1,j_2,j_3\in\{1,2,\ldots,N-1\},
\quad k,k_1,k_2,k_3\in\{0,1,\ldots,M\}.
\]
\begin{lem}
The symmetric 3-tensor \eqref{3tensor} is given by
\begin{align}\label{}
& <\p_{t^{i_1}}\cdot \p_{h^{j_2}},\p_{\hat{h}^{k_3}}>=0, \\
 & <\p_{t^{i_1}}\cdot \p_{t^{i_2}},\p_{h^{j_3}}>
  =\frac{-1}{2\pi \bm{i}}\oint_{|z|=1}z \Big(\zeta'(z)\zeta(z)^{i_1+i_2}\Big)_{<-1}
  \Big({\chi}'(z)\chi(z)^{N-1-j_3}\Big)_{\geq 0}   ~ dz,
   \\
 & <\p_{h^{j_1}}\cdot \p_{h^{j_2}},\p_{t^{i_3}}>
  =\frac{-1}{2\pi N \bm{i}}\oint_{|z|=1}z
  \Big(\zeta'(z)\zeta(z)^{i_3}\Big)_{<0}\Big({\chi}'(z)\chi(z)^{N-1-j_1-j_2}\Big)_{\geq -1} ~ dz, \\
 & <\p_{h^{j_1}}\cdot \p_{h^{j_2}},\p_{\hat{h}^{k_3}}> \nn\\
  &\qquad =\frac{-1}{2\pi N \bm{i}}\oint_{|z|=1}z \Big({\chi}'(z)\chi(z)^{N-1-j_1-j_2}\Big)_{\geq -1}
  \Big(\chi'(z)\chi(z)^{M-1-k_3}\Big)_{<0} ~ dz, \\
 & <\p_{t^{i_1}}\cdot \p_{t^{i_2}},\p_{\hat{h}^{k_3}}>
  =\frac{1}{2\pi  \bm{i}}\oint_{|z|=1}z
  \Big(\zeta'(z)\zeta(z)^{i_1+i_2}\Big)_{\geq -1}\Big(\hat{\chi}'(z)\hat{\chi}(z)^{M-1-k_3}\Big)_{<0} ~dz,
   \\
 & <\p_{\hat{h}^{k_1}}\cdot \p_{\hat{h}^{k_2}},\p_{t^{i_3}}>
  =\frac{-1}{2\pi M \bm{i}}\oint_{|z|=1}z
  \Big(\zeta'(z)\zeta(z)^{i_3}\Big)_{\geq 0}\Big(\hat{\chi}'(z)\hat{\chi}(z)^{M-1-k_1-k_2}\Big)_{<-1} ~ dz,
   \\
  & <\p_{\hat{h}^{k_1}}\cdot \p_{\hat{h}^{k_2}},\p_{{h}^{j_3}}>
  \nn\\
 &\qquad  =\frac{-1}{2\pi M \bm{i}}\oint_{|z|=1}z \Big({\chi}'(z)\chi(z)^{N-1-j_3}\Big)_{\geq 0}
  \Big(\hat{\chi}'(z)\hat{\chi}(z)^{M-1-k_1-k_2}\Big)_{<-1} ~ dz,
  \\
& <\p_{h^{j_1}}\cdot \p_{h^{j_2}},\p_{h^{j_3}}>= \frac{-1}{2\pi
N^2\bm{i}}\oint_{|z|=1}z\zeta'(z)_{<0}
  \Big(\chi'(z)\chi(z)^{N-1-j_1-j_2-j_3}\Big)_{\geq -1}dz
   \nn\\
  &\qquad -\res_{z=\infty}\frac{z\Big(\chi'(z)\chi(z)^{N-1-j_1}\Big)_{\geq 0}
  \Big(\chi'(z)\chi(z)^{N-1-j_2}\Big)_{\geq 0} \Big(\chi'(z)\chi(z)^{N-1-j_3}\Big)_{\geq 0}}
  {l'(z)}dz,
  \\
 & <\p_{\hat{h}^{k_1}}\cdot\p_{\hat{h}^{k_2}},\p_{\hat{h}^{k_3}}>= \frac{1}{2\pi
M^2\bm{i}}\oint_{|z|=1}z\zeta'(z)_{\geq 0}
  \Big(\hat{\chi}'(z)\hat{\chi}(z)^{M-1-k_1-k_2-k_3}\Big)_{<-1}dz
  \nn\\
  &\quad +\res_{z=0}\frac{z\Big(\hat{\chi}'(z)\hat{\chi}(z)^{M-1-k_1}\Big)_{<0}
  \Big(\hat{\chi}'(z)\hat{\chi}(z)^{M-1-k_2}\Big)_{<0} \Big(\hat{\chi}'(z)\hat{\chi}(z)^{M-1-k_3}\Big)_{<0}}
  {l'(z)}dz, \label{hhat3}
  \\
 &  <\p_{t^{i_1}}\cdot\p_{t^{i_2}},\p_{t^{i_3}}>
  =\frac{1}{2\pi\bm{i}}\oint_{|z|=1}z\zeta'(z)\zeta(z)^{i_1+i_2+i_3}\Big(\frac{1}{2}\Pi[\zeta'(z)]
  -l'(z)\Big)dz\nn\\
  &\qquad-\frac{1}{4\pi\bm{i}}\oint_{|z|=1}{z}\zeta'(z)\Big(\zeta(z)^{i_1+i_2}\Pi[\zeta'(z)\zeta(z)^{i_3}]
  \nn\\
  &\qquad + \zeta(z)^{i_2+i_3}\Pi[\zeta'(z)\zeta(z)^{i_1}]+\zeta(z)^{i_3+i_1}\Pi[\zeta'(z)\zeta(z)^{i_2}]
  \Big)dz,
\end{align}
with the mapping $\Pi[f(z)]=f(z)_{\geq 0}-f(z)_{<0}$ in the final
equality.
\end{lem}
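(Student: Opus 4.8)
The plan is to transport the Frobenius multiplication to the cotangent bundle, where it is explicit. By Corollary~\ref{cor-FM} together with the defining relation \eqref{WZ2.24} of the metric, $c(\partial_u,\partial_v,\partial_w)=\langle\eta^{-1}(\partial_u)\cdot\eta^{-1}(\partial_v),\partial_w\rangle$, where $\cdot$ is the cotangent product \eqref{WZ2.51} and $\langle\,,\rangle$ is the pairing \eqref{WZ2.6}; writing $(\rho,\hat\rho):=\eta^{-1}(\partial_u)\cdot\eta^{-1}(\partial_v)$ this reads
\[
c(\partial_u,\partial_v,\partial_w)=\frac1{2\pi\bm{i}}\oint_{|z|=1}\bigl[\rho(z)\,\partial_w a(z)+\hat\rho(z)\,\partial_w\hat a(z)\bigr]\frac{dz}{z}.
\]
Thus everything reduces to three inputs: the cotangent duals $\eta^{-1}(\partial_u)$, the Laurent-series product \eqref{WZ2.51} of Lemma~\ref{lem2.7}, and the tangent components $\partial_w a,\partial_w\hat a$, obtained from Lemma~\ref{thm-dzhh} and \eqref{WZ2.41}. (Alternatively one may represent each $\eta^{-1}(\partial_u)$ by \eqref{gener} and use the totally symmetric kernel $\langle d\alpha(p)\cdot d\beta(q),d\gamma(r)\rangle^*$ computed in the proof of Lemma~\ref{lem2.6}(ii); this keeps the symmetry of $c$ manifest and gives a check on the bookkeeping.)

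The first step is to determine $\eta^{-1}(\partial_u)=(\omega_u,\hat\omega_u)$ for $u$ in each of the three blocks $\bm t$, $\bm h$, $\hat{\bm h}$. Its pieces $(\omega_u)_{\ge0}$ and $(\hat\omega_u)_{<0}$ are read off directly from \eqref{WZ2.19} in terms of $\partial_u\zeta(z)/(z\,\zeta'(z))$, while the complementary pieces $(\omega_u)_{<0}$, $(\hat\omega_u)_{\ge0}$ are recovered by inverting the finite triangular systems \eqref{WZ2.21}, i.e.\ via the matrices $K_{N-1}^{-1}=(z^{N-1}/l'(z))_{z^{-1}\to\Lambda(N-1)}$ and $\hat K_{M+1}^{-1}$ already used in the proof of \eqref{WZ2.25}. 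Feeding in $\partial_{t^i}\zeta=-z\zeta^i\zeta'$, $\partial_{t^i}l=0$, $\partial_{h^j}\zeta=\partial_{\hat h^k}\zeta=0$ and (Lemma~\ref{thm-dzhh}, with $l=\chi^N=\hat\chi^M$) $\partial_{h^j}l=(z\chi^{N-j-1}\chi')_{>0}=\tfrac1N(z\,l'\chi^{-j})_{>0}$, $\partial_{\hat h^k}l=-(z\hat\chi^{M-k-1}\hat\chi')_{\le0}=-\tfrac1M(z\,l'\hat\chi^{-k})_{\le0}$, one gets compact representatives: $\eta^{-1}(\partial_{t^i})$ built from $\zeta(z)^i$ and the truncations $(\,\cdot\,)_{\ge-N+1}$, $(\,\cdot\,)_{\le M}$; $\eta^{-1}(\partial_{h^j})$ with zero $\hat a$-component and $a$-component built from $\chi(z)^{-j}$; $\eta^{-1}(\partial_{\hat h^k})$ with zero $a$-component and $\hat a$-component built from $\hat\chi(z)^{-k}$. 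These vanishing components are essential: in each mixed case they kill most of the terms in \eqref{WZ2.51}; and the $l'$ hidden in $\partial_{h^j}l$, $\partial_{\hat h^k}l$ is where the denominator $l'$ in the answer ultimately comes from.

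Then I would carry out the case analysis over the choices of blocks for $u,v,w$, reduced by the symmetry of $c$ and by the end-to-end duality $(N,\chi,\res_{z=\infty})\leftrightarrow(M,\hat\chi,\res_{z=0})$, so only a handful of cases remain. In each one: substitute the representatives into \eqref{WZ2.51}; collapse the nested truncations by elementary Laurent-series rules (notably $(zf)_{>0}=z(f)_{\ge0}$); contract against $\partial_w a=\partial_w\hat a=\partial_w l$ when $w\in\bm h\cup\hat{\bm h}$, or against the complementary halves $(-z\zeta^{i_3}\zeta')_{\le0}$, $(z\zeta^{i_3}\zeta')_{>0}$ of $\partial_{t^{i_3}}$; and deform the circle $|z|=1$ to residues at $0$ and $\infty$ (legitimate since each integrand is meromorphic on the relevant annulus or disk, with poles only at $0$ and $\infty$). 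The mixed $t$--$h$--$\hat h$ contraction vanishes because $\partial_{h^j}l$ is identically of positive degree (cf.\ \eqref{WZ2.42}), so no constant term survives. In the two diagonal cases $\langle\partial_{h^{j_1}}\cdot\partial_{h^{j_2}},\partial_{h^{j_3}}\rangle$ and $\langle\partial_{\hat h^{k_1}}\cdot\partial_{\hat h^{k_2}},\partial_{\hat h^{k_3}}\rangle$ the outermost cutoffs $[\,\cdot\,]_{\ge-N}$ resp.\ $[\,\cdot\,]_{\le M-1}$ in \eqref{WZ2.51} are \emph{not} inert, and their defect --- after contraction with $\partial_w l$ and division by $z^2l'$ --- produces exactly the surviving term $\mp\res_{z=\infty}\frac{z\prod_a(\chi'\chi^{N-1-j_a})_{\ge0}}{l'(z)}\,dz$ (resp.\ with $\hat\chi$ and $\res_{z=0}$), which is the $F_{N,M}$-contribution \eqref{poten02}. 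The final formula, for three $t$-indices, comes out by symmetrizing the three ways of contracting the product against $\partial_{t^{i_3}}$, and that symmetrization is what produces the operator $\Pi[f]=f_{\ge0}-f_{<0}$.

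The main obstacle is the bookkeeping in this last step: four kinds of projections must be tracked simultaneously --- $(\,\cdot\,)_{\ge0}$, $(\,\cdot\,)_{<0}$, and the two asymmetric cutoffs $[\,\cdot\,]_{\ge-N}$, $[\,\cdot\,]_{\le M-1}$ dictated by $\mathT^*_{\bm a}\cM_{N,M}=z^{-N+1}\mathH^+\times z^M\mathH^-$ --- and one must correctly isolate the extra residue terms in the diagonal cases, which are precisely the terms that make $\cM_{N,M}$ differ from a direct product of two lower-dimensional Frobenius manifolds. A secondary difficulty is just the number of cases, though the symmetry of $c$ and the $N\leftrightarrow M$ duality reduce the genuinely independent computations roughly by half.
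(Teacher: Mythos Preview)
Your proposal is correct and follows essentially the same strategy as the paper: transport to the cotangent side, use the explicit Laurent-series product \eqref{WZ2.51}, and contract. The only organizational difference is that the paper, rather than inverting $\eta$ via \eqref{WZ2.19}--\eqref{WZ2.21}, writes $c(\partial_u,\partial_v,\partial_w)=\eta_{u\tilde u}\eta_{v\tilde v}\eta_{w\tilde w}\langle d\tilde u\cdot d\tilde v,\eta(d\tilde w)\rangle$ and computes the covectors $dt^i,\,dh^j,\,d\hat h^k$ directly from the residue definitions \eqref{flatt}--\eqref{flathh} (obtaining e.g.\ $dt^i=\bigl(-(\zeta^{-i-1})_{\ge-N+1},(\zeta^{-i-1})_{\le M}\bigr)$, $dh^j=\bigl((\chi^{j-N})_{\ge-N+1},0\bigr)$, $d\hat h^k=\bigl(0,(\hat\chi^{k-M})_{\le M}\bigr)$), then computes $\eta(d\tilde w)$ from \eqref{AWZ2.16}. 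Since the metric is constant in these coordinates, your $\eta^{-1}(\partial_u)$ is just $\eta_{u\tilde u}d\tilde u$, so the two routes feed the same covectors into \eqref{WZ2.51}; the paper's route avoids the triangular inversion and is a bit shorter.
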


\begin{proof}
We denote
\[
\eta_{ u v}=<\p_u,\p_v>,
\]
of which the values are given in Proposition~\ref{thm-inpd}, and then we have
  \eqa \label{3tens}
  <{\p_u}\cdot{\p_v},{\p_w}>=\eta_{u\tilde{u}}\eta_{v\tilde{v}}\eta_{w\tilde{w}}
  <d\tilde{u}\cdot d\tilde{v},d\tilde{w}>^*
  =\eta_{u\tilde{u}}\eta_{v\tilde{v}}\eta_{w\tilde{w}}\la d\tilde{u}
    \cdot d\tilde{v},\eta(d\tilde{w})\ra
  \eeqa
with $u, v, w,\tilde{u},\tilde{v},\tilde{w}
\in\bm{t}\cup\bm{h}\cup\hat{\bm{h}}$. Let us compute the data in the
right hand side.

First of all, one can check the following equalities
\begin{align}\label{WZ2.55}
dt^i=&\Big(-(\zeta(z)^{-i-1})_{\geq -N+1},~(\zeta(z)^{-i-1})_{\leq
M} \Big),
 \\
dh^j=&\Big((\chi(z)^{j-N})_{\geq -N+1},~0\Big), \label{dh}
   \\
d\hat{h}^k=&\Big(0, ~(\hat{\chi}(z)^{k-M})_{\leq M}\Big).
\label{dhh}
\end{align}
Indeed, it follows from \eqref{flatt} that
\[
\frac{\p t^i}{\p v_r}=-\frac{1}{2\pi\bm{i}}\oint_{|p|=1}
\zeta(p)^{-i-1} p^{r-1} d p,\quad \frac{\p t^i}{\p
\hat{v}_s}=\frac{1}{2\pi\bm{i}}\oint_{|p|=1} \zeta(p)^{-i-1} p^{s-1}
d p,
\]
hence
 \eqa dt^i&=&\dsum_{r\leq N-1}\frac{\p t^i}{\p v_r} dv_r
+\dsum_{s\geq -M}\frac{\p t^i}{\p \hat{v}_s} d\hat{v}_s\nn\\
 &=&\Big(\dsum_{r\leq N-1}\frac{\p t^i}{\p v_r} z^{-r},
   \dsum_{s\geq -M}\frac{\p t^i}{\p \hat{v}_s}z^{-s}\Big)\nn\\
&=& \Big(-(\zeta(z)^{-i-1})_{\geq -N+1},~(\zeta(z)^{-i-1})_{\leq M}
\Big).\nn\eeqa Similarly, the equalities \eqref{dh}--\eqref{dhh} can
be checked with the help of
\begin{align*}\label{}
&\frac{\p h^j}{\p v_r}=-\res_{p=\infty}\chi(p)^{j-N}p^{r-1}d p,\quad
\frac{\p h^j}{\p \hat{v}_s}=0, \nn\\
& \frac{\p \hat{h}^k}{\p
\hat{v}_s}=\res_{p=0}\hat{\chi}(p)^{k-M}p^{s-1}d p, \quad \frac{\p
\hat{h}^k}{\p {v}_r}=0
\end{align*}
implied by \eqref{flath}--\eqref{flathh}.

Secondly, by using \eqref{WZ2.51} we have
\begin{align}\label{}
& dt^{i_1}\cdot dt^{i_2} =z\Big( \Big[ a'(z)\zeta(z)^{-i_1-i_2-2}
-\zeta(z)^{-i_1-1}(\zeta(z)^{-i_2-1}\zeta'(z))_{<0}\nn\\
&\qquad
-\zeta(z)^{-i_2-1}(\zeta(z)^{-i_1-1}\zeta'(z))_{<0}\Big]_{\geq -N}\,
,\, -\Big[\hat{a}'(z)\zeta(z)^{-i_1-i_2-2} \nn\\
& \qquad +\zeta(z)^{-i_1-1}(\zeta(z)^{-i_2-1}\zeta'(z))_{\geq 0}
 +\zeta(z)^{-i_2-1}(\zeta(z)^{-i_1-1}\zeta'(z))_{\geq 0}\big]_{\leq M-1}
\Big)
\end{align}
with
\[
a'(z)=\zeta'(z)_{<0}+l'(z), \quad \hat{a}'(z)=-\zeta'(z)_{\geq
0}+l'(z),
\]
and
\begin{align}
 & dh^{j_1}\cdot dh^{j_2}=\Big(N z\Big[\chi'(z)\chi(z)^{j_1+j_2-N-1}
-\chi(z)^{j_1-N}\Big(\chi'(z)\chi(z)^{j_2-1}\Big)_{<0}\nn\\
& \qquad
-\chi(z)^{j_2-N}\Big(\chi'(z)\chi(z)^{j_1-1}\Big)_{<0}\Big]_{\geq-N}\,
,\, 0\Big), \label{WZ2.57} \\
& d\hat{h}^{k_1}\cdot d\hat{h}^{k_2}=\Big(0\,,\, M z
\Big[-\hat{\chi}'(z)
\hat{\chi}(z)^{k_1+k_2-M-1}+\hat{\chi}(z)^{k_1-M}\Big(\hat{\chi}'(z)\hat{\chi}(z)^{k_2-1}\Big)_{\geq
0}\nn\\
&\qquad
+\hat{\chi}(z)^{k_2-M}\Big(\hat{\chi}'(z)\hat{\chi}(z)^{k_1-1}\Big)_{\geq
0}\Big]_{\le M-1}\Big).
\end{align}

Thirdly, substituting \eqref{WZ2.55}--\eqref{dhh} into
\eqref{AWZ2.16} we obtain
\begin{align}\label{etadt}
&\eta(dt^i)=z\Big(\big(\zeta'(z)\zeta(z)^{-i-1}\big)_{<0},
-\big(\zeta'(z)\zeta(z)^{-i-1}\big)_{\geq 0}\Big),  \\
&\eta(dh^j)=N z\Big(\big(\chi'(z)\chi(z)^{j-1}\big)_{\geq 0},
\big(\chi'(z)\chi(z)^{j-1}\big)_{\geq 0}\Big),  \\
 & \eta(d\hat{h}^k)=-M z\Big( \big(\hat{\chi}'(z)\hat{\chi}(z)^{k-1}\big)_{<0},
\big(\hat{\chi}'(z)\hat{\chi}(z)^{k-1}\big)_{<0}\Big).
\label{dtadhh}
\end{align}

Finally, we substitute the above data into \eqref{3tens}, and
conclude the lemma after a tedious but straightforward computation.
\end{proof}

\begin{lem}[\cite{DZ2}] There exists a function $F_{N,M}$ depending polynomially
on $\bm{h}\cup\hat{\bm{h}}\cup\{e^{\hat{h}^0}\}$ such that
\begin{align}\label{poten03}
    \frac{\p^3 F_{N,M}}{\p u\, \p v\, \p w}
    =-\big(\res_{z=\infty}+\res_{z=0}\big)\frac{\p_{u} l(z)\cdot\p_{v}
    l(z)\cdot\p_{w} l(z)}{z^2\, l'(z)}dz
  \end{align}
for any $u,v, w\in\bm{h}\cup\hat{\bm{h}}$.
\end{lem}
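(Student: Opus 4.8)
The plan is to identify the right-hand side of \eqref{poten03} with the Landau--Ginzburg (residue) representation of the $3$-point correlators of the polynomial Frobenius manifold $M(\tilde A_{M+N-1};N)$ of Dubrovin and Zhang on the orbit space of $\tilde W^{(N)}(A_{M+N-1})$: under this dictionary the superpotential is $\lambda=l(z)$, the flat coordinates of the second (intersection-type) metric are $\bm{h}\cup\hat{\bm{h}}$ as normalised by \eqref{WZ2.38}--\eqref{WZ2.39}, the unity is $\p/\p\hat{h}^M$ (note $\p l/\p\hat{h}^M=1$ by Corollary~\ref{cor-FM}), and the grading is that of \eqref{euler}. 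With this identification the lemma is exactly the content of \cite{DZ2}, where the relevant potential is shown to be polynomial in $\bm{h}\cup\hat{\bm{h}}\cup\{e^{\hat{h}^0}\}$, so one may simply cite it. For completeness I would also give a direct argument along the following lines.

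\emph{Step 1 (reduce to a sum over critical points).} For $u\in\bm{h}\cup\hat{\bm{h}}$ the derivative $\p_u l(z)$ is a Laurent polynomial in $z$ by Lemma~\ref{thm-dzhh}, so $g(z):=\p_u l\cdot\p_v l\cdot\p_w l/(z^2 l'(z))$ is a rational function whose only possible poles are at $0$, $\infty$ and the $N+M$ zeros $z_1,\dots,z_{N+M}$ of $l'$ (for generic $\bm{a}\in\cM_{N,M}$ these are simple, lie in $\mathC\setminus\{0\}$, and yield distinct critical values). The residue theorem then gives
\[
  -\bigl(\res_{z=\infty}+\res_{z=0}\bigr)g(z)\,dz=\sum_{c}\res_{z=z_c}g(z)\,dz=\sum_{c}\frac{\p_u u_c\cdot\p_v u_c\cdot\p_w u_c}{z_c^{2}\,l''(z_c)},
\]
where $u_c:=l(z_c)$ and $\p_u l(z_c)=\p_u u_c$ since $l'(z_c)=0$. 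In the same way, \eqref{WZ2.25} together with $\p_u\zeta=0$ for $u\in\bm{h}\cup\hat{\bm{h}}$ (see \eqref{WZ2.42}) gives $\langle\p_u,\p_v\rangle=\sum_c g_c^{-1}\,\p_u u_c\cdot\p_v u_c$ with $g_c:=z_c^{2}l''(z_c)$. Thus the critical values $(u_c)$ serve as canonical coordinates: in them the metric $\langle\,,\rangle|_{\bm{h}\cup\hat{\bm{h}}}$ is diagonal, $\langle\p_{u_c},\p_{u_{c'}}\rangle=\delta_{cc'}/g_c$, and \eqref{poten03} is the $3$-tensor $\sum_c g_c^{-1}\p_u u_c\cdot\p_v u_c\cdot\p_w u_c$ of the commutative associative product with idempotents $\p_{u_c}$ and unity $\sum_c\p_{u_c}=\p/\p\hat{h}^M$.

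\emph{Step 2 (existence and polynomiality of $F_{N,M}$).} By Proposition~\ref{thm-inpd} the metric $\langle\,,\rangle$ is flat, with the constant Gram matrix \eqref{inpdflat02}--\eqref{inpdflat03} in the coordinates $\bm{h}\cup\hat{\bm{h}}$, and the unity field is covariantly constant; in these flat coordinates the existence of $F_{N,M}$ with $\p_u\p_v\p_w F_{N,M}$ equal to \eqref{poten03} is equivalent to the integrability condition $\p_s(\text{\eqref{poten03}})_{uvw}=\p_u(\text{\eqref{poten03}})_{svw}$, i.e.\ to the Darboux--Egoroff property $g_c^{-1}=\p\Phi/\p u_c$ of the canonical metric $\sum_c g_c^{-1}(du_c)^2$. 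I would establish this in the standard Landau--Ginzburg way: differentiate the residue formulae \eqref{WZ2.38}--\eqref{WZ2.39} for the flat coordinates of $\langle\,,\rangle$ under the residue sign and use the thermodynamical identities of Lemma~\ref{thm-dzhh} to exhibit $\Phi$ explicitly, as in \cite{Du,DZ2}. Granting it, $F_{N,M}$ exists, uniquely up to an affine-quadratic function of $\bm{h}\cup\hat{\bm{h}}$. For polynomiality: since $\hat{v}_{-M}=e^{\hat{h}^0}$, the coefficients of $l(z)$ are polynomial in $\bm{h}\cup\hat{\bm{h}}\cup\{e^{\hat{h}^0},e^{-\hat{h}^0}\}$; combined with the weighted homogeneity forced by \eqref{euler} --- every $h^j,\hat{h}^k$ carries positive weight and $e^{\hat{h}^0}$ carries weight $(N+M)/N>0$, so each $\p_u\p_v\p_w F_{N,M}$ is weighted homogeneous of a fixed degree --- a degree count excludes negative powers of $e^{\hat{h}^0}$, and hence $F_{N,M}$ can be chosen polynomial in $\bm{h}\cup\hat{\bm{h}}\cup\{e^{\hat{h}^0}\}$.

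\emph{Expected obstacle.} Everything except the Darboux--Egoroff property of Step 2 is either residue bookkeeping or already provided by Proposition~\ref{thm-inpd}; that property is the one essentially nontrivial ingredient, and it is exactly where one uses that all the data arise from the single generating function $l(z)$ rather than from an arbitrary flat semisimple metric. Producing the potential $\Phi$ explicitly, as a suitable combination of residues of the kind appearing in \eqref{WZ2.38}--\eqref{WZ2.39}, is precisely the computation taken over from \cite{DZ2}, so reconstructing it from scratch is the main labour in a self-contained proof.
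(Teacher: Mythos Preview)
Your proposal is correct and takes essentially the same approach as the paper: both treat the lemma as a citation to \cite{DZ2}, after identifying $l(z)$ with the Landau--Ginzburg superpotential and $\bm{h}\cup\hat{\bm{h}}$ with the flat coordinates of the Frobenius manifold $M(\tilde A_{M+N-1};N)$. The only difference is in the optional self-contained sketch---the paper says one may adapt the appendix of \cite{WX2} via the substitution $\p_z\mapsto z\p_z$, $dz\mapsto dz/z$, while you outline the canonical-coordinate/Darboux--Egoroff route; neither is carried out in detail, and both are standard.
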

This lemma can also be proved in the same way as in the appendix of
\cite{WX2}, with the replacement $\p_z\mapsto z\p_z$ and $d z\mapsto
d z/z$.

Let us introduce a function of $\bm{t}\cup\bm{h}\cup\hat{\bm{h}}$ as
\begin{align}\label{}
&\mathcal{F}_{N,M}=\frac{1}{(2\pi
\bm{i})^2}\oint\oint_{|z_1|<|z_2|}\left(\frac{1}{2}\zeta(z_1)\zeta(z_2)-\zeta(z_1)l(z_2)
     +l(z_1)\zeta(z_2)\right)\times\nn\\
     &\quad\times \, \log\left(\frac{z_2-z_1}{z_2}\right)
     \frac{dz_1}{z_1}\frac{dz_2}{z_2} +\left(\frac{1}{2}t^{-1}-\hat{h}^M\right)\sum_{i\ge0}t^i t^{-i-1}
     +F_{N,M}, \label{cFMN}
\end{align}
where
\[
\log\left(\frac{z_2-z_1}{z_2}\right)=-\dsum_{m\ge1}\dfrac{1}{m}\left(\frac{z_1}{
z_2}\right)^m.
\]

\begin{prop} \label{thm-Fc}
The function $\mathcal{F}_{N,M}$ satisfies
\begin{equation}
c(\p_u,\p_v,\p_w)=\frac{\p^3 \mathcal{F}_{N,M}}{\p u\,\p v\,\p
w},\quad u,v,w\in\bm{t}\cup\bm{h}\cup\hat{\bm{h}}.
\end{equation}
Consequently, letting $\nabla$ denote the Levi-Civita connection of
the metric $<~,~>$, the $4$-tensor
  $\nabla_{\p_s}c(\p_u,\p_v,\p_w)$ is symmetric with respect to the flat
  coordinates $s$, $u$, $v$ and $w$.
  \end{prop}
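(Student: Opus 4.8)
The plan is to verify the identity $c(\p_u,\p_v,\p_w)=\p^3\mathcal{F}_{N,M}/\p u\,\p v\,\p w$ by a direct, case-by-case comparison of the explicit formulas already at our disposal: on the left side the nine families of expressions in the preceding lemma (the ones written in terms of $\zeta$, $\chi$, $\hat\chi$ and the projection operators $(\cdot)_{\geq 0}$, $(\cdot)_{<0}$, etc.), and on the right side the third derivatives of the candidate potential \eqref{cFMN}. Concretely, I would differentiate the three summands of $\mathcal{F}_{N,M}$ separately. For the double-contour term, using Lemma~\ref{thm-dzhh} one gets $\p\zeta(z)/\p t^i=-z\zeta^i(z)\zeta'(z)$ and $\p l(z)/\p h^j$, $\p l(z)/\p\hat h^k$ as the appropriate one-sided projections, while $\zeta$ is independent of $\bm h\cup\hat{\bm h}$ and $l$ is independent of $\bm t$; plugging these into $\p^3$ of the double integral and evaluating the residues in $z_1$ against the kernel $\log((z_2-z_1)/z_2)=-\sum_{m\geq1}(z_1/z_2)^m/m$ should reproduce the projection structure in the lemma. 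For the middle term $(\tfrac12 t^{-1}-\hat h^M)\sum_{i\geq0}t^i t^{-i-1}$, the third derivatives are elementary and contribute only to correlators with all indices in $\bm t$ (plus the mixed $t$-$t$-$\hat h^M$ piece), matching the $-l'(z)$ and $\Pi$-terms in the last displayed formula of the lemma after noting $\p l/\p\hat h^M=1$. For the third term $F_{N,M}$, by the lemma of Dubrovin–Zhang its third derivatives are exactly $-(\res_{z=\infty}+\res_{z=0})\,\p_u l\cdot\p_v l\cdot\p_w l/(z^2 l'(z))$, which reproduces the residue terms appearing in $<\p_{h^{j_1}}\cdot\p_{h^{j_2}},\p_{h^{j_3}}>$ and $<\p_{\hat h^{k_1}}\cdot\p_{\hat h^{k_2}},\p_{\hat h^{k_3}}>$.

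The organizing principle is to split every correlator by how many of its three indices lie in $\bm t$ versus $\bm h\cup\hat{\bm h}$, giving four cases (3–0, 2–1, 1–2, 0–3), and within the last two to further track whether the $\bm h\cup\hat{\bm h}$ indices are of type $h$ (residue at $\infty$) or $\hat h$ (residue at $0$). The purely-$\bm h$/purely-$\hat h$ correlators split naturally into a ``boundary'' piece on $|z|=1$ and a ``bulk'' residue piece; the bulk piece is absorbed by $F_{N,M}$ and the boundary piece by the double-contour term, and one must check the mixed $h$-$h$-$\hat h$ and $\hat h$-$\hat h$-$h$ correlators have no residue contribution, consistent with $F_{N,M}$'s third derivatives vanishing when the index types are mismatched (since $\p_{h^j}l$ is supported at $\infty$ and $\p_{\hat h^k}l$ at $0$, and $1/(z^2 l'(z))$ kills the cross terms upon taking residues). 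Once the potential identity $c_{uvw}=\p_u\p_v\p_w\mathcal{F}_{N,M}$ holds, the symmetry of $\nabla_{\p_s}c(\p_u,\p_v,\p_w)$ in $s,u,v,w$ is automatic: in the flat coordinates $\bm t\cup\bm h\cup\hat{\bm h}$ the Christoffel symbols vanish, so $\nabla_{\p_s}c_{uvw}=\p_s c_{uvw}=\p_s\p_u\p_v\p_w\mathcal{F}_{N,M}$, which is manifestly totally symmetric.

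The main obstacle I expect is bookkeeping, not conceptual. The delicate point is handling the one-sided projections and the expansions of $1/(p-q)$, $\log((z_2-z_1)/z_2)$ consistently with the conventions fixed after \eqref{WZ2.14} (expand in the region $|z_1|<|z_2|$, resp. $|p|>|q|$), so that residues are taken on the correct side and the boundary contours on $|z|=1$ are legitimate (this is where conditions (M2)--(M3) — nonvanishing of $\zeta'$, $l'$ and the winding-number hypothesis — are used to ensure all the contour integrals and inverse functions make sense). A second subtlety is that the three pieces of $\mathcal{F}_{N,M}$ individually produce terms outside the ranges of the index sets (e.g. the double integral naively generates $t^0$-dependence or terms with indices beyond $N-1$ or $M$), and one must check these spurious contributions cancel against each other or vanish by the projection cutoffs $(\cdot)_{\geq -N}$, $(\cdot)_{\leq M-1}$ built into the multiplication formula \eqref{WZ2.51}. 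I would organize the verification as a table of the nine correlator types, and in each row exhibit the matching term from $\p^3\mathcal{F}_{N,M}$; the genuinely new computation is only the $t$-$t$-$t$ correlator, where the quadratic term $(\tfrac12 t^{-1}-\hat h^M)\sum t^i t^{-i-1}$ is needed to account for the $-l'(z)$ and the $\Pi$-corrections, everything else being a rearrangement of contour integrals.
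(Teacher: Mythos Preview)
Your overall strategy---differentiate the three summands of $\mathcal F_{N,M}$ using Lemma~\ref{thm-dzhh} and match against the nine correlator families of the preceding lemma---is exactly what the paper does. The paper introduces the shorthand $f^*(z)=zf'(z)$ and the primitives $Y_i(z)=\zeta^{i+1}/(i+1)$ (with $Y_{-1}=\log\zeta$), then integrates by parts against the kernel $\log((z_2-z_1)/z_2)$ to carry out the $t\,t\,t$ and $t\,t\,\hat h$ cases in full, declaring the rest ``similar''. So at the level of plan you are aligned with the paper.

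Where your proposal goes wrong is in the bookkeeping for correlators with two or more indices in $\bm h\cup\hat{\bm h}$. The integrand of the double-contour term $\tilde{\mathcal F}$ is $\tfrac12\zeta(z_1)\zeta(z_2)-\zeta(z_1)l(z_2)+l(z_1)\zeta(z_2)$, which is \emph{linear} in $l$; since $\p_{h^j}$ and $\p_{\hat h^k}$ act only on $l$ (Lemma~\ref{thm-dzhh}), two such derivatives already kill $\tilde{\mathcal F}$. The middle term $(\tfrac12 t^{-1}-\hat h^M)\sum t^it^{-i-1}$ is likewise annihilated. Hence for $hhh$, $\hat h\hat h\hat h$, $hh\hat h$, $\hat h\hat h h$, $hht$, $\hat h\hat h t$ the third derivatives of $\mathcal F_{N,M}$ reduce entirely to those of $F_{N,M}$ (and in the last two cases to zero). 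So the ``boundary'' $\oint_{|z|=1}$ pieces in the lemma's formulas for these cases cannot come from the double-contour term as you assert; they must match (or combine with the displayed residue pieces to match) the $-(\res_\infty+\res_0)\,\p_u l\,\p_v l\,\p_w l/(z^2 l')$ of \eqref{poten03}. Correspondingly, your claim that ``$F_{N,M}$'s third derivatives vanish when the index types are mismatched'' is false: for instance $\p^3 F_{N,M}/\p h^{j_1}\p h^{j_2}\p\hat h^M=\eta_{h^{j_1}h^{j_2}}$ by the unity property. The identities one actually needs in these cases---rewriting the lemma's $\oint_{|z|=1}$ expressions as the $F_{N,M}$ residues---are not the mechanical rearrangements you describe, and your plan as written would stall there.
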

\begin{proof}
In this proof we use the notation $f^*(z)=z f'(z)$ for a
differentiable function $f(z)$. If $f(z)$ and $g(z)$ are holomorphic
(single-valued unless otherwise stated) on a neighborhood of
$|z|=1$, then one has
\[
f^*(z)_{\ge0}=f^*(z)_{>0}=z f'(z)_{>0}, \quad
f^*(z)_{\le0}=f^*(z)_{<0}=z f'(z)_{<0}
\]
and the following formula of integration by parts:
\begin{align}\label{}
\frac{1}{2\pi \bm{i}}\oint_{|z|=1} f^*(z)g(z)\frac{d z}{z}=
-\frac{1}{2\pi \bm{i}}\oint_{|z|=1} f(z)g^*(z)\frac{d z}{z}.
\end{align}

Let
\begin{equation}\label{}
Y_i(z)=\left\{\begin{array}{cl}
         \dfrac{\zeta(z)^{i+1}}{i+1}, & i\in\mathbb{Z}\setminus\{-1\}; \\ \\
         \log\zeta(z), & i=-1.
       \end{array}\right.
\end{equation}
Clearly all $Y_i(z)$ and $Y_i^*(z)$ are holomorphic at $|z|=1$
except $Y_{-1}(z)$ being multi-valued, hence one has
\[
\frac{1}{2\pi \bm{i}}\oint_{|z|=1}Y_i^*(z)\frac{d
z}{z}=\frac{\delta_{i,-1}}{2\pi
\bm{i}}\oint_{|z|=1}\frac{\zeta'(z)}{\zeta(z)}d z=\delta_{i,-1}.
\]
By using Lemma~\ref{thm-dzhh}, it is easy to see
\begin{align}
&\frac{\p\zeta(z)}{\p t^i}=-z\zeta'(z)\zeta(z)^i=-Y_i^*(z), \\
&\frac{\p^2\zeta(z)}{\p t^{i_1}\p t^{i_2}}=Y_{i_1+i_2}^{**}(z),
\quad \frac{\p^3\zeta(z)}{\p t^{i_1}\p t^{i_2}\p
t^{i_3}}=-Y_{i_1+i_2+i_3}^{***}(z).
\end{align}

We want to compute the third order derivatives of $\mathcal{F}$ with
respect to $\bm{t}\cup\bm{h}\cup\hat{\bm{h}}$. To this end, let
$\tilde{\mathcal{F}}$ denote the integral part on the right hand
side of \eqref{cFMN}, then we have
\begin{align}
&\frac{\p^3\tilde{\mathcal{F}}}{\p t^{i_1}\,\p t^{i_2}\,\p
t^{i_3}}\nn\\
 =&\frac{1}{(2\pi \bm{i})^2}\oint\oint_{|z_1|<|z_2|}
 \bigg(-\frac{1}{2}Y_{i_1+i_2+i_3}^{***}(z_1)\zeta(z_2)-
 \frac{1}{2}\zeta(z_1)Y_{i_1+i_2+i_3}^{***}(z_2) \nn\\
 &\qquad -\frac{1}{2}\sum_{\mathrm{c.p.}(i_1,i_2,i_3)}\left(
 Y_{i_1+i_2}^{**}(z_1)Y_{i_3}^{*}(z_2)+Y_{i_1}^{*}(z_1)Y_{i_2+i_3}^{**}(z_2)
 \right) \nn\\
 &\qquad +Y_{i_1+i_2+i_3}^{***}(z_1)l(z_2)-l(z_1)Y_{i_1+i_2+i_3}^{***}(z_2)\bigg)
 \log\left(\frac{z_2-z_1}{z_2}\right)\frac{dz_1}{z_1}\frac{dz_2}{z_2} \nn\\
=&\frac{1}{(2\pi \bm{i})^2}\oint\oint_{|z_1|<|z_2|}
 \bigg(-\frac{1}{2}Y_{i_1+i_2+i_3}^{**}(z_1)\zeta(z_2)+
 \frac{1}{2}\zeta(z_1)Y_{i_1+i_2+i_3}^{**}(z_2) \nn\\
 &\qquad -\frac{1}{2}\sum_{\mathrm{c.p.}(i_1,i_2,i_3)}\left(
 Y_{i_1+i_2}^{*}(z_1)Y_{i_3}^{*}(z_2)-Y_{i_1}^{*}(z_1)Y_{i_2+i_3}^{*}(z_2)
 \right) \nn\\
 &\qquad +Y_{i_1+i_2+i_3}^{**}(z_1)l(z_2)+l(z_1)Y_{i_1+i_2+i_3}^{**}(z_2)\bigg)\frac{z_1}{z_2-z_1}
 \frac{dz_1}{z_1}\frac{dz_2}{z_2} \nn\\
=&\frac{1}{2\pi \bm{i}}\oint
 \bigg(-\frac{1}{2}Y_{i_1+i_2+i_3}^{**}(z_2)_{<0}\zeta(z_2)+
 \frac{1}{2}\zeta(z_2)_{<0}Y_{i_1+i_2+i_3}^{**}(z_2) \nn\\
 &\qquad -\frac{1}{2}\sum_{\mathrm{c.p.}(i_1,i_2,i_3)}\left(
 Y_{i_1+i_2}^{*}(z_2)_{<0}Y_{i_3}^{*}(z_2)-Y_{i_1}^{*}(z_2)_{<0}Y_{i_2+i_3}^{*}(z_2)
 \right) \nn\\
 &\qquad +Y_{i_1+i_2+i_3}^{**}(z_2)_{<0}l(z_2)+l(z_2)_{<0}Y_{i_1+i_2+i_3}^{**}(z_2)\bigg)
 \frac{dz_2}{z_2} \nn\\
=&\frac{1}{2\pi \bm{i}}\oint
 \bigg(Y_{i_1+i_2+i_3}^{*}(z)\left(\frac{1}{2}\zeta^*(z)_{>0}-
 \frac{1}{2}\zeta^*(z)_{<0}-l^*(z)_{>0}-l^*(z)_{<0} \right) \nn\\
 &\qquad -\frac{1}{2}\sum_{\mathrm{c.p.}(i_1,i_2,i_3)}
 Y_{i_1+i_2}^{*}(z)(Y_{i_3}^{*}(z)_{>0}-Y_{i_3}^{*}(z)_{<0}) \bigg)
 \frac{d z}{z} \nn\\
=&\frac{1}{2\pi \bm{i}}\oint
 \bigg(Y_{i_1+i_2+i_3}^{*}(z)\left(\frac{1}{2}\Pi\zeta^*(z)-l^*(z) \right) \nn\\
 &\qquad -\frac{1}{2}\sum_{\mathrm{c.p.}(i_1,i_2,i_3)}
 Y_{i_1+i_2}^{*}(z) z\Pi\,Y_{i_3}'(z) \bigg)
 \frac{d z}{z}  \nn\\
 &-\frac{1}{2}\sum_{\mathrm{c.p.}(i_1,i_2,i_3)}\frac{1}{2\pi \bm{i}}\oint
 Y_{i_1+i_2}^{*}(z)\frac{d z}{z}\cdot \frac{1}{2\pi \bm{i}}\oint Y_{i_3}^{*}(z)\frac{d
 z}{z}\nn\\
=&<\p_{t^{i_1}}\cdot \p_{t^{i_2}},\p_{t^{i_3}}>
-\frac{1}{2}\sum_{\mathrm{c.p.}(i_1,i_2,i_3)}\delta_{i_1+i_2,-1}\delta_{i_3,-1},
\end{align}
where ``c.p.'' means ``cyclic permutation'', in the second and the
fourth equalities the formula of integration by parts is used. Hence
in consideration of $\p_{t^i}F_{N,M}=0$ we obtain
\begin{align}\label{}
\frac{\p^3\mathcal{F}_{N,M} }{\p t^{i_1}\,\p t^{i_2}\,\p
t^{i_3}}=&\frac{\p^3\tilde{\mathcal{F}} }{\p t^{i_1}\,\p t^{i_2}\,\p
t^{i_3}}+\frac{1}{2}\sum_{\mathrm{c.p.}(i_1,i_2,i_3)}\delta_{i_1+i_2,-1}\delta_{i_3,-1}
\nn\\
=&c(\p_{t^{i_1}}, \p_{t^{i_2}}, \p_{t^{i_3}}).
\end{align}
In the same way,
\begin{align}
&\frac{\p^3\tilde{\mathcal{F}}}{\p t^{i_1}\,\p t^{i_2}\,\p
\hat{h}^{k_3}}\nn\\
 =&\frac{1}{(2\pi \bm{i})^2}\oint\oint_{|z_1|<|z_2|}
 \bigg(Y_{i_1+i_2}^{**}(z_1)(z_2\hat{\chi}'(z_2)\hat{\chi}(z_2)^{M-k_3-1})_{\le0}
 \nn\\
 &\qquad
 -(z_1\hat{\chi}'(z_1)\hat{\chi}(z_1)^{M-k_3-1})_{\le0}Y_{i_1+i_2}^{**}(z_2)\bigg)
 \log\left(\frac{z_2-z_1}{z_2}\right)\frac{dz_1}{z_1}\frac{dz_2}{z_2}
 \nn\\
 =&\frac{1}{(2\pi \bm{i})^2}\oint\oint_{|z_1|<|z_2|}
 \bigg( Y_{i_1+i_2}^{*}(z_1)(z_2\hat{\chi}'(z_2)\hat{\chi}(z_2)^{M-k_3-1})_{\le0}
 \nn\\
 &\qquad
 +(z_1\hat{\chi}'(z_1)\hat{\chi}(z_1)^{M-k_3-1})_{\le0}Y_{i_1+i_2}^{*}(z_2)\bigg)\frac{z_1}{z_2-z_1}
 \frac{dz_1}{z_1}\frac{dz_2}{z_2} \nn\\
 =&\frac{1}{2\pi \bm{i}}\oint
 \bigg( Y_{i_1+i_2}^{*}(z_2)_{<0}(z_2\hat{\chi}'(z_2)\hat{\chi}(z_2)^{M-k_3-1})_{\le0}
  \nn\\
 &\qquad
 +(z_2\hat{\chi}'(z_2)\hat{\chi}(z_2)^{M-k_3-1})_{<0}Y_{i_1+i_2}^{*}(z_2)\bigg)\frac{dz_2}{z_2}
 \nn\\
 =&\frac{1}{2\pi \bm{i}}\oint
 Y_{i_1+i_2}^{*}(z)_{>0}(z\hat{\chi}'(z)\hat{\chi}(z)^{M-k_3-1})_{<0}\frac{d
 z}{z}
\nn\\
=&\frac{1}{2\pi \bm{i}}\oint
 Y_{i_1+i_2}^{*}(z)_{\ge0}(z\hat{\chi}'(z)\hat{\chi}(z)^{M-k_3-1})_{\le0}\frac{d
 z}{z}
 \nn\\
 &\qquad
 -\frac{1}{2\pi \bm{i}}\oint
 Y_{i_1+i_2}^{*}(z)\frac{d
 z}{z}\cdot  \frac{1}{2\pi \bm{i}}\oint z\hat{\chi}'(z)\hat{\chi}(z)^{M-k_3-1}\frac{d
 z}{z}
\nn\\
=&<\p_{t^{i_1}}\cdot \p_{t^{i_2}},\p_{\hat{h}^{k_3}}>
+\delta_{i_1+i_2,-1}\delta_{k_3,M},
\end{align}
which leads to
\[
\frac{\p^3\mathcal{F}_{N,M} }{\p t^{i_1}\,\p t^{i_2}\,\p
\hat{h}^{k_3}}=c(\p_{t^{i_1}}, \p_{t^{i_2}}, \p_{\hat{h}^{k_3}}).
\]
The other cases are similar. Therefore the proposition is proved.
 \end{proof}


\begin{prop}\label{thm-FF}
The function $\mathcal{F}_{N,M}$ can be written in a coordinate free
way as
\begin{align}
\mathcal{F}_{N,M}=&\frac{1}{(2\pi
\bm{i})^2}\oint\oint_{|z_1|<|z_2|}\left(\frac{1}{2}\zeta(z_1)\zeta(z_2)-\zeta(z_1)l(z_2)
     +l(z_1)\zeta(z_2)\right)\times\nn\\
     &\times\, {\log\left(\frac{z_2-z_1}{z_2}\right)}
     \frac{dz_1}{z_1}\frac{dz_2}{z_2}
    -\frac{1}{(2\pi
\bm{i})^2}\oint_{|z|=1} \left(\frac{1}{2}\zeta(z)+l(z)\right)\frac{d
z}{z}\times \nn\\
&\times\,
\oint_{|z|=1}\zeta(z)\left(\log\frac{\zeta(z)}{z}-1\right)\frac{d
z}{z} +F_{N,M}. \label{FMN}
\end{align}
\end{prop}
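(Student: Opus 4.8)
The plan is to reduce the coordinate-free formula \eqref{FMN} to the coordinate expression \eqref{cFMN}, which has already been shown to reproduce the $3$-tensor $c$. The two formulas share the same double contour integral and the same summand $F_{N,M}$, so it suffices to verify that the remaining pieces coincide, i.e.\ that
\[
\Big(\tfrac12 t^{-1}-\hat{h}^M\Big)\sum_{i\ge0}t^i t^{-i-1}
=-\frac{1}{(2\pi\bm{i})^2}\oint_{|z|=1}\Big(\tfrac12\zeta(z)+l(z)\Big)\frac{dz}{z}\cdot\oint_{|z|=1}\zeta(z)\Big(\log\tfrac{\zeta(z)}{z}-1\Big)\frac{dz}{z}.
\]
I would treat the two factors on the right separately. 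The first factor is immediate from reading off constant terms of Laurent series: by \eqref{flatt} one has $\frac{1}{2\pi\bm{i}}\oint_{|z|=1}\zeta(z)\frac{dz}{z}=-t^{-1}$, while the constant term of $l(z)=a(z)_{>0}+\hat{a}(z)_{\le0}$ is $\hat{v}_0$, which is $\hat{h}^M$ by \eqref{flathh}; hence the first factor equals $-\big(\tfrac12 t^{-1}-\hat{h}^M\big)$, and it remains to show the second factor equals $\sum_{i\ge0}t^i t^{-i-1}$.

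The heart of the argument is the evaluation of $B:=\frac{1}{2\pi\bm{i}}\oint_{|z|=1}\zeta(z)\big(\log\tfrac{\zeta(z)}{z}-1\big)\frac{dz}{z}$. I would pass to the $\zeta$-plane via the biholomorphism $z\mapsto\zeta(z)$ of (M3), with inverse $z=z(\zeta):\Gamma\to S^1$, and set $\phi(\zeta):=\log\tfrac{z(\zeta)}{\zeta}$; the winding-number condition in (M3) makes $\phi$, and hence $\zeta\phi$, single-valued on a neighbourhood of $\Gamma$. Using $\frac{dz}{z}=d\log z(\zeta)=d\phi+\frac{d\zeta}{\zeta}$ and $\log\tfrac{\zeta(z)}{z}=-\phi(\zeta)$, I get
\[
B=\frac{1}{2\pi\bm{i}}\oint_{\Gamma}\zeta\,(-\phi-1)\Big(d\phi+\frac{d\zeta}{\zeta}\Big),
\]
in which $\oint_\Gamma d\zeta=0$ and $\oint_\Gamma(\phi\,d\zeta+\zeta\,d\phi)=\oint_\Gamma d(\zeta\phi)=0$ kill the constant term and the terms linear in $\phi$. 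One more integration by parts turns the surviving term into $B=\tfrac12\cdot\frac{1}{2\pi\bm{i}}\oint_\Gamma\phi(\zeta)^2\,d\zeta$. Finally I would insert the Riemann--Hilbert splitting fixed in the construction of the flat coordinates, namely the expansions \eqref{WZ2.34}: $\phi=\phi_++\phi_-$ with $\phi_+=-\log f_0(\zeta)=\sum_{i\ge0}t^i\zeta^i$ holomorphic inside $\Gamma$ and $\phi_-=\log\frac{f_\infty(\zeta)}{\zeta}=\sum_{i\ge1}t^{-i}\zeta^{-i}$ holomorphic outside $\Gamma$ with $\phi_-=O(1/\zeta)$ at $\infty$. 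Then $\oint_\Gamma\phi_+^2\,d\zeta=0$ by Cauchy's theorem, $\oint_\Gamma\phi_-^2\,d\zeta=0$ since $\phi_-^2$ has no $\zeta^{-1}$ term, and the residue of $2\phi_+\phi_-\,d\zeta$ equals $2\sum_{i\ge0}t^i t^{-i-1}$; hence $B=\sum_{i\ge0}t^i t^{-i-1}$, and multiplying by the first factor gives the desired identity.

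The delicate points I would be most careful about are the change of variables and the single-valuedness of $\phi$ and $\zeta\phi$ on $\Gamma$ (both resting on the winding-number normalization in (M3) and on the normalization $f_\infty(\zeta)=\zeta+O(1)$), together with attributing the two parts $\phi_\pm$ to the correct side of $\Gamma$ so that the residue bookkeeping is right. Beyond that I expect no real obstacle: the proposition is essentially a rewriting of the middle term of \eqref{cFMN}, and the rest is routine contour-integral manipulation.
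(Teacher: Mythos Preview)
Your proposal is correct and follows essentially the same route as the paper: both reduce the comparison of \eqref{FMN} and \eqref{cFMN} to the two identities $\frac{1}{2\pi\bm{i}}\oint(\tfrac12\zeta+l)\frac{dz}{z}=-\tfrac12 t^{-1}+\hat{h}^M$ and $\frac{1}{2\pi\bm{i}}\oint\zeta(\log\tfrac{\zeta}{z}-1)\frac{dz}{z}=\sum_{i\ge0}t^i t^{-i-1}$, and both establish the second by passing to the $\zeta$-plane, integrating by parts to reach $\tfrac{1}{2}\cdot\frac{1}{2\pi\bm{i}}\oint_\Gamma\phi^2\,d\zeta$ with $\phi=\log\tfrac{z(\zeta)}{\zeta}$, and then reading off the residue from the Riemann--Hilbert splitting $\phi=\phi_++\phi_-$. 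The only cosmetic difference is that the paper runs the chain of equalities starting from the series side and mixes $z$- and $\zeta$-coordinates in the intermediate step, whereas you go from $B$ forward and stay in the $\zeta$-plane throughout.
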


\begin{proof}
According to \eqref{flatt}--\eqref{flath} one has
\[
\frac{1}{2\pi \bm{i}}\oint
\left(\frac{1}{2}\zeta(z)+l(z)\right)\frac{d
z}{z}=-\frac{1}{2}t^{-1}+\hat{h}^M.
\]
Comparing \eqref{FMN} with \eqref{cFMN}, we only need to show
\begin{equation}\label{}
\frac{1}{2\pi
\bm{i}}\oint\zeta(z)\left(\log\frac{\zeta(z)}{z}-1\right)\frac{d
z}{z}=\sum_{i\ge0}t^i t^{-i-1}.
\end{equation}
Indeed, starting from the right hand side,
\begin{align}\label{}
\mathrm{r.h.s.}=&\frac{1}{2\pi
\bm{i}}\oint_\Gamma\frac{1}{2}\left(\log\frac{z(\zeta)}{\zeta}\right)^2 d\zeta \nn\\
 = & -\frac{1}{2\pi
\bm{i}}\oint_{|z|=1}\zeta(z)\log\frac{z}{\zeta(z)}
\left(\frac{1}{z}-\frac{\zeta'(z)}{\zeta(z)}\right)d z \nn\\
=&\frac{1}{2\pi
\bm{i}}\oint_{|z|=1}\zeta(z)\log\frac{\zeta(z)}{z}\frac{d z}{z}+
\frac{1}{2\pi \bm{i}}\oint_\Gamma \log\frac{z(\zeta)}{\zeta}d
\zeta \nn\\
=&\frac{1}{2\pi
\bm{i}}\oint_{|z|=1}\zeta(z)\log\frac{\zeta(z)}{z}\frac{d z}{z}+
t^{-1}=\mathrm{l.h.s.}
\end{align}
Thus the proposition is proved.
\end{proof}


\subsection{The Euler vector field}

We are to fix a Euler vector field on $\cM_{N,M}$ and show the
quasi-homogeneity property of $\mathcal{F}_{N,M}$, which will imply
that $\cM_{N,M}$ is really a Frobenius manifold.

Let us assign a degree to each of the flat coordinates:
 \begin{align}\label{degree}
   &\deg\,t^i=-i,\quad i\in\mathZ\setminus\{0\}; \qquad \deg\,e^{t^0}=\frac{1}{N}-1; \\
   & \deg\,h^j=\frac{j}{N},\quad
   1\le j\le N-1; \\
   &  \deg\,e^{\hat{h}^0}=1+\frac{M}{N}; \quad
   \deg\,\hat{h}^k=\frac{k}{M}, \quad 1\le k\le M \label{degree3}
 \end{align}
and introduce the following vector field
   \begin{align}
    \mathcal{E}_{N,M}= & -\sum_{i\in\mathZ\setminus\{0\}} i\, t^i\frac{\p}{\p t^i}-\frac{N-1}{N}\frac{\p}{\p
t^0}+\sum_{j=1}^{N-1}\frac{j}{N}h^j\frac{\p}{\p h^j}+ \nn\\
&\quad
     +\sum_{k=1}^M\frac{k}{M}\hat{h}^k\frac{\p}{\p\hat{h}^k}
     +\Big(1+\frac{M}{N}\Big)\frac{\p}{\p\hat{h}^0}. \label{euler2}
\end{align}

\begin{prop} \label{EF}
The function $\mathcal{F}_{N,M}$ satisfies
\begin{equation}\label{eurerF}
Lie_{
\mathcal{E}_{N,M}}\mathcal{F}_{M,M}=2\mathcal{F}_{N,M}+\hbox{quadratic
terms in flat coordinates}.
\end{equation}
\end{prop}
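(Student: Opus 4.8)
The plan is to compute $Lie_{\mathcal{E}_{N,M}}\mathcal{F}_{N,M}$ directly from the coordinate-free expression \eqref{FMN}, the only nontrivial ingredient being the action of $\mathcal{E}_{N,M}$ on the two generating functions. I claim that
\beq\label{pscaling}
Lie_{\mathcal{E}_{N,M}}\zeta(z)=\zeta(z)-\frac1N\,z\zeta'(z),\qquad
Lie_{\mathcal{E}_{N,M}}l(z)=l(z)-\frac1N\,z\,l'(z).
\eeq
The first identity follows from Lemma~\ref{thm-dzhh}: since $\zeta(z)$ does not depend on $\bm h\cup\hat{\bm h}$ and $\partial\zeta(z)/\partial t^i=-z\zeta'(z)\zeta(z)^i$, one gets $Lie_{\mathcal{E}_{N,M}}\zeta(z)=z\zeta'(z)\big(\tfrac{N-1}{N}+\sum_{i\ne0}i\,t^i\zeta(z)^i\big)$; by \eqref{WZ2.35} the sum equals $\big(\zeta\tfrac{d}{d\zeta}\log\tfrac{z(\zeta)}{\zeta}\big)\big|_{\zeta=\zeta(z)}$, which reduces to $\tfrac{\zeta(z)}{z\zeta'(z)}-1$ upon using $z'(\zeta)|_{\zeta=\zeta(z)}=1/\zeta'(z)$, and the claim follows. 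The second identity is proved in the same spirit from \eqref{lh}--\eqref{lhh} and the expansions \eqref{WZ2.38}--\eqref{WZ2.39}: splitting $\mathcal{E}_{N,M}l(z)$ into the contributions of the $h^j$ and of the $\hat h^k$ (including the logarithmic variable $\hat h^0$, whose coefficient in $\mathcal{E}_{N,M}$ is $\tfrac{N+M}{N}$), one recognizes $z\chi'(z)\chi(z)^{N-1}=\tfrac1N z\,l'(z)$ near $\infty$ and $z\hat\chi'(z)\hat\chi(z)^{M-1}=\tfrac1M z\,l'(z)$ near $0$, and uses $\tfrac1N\sum_{j\ge1}j h^j\chi^{-j}=\tfrac{\chi}{z\chi'}-1+O(\chi^{-N})$ and $\tfrac1M\sum_{k\ge1}k\hat h^k\hat\chi^{-k}=-\tfrac{\hat\chi}{z\hat\chi'}-1+O(\hat\chi^{-M-1})$ to collect $l(z)_{>0}-\tfrac1N(z\,l'(z))_{>0}$ and $l(z)_{\le0}-\tfrac1N(z\,l'(z))_{\le0}$, whose sum is $l(z)-\tfrac1N z\,l'(z)$. (Either identity can also be checked coefficient-wise against \eqref{euler2}, e.g.\ the $z^{-M}$-coefficient of $l(z)$ is $e^{\hat h^0}$ and $\mathcal{E}_{N,M}(e^{\hat h^0})=\tfrac{N+M}{N}e^{\hat h^0}$.)

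Granting \eqref{pscaling}, write $\mathcal{F}_{N,M}=\mathcal{A}+\mathcal{B}+F_{N,M}$ for the three summands of \eqref{FMN}. For $f,g\in\{\zeta,l\}$ one has $Lie_{\mathcal{E}_{N,M}}[f(z_1)g(z_2)]=2f(z_1)g(z_2)-\tfrac1N(z_1\partial_{z_1}+z_2\partial_{z_2})[f(z_1)g(z_2)]$, hence with $P(z_1,z_2)=\tfrac12\zeta(z_1)\zeta(z_2)-\zeta(z_1)l(z_2)+l(z_1)\zeta(z_2)$ we get $Lie_{\mathcal{E}_{N,M}}P=2P-\tfrac1N(z_1\partial_{z_1}+z_2\partial_{z_2})P$. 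Integrating $z_1\partial_{z_1}P$ by parts in $z_1$ against the kernel $\log\tfrac{z_2-z_1}{z_2}$ (single-valued since $|z_1|<|z_2|$ on the contours) replaces it by $\oint P(z_2-z_1)^{-1}dz_1$, and $z_2\partial_{z_2}P$ by $-\oint P(z_2-z_1)^{-1}dz_2+\oint P\,z_2^{-1}dz_2$; combining the three contributions and using $\tfrac1{z_2(z_2-z_1)}-\tfrac1{z_1(z_2-z_1)}=-\tfrac1{z_1z_2}$ one finds an exact cancellation, so $Lie_{\mathcal{E}_{N,M}}\mathcal{A}=2\mathcal{A}$. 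For $\mathcal{B}=-\tfrac1{(2\pi\bm{i})^2}\oint(\tfrac12\zeta+l)\tfrac{dz}{z}\cdot\oint\zeta(\log\tfrac{\zeta}{z}-1)\tfrac{dz}{z}$: by \eqref{pscaling} the first factor is mapped to itself by $Lie_{\mathcal{E}_{N,M}}$ (its $-\tfrac1N z\partial_z$ part integrates to $0$), while $Lie_{\mathcal{E}_{N,M}}[\zeta(\log\tfrac{\zeta}{z}-1)]=\zeta(\log\tfrac{\zeta}{z}-1)-\tfrac1N z\partial_z[\zeta(\log\tfrac{\zeta}{z}-1)]+\tfrac{N-1}{N}\zeta$, so the second factor picks up the extra term $\tfrac{N-1}{N}\oint\zeta\tfrac{dz}{z}=-\tfrac{N-1}{N}\,2\pi\bm{i}\,t^{-1}$; thus $Lie_{\mathcal{E}_{N,M}}\mathcal{B}=2\mathcal{B}+\tfrac{N-1}{N}t^{-1}\big(\hat h^M-\tfrac12 t^{-1}\big)$. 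Finally $\partial_{t^i}F_{N,M}=0$ and the $\bm h\cup\hat{\bm h}$-part of $\mathcal{E}_{N,M}$ is the Euler vector field of the Frobenius manifold $M(\tilde{A}_{M+N-1};N)$ (compare \eqref{euler2} with \eqref{degree}--\eqref{degree3}), whence $Lie_{\mathcal{E}_{N,M}}F_{N,M}=2F_{N,M}+(\text{quadratic in }\bm h\cup\hat{\bm h})$ by \cite{DZ2}. Summing, $Lie_{\mathcal{E}_{N,M}}\mathcal{F}_{N,M}=2\mathcal{F}_{N,M}+(\text{quadratic terms})$, which is \eqref{eurerF}.

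The step I expect to be the real obstacle is the second identity in \eqref{pscaling}: unlike $\zeta(z)$, the function $l(z)$ depends on all of $\bm h\cup\hat{\bm h}$, so the computation of $\mathcal{E}_{N,M}l(z)$ must be organized separately near $\infty$ (via $\chi$) and near $0$ (via $\hat\chi$) together with the truncations $(\cdot)_{>0},(\cdot)_{\le0}$, carefully tracking the $O(\chi^{-N})$ and $O(\hat\chi^{-M-1})$ tails and the shift term $\tfrac{N+M}{N}\partial_{\hat h^0}$. One must also bear in mind that $\zeta(z)$ and $l(z)$ live only on an annulus about $|z|=1$ (with poles forming at the origin), so all the contour manipulations and integrations by parts above are to be performed with contours $|z_1|=r_1<r_2=|z_2|$ inside that annulus, as in the proof of Proposition~\ref{thm-Fc}. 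An alternative route is to verify the two quasi-homogeneity relations of (F3) directly from the Laurent representations \eqref{WZ2.25} and \eqref{WZ2.49} of the metric and the product using \eqref{pscaling}, note that $\nabla\nabla\mathcal{E}_{N,M}=0$ (immediate from \eqref{euler2}), and then deduce the quasi-homogeneity of $\mathcal{F}_{N,M}$ through Proposition~\ref{thm-Fc}.
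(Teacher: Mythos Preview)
Your argument is correct and rests on the same idea as the paper's, namely the quasi-homogeneity identities
\[
\Big(\mathcal{E}_{N,M}+\tfrac{z}{N}\p_z\Big)\zeta(z)=\zeta(z),\qquad
\Big(\mathcal{E}_{N,M}+\tfrac{z}{N}\p_z\Big)l(z)=l(z),
\]
which are precisely your \eqref{pscaling} and appear in the paper as equation~\eqref{WZ2.75}. The difference is only in presentation: the paper's proof is a single sentence---assign $\deg z=1/N$ on top of \eqref{degree}--\eqref{degree3}, observe that $\zeta(z)$ and $l(z)$ then have degree~$1$, and conclude that $\mathcal{F}_{N,M}$ in \eqref{FMN} has degree~$2$ modulo quadratic terms---whereas you unpack this by deriving \eqref{pscaling} from Lemma~\ref{thm-dzhh} and \eqref{WZ2.35}, \eqref{WZ2.38}--\eqref{WZ2.39}, and then computing $Lie_{\mathcal{E}_{N,M}}$ on each of the three summands of \eqref{FMN} via integration by parts. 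Your computations for $\mathcal{A}$ and $\mathcal{B}$ are correct (in particular the cancellation $\tfrac{1}{z_2(z_2-z_1)}-\tfrac{1}{z_1(z_2-z_1)}+\tfrac{1}{z_1z_2}=0$ and the extra quadratic term $\tfrac{N-1}{N}t^{-1}(\hat h^M-\tfrac12 t^{-1})$), and the appeal to \cite{DZ2} for $F_{N,M}$ is legitimate. What you gain is an explicit identification of the quadratic defect; what the paper gains is brevity. The concern you raise about organizing $\mathcal{E}_{N,M}l(z)$ via $\chi$ and $\hat\chi$ is real but not deep: the paper sidesteps it entirely by reading off \eqref{WZ2.75} from the equivalent form~\eqref{WZ2.74}, where it is immediate coefficient by coefficient.
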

\begin{proof}
If we assume $\deg\,z=1/N$ besides \eqref{degree}--\eqref{degree3},
then each of the functions $\zeta(z)$ and $l(z)$ are homogeneous of
degree $1$, hence $\mathcal{F}_{N,M}$ given in \eqref{FMN} has
degree $2$ modulo quadratic terms. The proposition is proved.
\end{proof}

Up to now,  we have shown that $\cM_{N,M}$ is a Frobenius manifold,
on which the potential is $\mathcal{F}_{N,M}$, the unity vector
filed is $\bm{e}=\p/\p\hat{h}^M$ and the Euler vector field is
$\mathcal{E}_{N,M}$.

From the proof of the above proposition, one also sees \beq
 \left(\mathcal{E}_{N,M}+ \frac{z}{N}\frac{\p}{\p z}\right) \vartheta(z)=\vartheta(z) \label{WZ2.75}\eeq
for $\vartheta(z) \in \{a(z), \hat{a}(z), \zeta(z), l(z)\}$. Hence
 $\mathcal{E}_{N,M}$ is equal to the following vector filed
\beq \mathcal{E}_{N,M}=\dsum_{r\leq N-1} \left(1-\frac{r}{N}\right)
v_r\frac{\p}{\p v_r} +\dsum_{s\geq -M} \left(1-\frac{s}{N}\right)
\hat{v}_s\frac{\p}{\p \hat{v}_s}. \label{WZ2.74}\eeq By using
 \eqref{WZ2.4}, one can represent  $\mathcal{E}_{N,M}$ to the form
 of Laurent series as
\beq \mathcal{E}_{N,M}= \left(a(z)-\frac{z}{N} a'(z),
\hat{a}(z)-\frac{z}{N}\hat{a}'(z)\right). \label{WZ2.72} \eeq We
remark that one can start from the formula \eqref{WZ2.72}, then
deduce \eqref{euler2} and \eqref{eurerF} with the help of relevant
result in \cite{DZ2}.

\subsection{Proof of Main Theorem}
According to Corollary~\ref{cor-FM}, Propositions~\ref{thm-inpd} and
\ref{thm-Fc}--\ref{EF}, one sees that $\mathcal{M}_{N,M}$ is a
Frobenius manifold with properties required by the Main Theorem in
Section~1. To prove the Main Theorem, we only need to show the
semisimplicity of $\mathcal{M}_{N,M}$.

With the same method as in \cite{CDM}, let
\begin{equation}
du(p)=\frac{\hat{a}'(p)}{\zeta'(p)}{d
a(p)}-\frac{a'(p)}{\zeta'(p)}{d \hat{a}(p)}, \quad p\in S^1
\end{equation}
which is a generating function for a basis of the cotangent space
$\mathT^*_{\bm{a}}\mathM_{N,M}$. By using \eqref{WZ2.14} and
\eqref{WZ2.49}, one can check
\begin{align}\label{}
&<du(p),du(q)>^*=f(p)\delta(p-q), \\
&du(p)\cdot du(q)=f(p)\delta(p-q)\,du(p),  \label{WZ2.79}
\end{align}
 where
\[
 f(p)=-p^2\dfrac{a'(p)\hat{a}'(p)}{\zeta'(p)}
\]  and
$\delta(p-q)=\sum_{k\in\mathZ}\left(p^{k}/q^{k+1}\right)$ is the
formal delta function such that
\[
\frac{1}{2\pi \bm{i}}\oint_{|q|=1} f(q)\delta(p-q) d q =f(p).
\]
The formula \eqref{WZ2.79} implies that at a generic points there is
no nilpotent elements in the Frobenius algebra
$T^*_{\bm{a}}\cM_{N,M}$, hence the Frobenius algebra
$T^*_{\bm{a}}\cM_{N,M}$ is semisimple and so is
$T_{\bm{a}}\cM_{N,M}$. That is to say, the Frobenius manifold
$\cM_{N,M}$ is semisimple. We thus complete the proof of
the Main Theorem. \qquad \qquad $\Box$

\subsection{The intersection form}

Now let us consider the intersection form of $\cM_{N,M}$, which is
closely related to the theory of integrable hierarchies. Being
analogous to finite-dimensional cases, we define the intersection
form on the cotangent space $\mathT^*_{\bm{a}}\mathM_{m,n}$ by
  \begin{align}\label{WZ2.80}
  \big(d\al(p),d\beta(q)\big)^*:=i_{\mathcal{E}_{N,M}}\big(d\al(p)\cdot d\beta(q)\big),
  \quad \al,\beta\in\big\{a,\hat{a}\big\}.
  \end{align}

\begin{prop}
For $ \al,\beta\in\big\{a,\hat{a}\big\}$ it holds that \beq
\big(d\al(p), d\beta(q)\big)^*=\frac{p\, q}{p-q} \Big(
\al'(p)B(\beta(q))-B(\al(p))\beta'(q)\Big),\label{WZ2.81}\eeq
where $B(\alpha(p))=\alpha(p)-\dfrac{p}{N}\alpha'(p)$.
\end{prop}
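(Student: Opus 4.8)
The plan is to contract the product formula \eqref{WZ2.49} directly against the Euler vector field in its Laurent-series form \eqref{WZ2.72}. The first step is to observe that the interior product of a $1$-form with a vector field is simply the evaluation pairing, so for $\gamma\in\{a,\hat a\}$ one has $i_{\mathcal{E}_{N,M}}\big(d\gamma(q)\big)=\la d\gamma(q),\mathcal{E}_{N,M}\ra$. Since \eqref{WZ2.72} expresses $\mathcal{E}_{N,M}$ as the pair $\big(a(z)-\frac{z}{N}a'(z),\,\hat a(z)-\frac{z}{N}\hat a'(z)\big)$, formula \eqref{daX} gives $\la da(q),\mathcal{E}_{N,M}\ra=a(q)-\frac{q}{N}a'(q)=B(a(q))$ and, in the same way, $\la d\hat a(q),\mathcal{E}_{N,M}\ra=\hat a(q)-\frac{q}{N}\hat a'(q)=B(\hat a(q))$. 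Thus $i_{\mathcal{E}_{N,M}}\big(d\gamma(q)\big)=B(\gamma(q))$ for both choices of $\gamma$.

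The second step is to note that the contraction $i_{\mathcal{E}_{N,M}}$ is $\mathcal{C}^\infty(\mathcal{M}_{N,M})$-linear in its $1$-form argument, hence it passes through the scalar coefficients in \eqref{WZ2.49} (these depend only on the base point and on the auxiliary parameters $p,q$, carrying the same $|p|>|q|$ expansion convention as in \eqref{rpgener01}--\eqref{rpgener02}). Combining this with the first step,
\begin{align*}
\big(d\al(p),d\bt(q)\big)^*
&=i_{\mathcal{E}_{N,M}}\big(d\al(p)\cdot d\bt(q)\big)\\
&=\frac{p\,q}{p-q}\Big(\al'(p)\,i_{\mathcal{E}_{N,M}}\big(d\bt(q)\big)-\bt'(q)\,i_{\mathcal{E}_{N,M}}\big(d\al(p)\big)\Big)\\
&=\frac{p\,q}{p-q}\Big(\al'(p)\,B(\bt(q))-B(\al(p))\,\bt'(q)\Big),
\end{align*}
which is precisely \eqref{WZ2.81}.

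This computation is short, and I do not expect a genuine obstacle. The only point requiring a little care is the convention attached to the factor $1/(p-q)$ in the mixed case $\al=a$, $\bt=\hat a$; but that convention is fixed once and for all by \eqref{rpgener01}--\eqref{rpgener02} and is unaffected by the $\mathcal{C}^\infty$-linear contraction, so it simply propagates from \eqref{WZ2.49} to \eqref{WZ2.81}. As an alternative consistency check one could instead expand $d\al(p)\cdot d\bt(q)$ in the basis $\{dv_i,d\hat v_j\}$ and contract with $\mathcal{E}_{N,M}$ in the form \eqref{WZ2.74}; this yields the same answer but is more laborious, so I would present the argument above.
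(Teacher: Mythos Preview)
Your proof is correct and follows essentially the same route as the paper: both contract the product \eqref{WZ2.49} against the Euler vector field, reducing the computation to $\la d\gamma(q),\mathcal{E}_{N,M}\ra=\gamma(q)-\frac{q}{N}\gamma'(q)=B(\gamma(q))$. The only cosmetic difference is that the paper cites \eqref{WZ2.75} to obtain this pairing whereas you use the Laurent form \eqref{WZ2.72} together with \eqref{daX}; these are equivalent, and the paper also notes the alternative expansion $\frac{p\,q}{p-q}\big(\al'(p)\bt(q)-\al(p)\bt'(q)\big)+\frac{p\,q}{N}\al'(p)\bt'(q)$.
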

\begin{proof} Substituting \eqref{WZ2.49} into \eqref{WZ2.80} and using \eqref{WZ2.75},  we
have
 \eqa
  \big(d\alpha(p),d\beta(q)\big)^*
  &=&\la\mathcal{E}_{N,M}, d\alpha(p) \cdot d \beta(q)\ra\nn\\
  &=& \frac{p\, q}{p-q}
  \left(\al'(p) \la \mathcal{E}_{N,M}, d\beta(q)\ra -\beta'(q)
  \la\mathcal{E}_{N,M}, d\al(p)\ra \right)\nn\\
  &=& \frac{p\, q}{p-q} \left(\al'(p) \left[\beta(q)-\frac{q}{N} \beta'(q)\right]
  -\beta'(q)\left[\alpha(p)-\frac{p}{N} \alpha'(p)\right]\right),\nn\\
  &=&\frac{p\, q}{p-q}
\left[ \al'(p)\beta(q) -\al(p)\beta'(q)\right] +\frac{p\, q}{N}
\al'(p) \beta'(q)\nn \eeqa which is exactly \eqref{WZ2.81}. The
proposition is proved.
\end{proof}

The intersection form, in the same way as for the flat metric
\eqref{WZ2.14}, induces a linear map \beq\label{WZ2.82} g:\
\mathT_{\bm{a}}^*\mathM_{N,M}\rightarrow \mathT_{\bm{a}}\mathM_{N,M}
\eeq  such that \beq \label{WZ2.83} \la
\bm{\om}_1,g(\bm{\om}_2)\ra=(\bm{\om}_1,\bm{\om}_2)^*\eeq for any
$\bm{\om}_1,\bm{\om}_2\in \mathT_{\bm{a}}^* \mathM_{N,M}$.

\begin{lem}
The linear map $g$ defined by \eqref{WZ2.82}--\eqref{WZ2.83} is a
bijection.
\end{lem}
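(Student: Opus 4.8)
The plan is to reduce the bijectivity of $g$ to that of $\eta$ (Lemma~\ref{lem2.3}) together with the invertibility of the Euler vector field in the tangent Frobenius algebra. The first observation is that $g$ factors as multiplication by $\mathcal{E}_{N,M}$ composed with $\eta$. Indeed, set $\bm{\epsilon}:=\eta^{-1}(\mathcal{E}_{N,M})\in\mathT^*_{\bm{a}}\mathM_{N,M}$; then for all covectors $\bm{\om}_1,\bm{\om}_2$,
\[
\la\bm{\om}_1,g(\bm{\om}_2)\ra
=i_{\mathcal{E}_{N,M}}\big(\bm{\om}_1\cdot\bm{\om}_2\big)
=\la\bm{\om}_1\cdot\bm{\om}_2,\eta(\bm{\epsilon})\ra
=<\bm{\om}_1\cdot\bm{\om}_2,\bm{\epsilon}>^*
=<\bm{\om}_1,\bm{\om}_2\cdot\bm{\epsilon}>^*
=\la\bm{\om}_1,\eta(\bm{\om}_2\cdot\bm{\epsilon})\ra,
\]
where I used \eqref{WZ2.80}, \eqref{WZ2.16}, and the invariance of $<\ ,\ >^*$ from Lemma~\ref{lem2.6}. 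Hence $g(\bm{\om})=\eta(\bm{\epsilon}\cdot\bm{\om})=\eta(\bm{\epsilon})\cdot\eta(\bm{\om})=\mathcal{E}_{N,M}\cdot\eta(\bm{\om})$, the middle step being the very definition of the product on $\mathT_{\bm{a}}\mathM_{N,M}$ (Corollary~\ref{cor-FM}). Since $\eta$ is a bijection, it suffices to show that $\mathcal{E}_{N,M}$ is an invertible element of the Frobenius algebra $\mathT_{\bm{a}}\mathM_{N,M}$.

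To see this I would use the semisimple description from the proof of the Main Theorem. There the generating function $du(p)$ of canonical idempotents satisfies $<du(p),du(q)>^*=f(p)\,\delta(p-q)$ and $du(p)\cdot du(q)=f(p)\,\delta(p-q)\,du(p)$, with $f(p)=-p^2a'(p)\hat{a}'(p)/\zeta'(p)$; moreover, a short computation from \eqref{WZ2.72} and \eqref{daX} gives
\[
i_{\mathcal{E}_{N,M}}du(p)
=\frac{\hat{a}'(p)}{\zeta'(p)}\Big(a(p)-\tfrac{p}{N}a'(p)\Big)
 -\frac{a'(p)}{\zeta'(p)}\Big(\hat{a}(p)-\tfrac{p}{N}\hat{a}'(p)\Big)
=\frac{a(p)\hat{a}'(p)-a'(p)\hat{a}(p)}{\zeta'(p)}=:\psi(p).
\]
In the idempotent decomposition this says that multiplication by $\mathcal{E}_{N,M}$ acts as multiplication of symbols by $\psi(p)$. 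By condition (M2) (see \eqref{M2}), both $a(p)\hat{a}'(p)-a'(p)\hat{a}(p)$ and $\zeta'(p)$ are nowhere zero on $|p|=1$, so $\psi(p)$ is holomorphic and nowhere zero in a neighbourhood of $S^1$, and then $1/\psi(p)$ is of the same type; multiplication by the element with symbol $1/\psi(p)$ is therefore a two-sided inverse of multiplication by $\mathcal{E}_{N,M}$. Thus $\mathcal{E}_{N,M}$ is invertible and $g$ is a bijection.

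The delicate point is the passage to the idempotent ``basis'': one has to verify that every element of $\mathT^*_{\bm{a}}\mathM_{N,M}=z^{-N+1}\mathH^+\times z^{M}\mathH^-$ has a unique representation $\frac{1}{2\pi\bm{i}}\oint w(p)\,du(p)\,\frac{dp}{p}$ over a suitable class of symbols $w$, and that this identification intertwines the cotangent product with pointwise multiplication of symbols --- exactly the ingredient implicit in the semisimplicity argument for the Main Theorem. If one wishes to bypass this, the same conclusion follows in the style of Lemma~\ref{lem2.3}: from \eqref{WZ2.81} and \eqref{gener} one writes an explicit Laurent-series formula for $g$, which is the analogue of \eqref{AWZ2.16} with the factors $a'(z)$, $\hat{a}'(z)$ replaced, in the appropriate slots, by $B(a(z))=a(z)-\tfrac{z}{N}a'(z)$, $B(\hat{a}(z))=\hat{a}(z)-\tfrac{z}{N}\hat{a}'(z)$, and then the inversion argument of Lemma~\ref{lem2.3} goes through verbatim, the only new nondegeneracy input being the nonvanishing of $a(p)\hat{a}'(p)-a'(p)\hat{a}(p)$ on $S^1$ --- again condition (M2). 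Either way, it is (M2) that makes $g$ a bijection at every point of $\cM_{N,M}$, which is the heart of the statement.
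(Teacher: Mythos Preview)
Your factorization $g(\bm{\om})=\mathcal{E}_{N,M}\cdot\eta(\bm{\om})$ is correct and is a genuinely different (and more conceptual) route than the paper's. The paper proceeds exactly as you sketch in your final paragraph: it writes out the explicit Laurent formula for $g$ and then inverts it directly. However, that inversion does \emph{not} go through ``verbatim'' from Lemma~\ref{lem2.3}. The paper forms the combination
\[
\Theta(z)=\frac{\hat{a}'(z)X(z)-a'(z)\hat{X}(z)}{z\,[a(z)\hat{a}'(z)-a'(z)\hat{a}(z)]}
\]
(which is where (M2) enters), checks $\Theta=(\om a')_{\ge0}-(\hat\om\hat a')_{<0}$, and then recovers $\om=\big(\Theta_{\ge0}/a'\big)_{>-N}$ and $\hat\om=-\big(\Theta_{<0}/\hat a'\big)_{\le M}$ using the expansions of $1/a'$ at $\infty$ and $1/\hat a'$ at $0$. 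This is analogous in spirit to Lemma~\ref{lem2.3} but neither the intermediate quantity nor the final recovery step is the same; the matrices $K_{N-1}$, $\hat K_{M+1}$ play no role here.

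Your primary argument --- invertibility of $\mathcal{E}_{N,M}$ via the idempotent family $du(p)$ --- has a genuine gap. The paper's semisimplicity argument only shows there are no nilpotents; in an \emph{infinite-dimensional} commutative algebra this does not by itself imply that a non-zero-divisor is a unit, nor does the relation $du(p)\cdot du(q)=f(p)\delta(p-q)\,du(p)$ establish that every element of $\mathT^*_{\bm a}\cM_{N,M}$ is uniquely an integral $\frac{1}{2\pi\bm i}\oint w(p)\,du(p)\,\frac{dp}{p}$ with the product intertwined with pointwise multiplication of symbols. So the existence of an inverse ``with symbol $1/\psi$'' inside $\mathT^*_{\bm a}\cM_{N,M}=z^{-N+1}\mathH^+\times z^{M}\mathH^-$ is asserted, not proved. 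If you want to keep the factorization approach, you must either (a) show directly that the map $\bm\om\mapsto\bm\epsilon\cdot\bm\om$ on $\mathT^*_{\bm a}\cM_{N,M}$ is bijective --- which, once you write it out via \eqref{WZ2.51}, amounts to essentially the same computation the paper performs --- or (b) prove the completeness/inversion statement for the $du(p)$ representation, which is nowhere in the paper.
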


\begin{proof} The proof is very similar with that of Lemma~\ref{lem2.3}.

On the one hand, according to \eqref{daX} one has \beq
  g(d\beta(q))=\Big((d\beta(q),da(z))^*,(d\beta(q),d\hat{a}(z))^*\Big),\quad
  \beta\in\{a,\hat{a}\}.\nn
\eeq Hence for arbitrary $\bm{\om}=(\om(z),\hat{\om}(z)) \in
\mathT_{\bm{a}}^* \mathM_{m,n}$, by using \eqref{gener} we have
\begin{align}\label{}
g(\bm{\om})(z)=&\frac{1}{2\pi\bm{i}}\oint_{|q|=1}
  \big[\om(q)g(da(q))+\hat{\om}(q)g(d\hat{a}(q))\big]\frac{d q}{q}\nn\\
=&\Big(z a'(z)\big[\omega(z) B(a(z))+\hat{\omega}(z)B(\hat{a}(z))\big]_{<0} \nn\\
&\quad  -
zB(a(z))\big[\omega(z)a'(z)+\hat{\omega}(z)\hat{a}'(z)\big]_{<0}\, ,
\nn\\
&\quad -z \hat{a}'(z)\big[\omega(z)
B(a(z))+\hat{\omega}(z){B}(\hat{a}(z))\big]_{\geq 0} \nn\\
&\quad +z B(\hat{a}(z))[\omega(z)a'(z)
+\hat{\omega}(z)\hat{a}'(z)]_{\geq 0} \Big)
\end{align}
which implies that $g$ is surjective.

On the other hand, given any $\bm{X}=(X(z),\hat{X}(z))\in
\mathT_{\bm{a}} \mathM_{N,M}$, we want to solve the equation
  \beq \bm{X}=g(\bm{\om}),\nn \eeq
i.e., \eqa
 X(z)&=&z a'(z)\big[\omega(z) B(a(z))+\hat{\omega}(z)B(\hat{a}(z))\big]_{<0}\nn\\
 &~&\qquad- zB(a(z)) \big[\omega(z)a'(z)+\hat{\omega}(z)\hat{a}'(z)\big]_{<0},\nn\\
\hat{X}(z)&=&-z \hat{a}'(z)\big[\omega(z) B(a(z))
+\hat{\omega}(z){B}(\hat{a}(z))\big]_{\geq 0}\nn\\
&~&\qquad+z B(\hat{a}(z))[\omega(z)a'(z)
+\hat{\omega}(z)\hat{a}'(z)]_{\geq 0}.\nn
 \nn\eeqa
Denote
\begin{equation}\label{}
\Theta(z)=\dfrac{\hat{a}'(z)X(z)-a'(z)\hat{X}(z)}{z[a(z)\hat{a}'(z)
-a'(z)\hat{a}(z)]}.
\end{equation}
It is straightforward to check \eqa \Theta(z)
=(\omega(z)a'(z))_{\geq 0}-(\hat{\omega}(z)\hat{a}'(z))_{<0}.
\label{Taom} \eeqa
Observe \beq \frac{1}{a'(z)}= \left(\frac{1}{a'(z)}\right)_{\leq
-N+1},\quad
\frac{1}{\hat{a}'(z)}=\left(\frac{1}{\hat{a}'(z)}\right)_{\geq
M+1}\nn\eeq then by using \eqref{Taom} we obtain
\begin{align}\label{omTa}
&\omega(z)=\omega(z)_{>-N}
=\left(\frac{1}{a'(z)}(\omega(z)a'(z))_{\geq 0}\right)_{>-N}
=\left(\frac{1}{a'(z)}\Theta(z)_{\geq 0}\right)_{>-N},\\
&\hat{\omega}(z)=\hat{\omega}(z)_{\leq M}
=\left(\frac{1}{\hat{a}'(z)}(\hat{\omega}(z)\hat{a}'(z))_{<0}\right)_{\leq
M} =-\left(\frac{1}{\hat{a}'(z)}\Theta(z)_{<0}\right)_{\leq M}.
\label{omhTa}
\end{align}
It follows that $g$ is injective. The lemma is proved.
\end{proof}

With the help of the bijection $g$, a bilinear form on the tangent
space $T_{\bm{a}}\mathM_{N,M}$ can be defined as
\begin{align}
  \big(\p_1,\p_2\big):=\la g^{-1}(\p_1),\p_2\ra=\big(g^{-1}(\p_1),g^{-1}(\p_2)\big)^*.
\end{align}
By using \eqref{omTa}--\eqref{omhTa}, a short computation leads to

\begin{prop}  For any $\p_1, \p_2\in T_{\bm{a}}\mathM_{N,M}$,
\begin{align}\label{WZ2.86}
  &\big(\p_1,\p_2\big)
  =\frac{1}{2\pi\bm{i}}\oint_{|z|=1}\frac{\left(\frac{\p_1 a(z)}{a'(z)}-
  \frac{\p_1 \hat{a}(z)}{\hat{a}'(z)}\right)\cdot
  \left(\frac{\p_2 a(z)}{a'(z)}- \frac{\p_2 \hat{a}(z)}{\hat{a}'(z)}\right)}
  {\frac{a(z)}{a'(z)}- \frac{\hat{a}(z)}{\hat{a}'(z)}}\frac{dz}{z^2}.
\end{align}
\end{prop}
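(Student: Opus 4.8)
The plan is to run the same argument as for the flat metric \eqref{WZ2.25}, with the bijection $g$ in the place of $\eta$. Write $\partial_i=(X_i(z),\hat X_i(z))=(\partial_i a(z),\partial_i\hat a(z))$ for $i=1,2$, and introduce the abbreviations
\[
 u_i(z)=\frac{\partial_i a(z)}{a'(z)}-\frac{\partial_i\hat a(z)}{\hat a'(z)},\qquad w(z)=\frac{a(z)}{a'(z)}-\frac{\hat a(z)}{\hat a'(z)}.
\]
Dividing numerator and denominator of the expression defining $\Theta(z)$ by $a'(z)\hat a'(z)$, one sees that the function $\Theta(z)$ attached to $\partial_1$ in \eqref{omTa}--\eqref{omhTa} is $\Theta_1(z)=u_1(z)/(z\,w(z))$, and accordingly the right-hand side of \eqref{WZ2.86} equals $\frac{1}{2\pi\bm{i}}\oint_{|z|=1}\Theta_1(z)\,u_2(z)\,\frac{dz}{z}$. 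Since $(\partial_1,\partial_2)=\la g^{-1}(\partial_1),\partial_2\ra$, writing $g^{-1}(\partial_1)=(\omega(z),\hat\omega(z))$ with $\omega,\hat\omega$ as in \eqref{omTa}--\eqref{omhTa}, it remains to evaluate $\frac{1}{2\pi\bm{i}}\oint_{|z|=1}[\omega(z)X_2(z)+\hat\omega(z)\hat X_2(z)]\frac{dz}{z}$.

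The first move is to erase the two outer truncations in \eqref{omTa}--\eqref{omhTa}. Recall $X_2(z)\in z^{N-1}\mathH^-$ and $\hat X_2(z)\in z^{-M}\mathH^+$, while $1/a'(z)=(1/a'(z))_{\le-N+1}$ and $1/\hat a'(z)=(1/\hat a'(z))_{\ge M+1}$ in the paper's expansion convention. Hence $(\Theta_1(z)_{\ge0}/a'(z))_{\le-N}\cdot X_2(z)$ involves only powers $z^k$ with $k\le-1$, and $(\Theta_1(z)_{<0}/\hat a'(z))_{>M}\cdot\hat X_2(z)$ only powers $z^k$ with $k\ge1$; since the operation $\frac{1}{2\pi\bm{i}}\oint_{|z|=1}(\,\cdot\,)\,\frac{dz}{z}$ picks out the coefficient of $z^0$, neither contributes, and the integral reduces to
\[
 (\partial_1,\partial_2)=\frac{1}{2\pi\bm{i}}\oint_{|z|=1}\left[\Theta_1(z)_{\ge0}\,\frac{X_2(z)}{a'(z)}-\Theta_1(z)_{<0}\,\frac{\hat X_2(z)}{\hat a'(z)}\right]\frac{dz}{z}.
\]

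What remains is a short ``substitute and discard'' argument driven by the same degree count. Using $X_2/a'=u_2+\hat X_2/\hat a'$, the integrand becomes $\Theta_1(z)_{\ge0}\,u_2(z)+(\Theta_1(z)_{\ge0}-\Theta_1(z)_{<0})\,\hat X_2(z)/\hat a'(z)$; since $\hat X_2/\hat a'$ has only powers $z^k$ with $k\ge1$, the summand $\Theta_1(z)_{\ge0}\cdot\hat X_2(z)/\hat a'(z)$ again drops out of the contour integral. In the remaining term $-\Theta_1(z)_{<0}\,\hat X_2(z)/\hat a'(z)$ one now substitutes $\hat X_2/\hat a'=X_2/a'-u_2$, obtaining $-\Theta_1(z)_{<0}\,X_2(z)/a'(z)+\Theta_1(z)_{<0}\,u_2(z)$; since $X_2/a'$ has only powers $z^k$ with $k\le0$, the summand $\Theta_1(z)_{<0}\cdot X_2(z)/a'(z)$ likewise drops out. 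What survives is exactly $(\Theta_1(z)_{\ge0}+\Theta_1(z)_{<0})\,u_2(z)=\Theta_1(z)\,u_2(z)$, so
\[
 (\partial_1,\partial_2)=\frac{1}{2\pi\bm{i}}\oint_{|z|=1}\Theta_1(z)\,u_2(z)\,\frac{dz}{z}=\frac{1}{2\pi\bm{i}}\oint_{|z|=1}\frac{u_1(z)\,u_2(z)}{w(z)}\,\frac{dz}{z^2},
\]
which is \eqref{WZ2.86}. The manifest symmetry of the final expression under $\partial_1\leftrightarrow\partial_2$ is a reassuring check, since $(\,\cdot\,,\,\cdot\,)$ is symmetric by construction.

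No step here is genuinely hard; the only thing that needs care is the bookkeeping of the expansions of $1/a'(z)$ and $1/\hat a'(z)$ and of the degree ranges that the inclusions $X_2\in z^{N-1}\mathH^-$, $\hat X_2\in z^{-M}\mathH^+$ impose on the various products, which is exactly what legitimizes successively deleting the projectors $(\,\cdot\,)_{>-N}$, $(\,\cdot\,)_{\le M}$ and the truncations $(\,\cdot\,)_{\ge0}$, $(\,\cdot\,)_{<0}$ applied to $\Theta_1$ under the contour integral. Beyond that, the proof uses only the bilinearity of $(\,\cdot\,,\,\cdot\,)$ and the explicit formulas \eqref{omTa}--\eqref{omhTa} for $g^{-1}$ that have already been established.
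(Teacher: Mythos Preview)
Your proof is correct and follows precisely the approach the paper indicates: the paper merely states that ``a short computation'' using \eqref{omTa}--\eqref{omhTa} yields the proposition, and your argument is exactly that computation carried out in full. The degree-counting that lets you drop the outer projectors and then successively eliminate the $\Theta_1(z)_{\ge0}$ and $\Theta_1(z)_{<0}$ truncations under the contour integral is all valid, and the identification $\Theta_1=u_1/(z\,w)$ is immediate from the definition of $\Theta$.
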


\subsection{Reduction to finite-dimensional Frobenius manifolds}
At the end of this section, we want to discuss the connection
between the infinite-dimensional Frobenius manifolds
$\mathcal{M}_{N,M}$ and Frobenius manifolds of finite dimension.

Recall that $\mathcal{M}_{N,M}$ is parameterized by two functions
$\zeta(z)$ and $l(z)$. If $\zeta(z)$ is fixed (hence all $t^i$
fixed), then we get a submanifold of $\cM_{N,M}$, which is
$(M+N)$-dimensional and has coordinate $\bm{h}\cup\hat{\bm{h}}$. For
instance, given an arbitrary constant $\ep\ne0$, we let
$\zeta(z)=\ep z$ and have the following submanifold
\begin{equation}\label{}
\mathcal{N}_{N,M}^\ep=\left\{(a(z),\hat{a}(z))=(l(z),l(z)-\ep z) \mid
l(z)=z^N+\sum_{k=-M}^{N-1} v_k z^k\right\}\subset\mathcal{M}_{N,M}.
\end{equation}
Here for simplifying notations we write $\mathcal{N}^\ep$ instead of
 $\mathcal{N}^\ep_{N,M}$. Clearly $\mathcal{N}^\ep$ is a flat submanifold with a metric
induced by the metric \eqref{WZ2.25} on $\cM_{N,M}$. It follows from
\eqref{FMN} that
\begin{equation}\label{}
\mathcal{F}_{N,M}|_{\mathcal{N}^\ep}=F_{N,M}-\ep\,\res_{z=0}l(z).
\end{equation}
However, for general $N$ and $M$ the induced $3$-point correlator
functions do not satisfy the WDVV associativity equations
\eqref{WDVV1} (errors come from the first term on the right hand
side of \eqref{hhat3}). Hence $\mathcal{N}^\ep$ is not a Frobenius
submanifold in Strachan's sense \cite{St}.

Let $\ep\to0$, which means that the curve $\Gamma=\zeta(z)|_{S^1}$
shrinks to a point, then
\begin{equation}\label{}
\mathcal{F}_{N,M}|_{\mathcal{N}^\ep}\to F_{N,M},
\end{equation}
and we obtain a Frobenius manifold structure $\mathcal{N}^0$ that is
given by the potential $F_{N,M}$ together with the unity vector
field $e=\p/\p\hat{h}^M$ and Euler vector field
\[
E=\sum_{j=1}^{N-1}\frac{j}{N}h^j\frac{\p}{\p h^j}+
     \sum_{k=1}^M\frac{k}{M}\hat{h}^k\frac{\p}{\p\hat{h}^k}
     +\frac{N+M}{N}\frac{\p}{\p\hat{h}^0}.
\]
Namely, we recover the structure of Frobenius manifold
$M(\tilde{A}_{M+N-1};N)$ on the orbit space of the extended affine
Weyl group $\widetilde{W}^{(N)}(A_{M+N-1})$, see \cite{DZ2}.
Strictly speaking, $\mathcal{N}^0$ does not belong to $\cM_{N,M}$
but to certain compaction of it (we remark that the abused notion
``submanifold'' in \cite{WX2} can also be understood in this way).

In fact, the above approach that reduces the Frobenius manifold
$\cM_{N,M}$ to $\mathcal{N}^0$ is natural from the viewpoint of
Hamiltonian structures. As to be seen in the following section, on
$\cM_{N,M}$ it is associated with the bi-hamiltonian structure for
the dispersionless Toda lattice hierarchy, which is reduced to the
bi-hamiltonian structure (associated to $M(\tilde{A}_{M+N-1};N)$ )
for the dispersionless extended bigraded Toda hierarchy under the
constraint $a(z)=\hat{a}(z)$, see \cite{Wu} for details.

\section{Relation to dispersionless Toda lattice hierarchy}

Let us study how the infinite-dimensional Frobenius manifolds
constructed above are related to the Toda lattice hierarchy.

In this section we work on the loop space $\mathL\cM_{N,M}$
of smooth maps from $S^1$ to $\cM_{N,M}$. More precisely, the space
$\mathL\cM_{N,M}$ consists of points $\bm{a}=(a(z,x),\hat{a}(z,x))$
of the form \eqref{WZ2.2} with coefficients $v_i$ and $\hat{v}_j$
being smooth functions of $x\in S^1$ constrained by the conditions
(M1)--(M3). The tangent and the cotangent spaces of
$\mathL\cM_{N,M}$ are identified with spaces of Laurent series in a
natural way as \eqref{TaM} and \eqref{WZ2.5}, respectively. Thereby
the pairing between a covector
$\bm{\om}=(\om(z,x),\hat{\om}(z,x))\in T^*_{\bm{a}}\mathL\cM_{N,M}$
and a vector $\bm{X}=(X(z,x),\hat{X}(z,x))\in
T_{\bm{a}}\mathL\cM_{N,M}$ reads (cf. \eqref{WZ2.6})
\begin{equation}\label{}
\la {\bm{\om},\bf X}\ra=\frac{1}{2\pi \bm{i}}\oint_{x\in
S^1}\oint_{|z|=1}\big[\om(z,x)X(z,x)
+\hat{\om}(z,x)\hat{X}(z,x)\big]\dfrac{d z}{z}d x.
\end{equation}
To avoid lengthy notations, we will write $\al(z)$ instead of
$\al(z,x)$ whenever no confusion would happen.

Let
\begin{align}\label{rela}
&\lambda(z)=a(z)^{1/N}=z+\frac{1}{N}v_{N-1}+O(z^{-1}), \quad
|z|\to\infty; \\
& \hat{\lambda}(z)=\hat{a}(z)^{{1}/{M}}=\hat{v}_{-M}^{1/M}z^{-1}
+\frac{\hat{v}_{-M+1}}{M\hat{v}_{-M}^{(M-1)/M}}+O(z), \quad |z|\to0.
\end{align}
A hierarchy of evolutionary equations on the loop space $\mathL
\mathM_{N,M}$ is defined as follows:
\begin{align}
  &\frac{\p a(z)}{\p s_n}=\{(\lambda(z)^n)_{\ge0},a(z)\},\quad
  \frac{\p\hat{a}(z)}{\p s_n}=\{(\lambda(z)^n)_{\ge0},\hat{a}(z)\},\label{disphir01}\\
  &\frac{\p a(z)}{\p
  \hat{s}_n}=\{-(\hat{\lambda}(z)^n)_{<0},a(z)\},\quad
  \frac{\p\hat{a}(z)}{\p\hat{s}_n}=\{-(\hat{\lambda}(z)^n)_{<0},\hat{a}(z)\},
  \label{disphir02}
\end{align}
where $n=1,2,3,\dots$ and the Lie bracket reads
\begin{equation}\label{lieb}
\big\{f,g\big\}:=z\left(\frac{\p f}{\p z} \frac{\p g}{\p x}-\frac{\p
g}{\p z} \frac{\p f}{\p x}\right).
\end{equation}
This hierarchy is the dispersionless limit of the Toda lattice
hierarchy \cite{UT, TT}.

In fact, equations \eqref{disphir01}--\eqref{disphir02} can be
defined equivalently with $(a(z),\hat{a}(z))$ replaced by
$(\lambda(z),\hat{\lambda}(z))$. Based on this fact, denote
$\hat{v}_{-M}^{1/M}=e^u$, then equations
\eqref{disphir01}--\eqref{disphir02} yields
\[
\p_{s_1} e^u=\frac{1}{N}e^u \p_x v_{N-1}, \quad
\frac{1}{N}\p_{\hat{s}_1} v_{N-1}=\p_x e^u.
\]
Eliminating $v_{N-1}$, one has
\begin{equation}\label{}
\p_{s_1}\p_{\hat{s}_1}u=\p_x^2 e^u,
\end{equation}
which is rewritten to \eqref{utxy} with $t=s_1+\hat{s}_1$ and
$y=s_1-\hat{s}_1$.

Consider the following space of local functionals on
$\mathL\cM_{N,M}$:
\[
\mathscr{F}=\left\{F=\int_{S^1}
f(v_{N-1}(x),v_{N-2}(x),\dots,\hat{v}_{-M}(x),\hat{v}_{-M+1}(x),\dots)\,d
x\right\}.
\]
For any $F\in\mathscr{F}$, its variational gradient $d F$ is a
covector field on $\mathL\cM_{N,M}$ such that $\delta F=\la d F,
\delta\bm{a}\ra$.

As derived by one of the authors in \cite{Wu}, the loop space
$\mathL\cM_{N,M}$ is equipped with two compatible Poisson brackets
\begin{equation}\label{poiss0}
 \big\{F,H\big\}_\nu=\la d F,\mathcal{P}_\nu(d H)\ra, \quad
 \nu=1,2
\end{equation}
for any $F,H\in\mathscr{F}$, where $\mathcal{P}_\nu:
T^*_{\bm{a}}\mathL\cM_{N,M}\to T_{\bm{a}}\mathL\cM_{N,M} $ are
Hamiltonian operators. More exactly, for any covector
$\bm{\om}=(\om(z,x),\hat{\om}(z,x))\in T^*_{\bm{a}}\mathL\cM_{N,M}$
we have
\begin{align}
  \mathcal{P}_1&(\om(z),\hat{\om}(z)) \nn\\
  =&\Big(-
  \{(\om(z)+\hat{\om}(z))_{<0}, a(z)\}+(\{\om(z),a(z)\}+\{\hat{\om}(z),\hat{a}(z)\})_{\le0}, \nn
\\
  &\quad \{(\om(z)+\hat{\om}(z))_{\ge0},\hat{a}(z)\}-
  (\{\om(z),a(z)\}+\{\hat{\om}(z),\hat{a}(z)\})_{>0}
  \Big),
  \label{poiss01}\\
  \mathcal{P}_2&(\om(z),\hat{\om}(z))\nn\\
  =&\Big(-
  \{(a(z)\om(z)+\hat{a}(z)\hat{\om}(z))_{<0}, a(z)\}
  +a(z)(\{\om(z),a(z)\}+\{\hat{\om}(z),\hat{a}(z)\})_{\le0}, \nn
\\
  &\quad \{(a(z)\om(z)+\hat{a}(z)\hat{\om}(z))_{\ge0},\hat{a}(z)\}-
  \hat{a}(z)(\{\om(z),a(z)\}+\{\hat{\om}(z),\hat{a}(z)\})_{>0}
  \Big) \nn\\
  &+(z a'(z) \p_x f, z\hat{a}'(z)\p_x f)\label{poiss02}
  \end{align}
with
\[
f=\frac{1}{N}\frac{1}{2\pi\bm{i}}\oint_{|z|=1}\big(\om(z)a'(z)
+\hat{\om}(z)\hat{a}'(z)\big)d z.
\]
The fact that $\mathcal{P}_\nu(\om(z),\hat{\om}(z))$ takes value in
$z^{N-1}\mathH^-\times z^{-M}\mathH^+$ (see \eqref{TaM}) follows
from the form of $(a(z),\hat{a}(z))$ and of the Lie bracket
\eqref{lieb}; the only doubt might be the first component of
\eqref{poiss02}, in which the coefficient of $z^N$ is
\begin{align*}
&\frac{1}{2\pi\bm{i}}\oint_{|z|=1}\big(\{\om(z),a(z)\}+\{\hat{\om}(z),\hat{a}(z)\}\big)\frac{d
z}{z} \\
&\quad
+N\cdot\frac{1}{N}\frac{1}{2\pi\bm{i}}\p_x\oint_{|z|=1}\big(\om(z)a'(z)
+\hat{\om}(z)\hat{a}'(z)\big)d z \\
=&\frac{1}{2\pi\bm{i}}\oint_{|z|=1}\big(\om'(z)\p_x a(z)+\om(z)\p_x
a'(z)+\hat{\om}'(z)\p_x\hat{a}(z)+ \hat{\om}(z)\p_x\hat{a}'(z)\big)d z \\
=& \frac{1}{2\pi\bm{i}}\oint_{|z|=1}\p_z\big(\om(z)\p_x
a(z)+\hat{\om}(z)\p_x\hat{a}(z)\big)d z=0.
\end{align*}
One observes the slight difference between these Poisson structures
and those given in \cite{Ca-rm} and \cite{CDM} for $M=N=1$.

\begin{prop}[\cite{Wu}]
The dispersionless Toda lattice hierarchy
\eqref{disphir01}--\eqref{disphir02} can be represented to a
bi-hamiltonian form as
\begin{align}\label{Hamtd}
\frac{\p F}{\p s_n}=\{F, H_{n+N}\}_1=\{F, H_{n}\}_2, \quad \frac{\p
F}{\p \hat{s}_n}=\{F, \hat{H}_{n+M}\}_1=\{F, \hat{H}_{n}\}_2
\end{align}
with  $n=1,2,3,\dots$ and
\begin{align}
  H_n=\frac{N}{n}\frac{1}{2\pi\bm{i}}\oint_{S^1}\oint_{|z|=1}\lambda(z)^{n}\frac{d z}{z}\,d x,\
  \hat{H}_n=\frac{M}{n}\frac{1}{2\pi\bm{i}}\oint_{S^1}\oint_{|z|=1}\hat{\lambda}(z)^{n}\frac{d z}{z}\,d x.
\end{align}
\end{prop}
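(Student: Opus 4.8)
The plan is to establish the two formulas in \eqref{Hamtd} by directly computing the Hamiltonian vector fields on both sides. Since the compatibility of the Poisson pencil $\{\ ,\ \}_1,\{\ ,\ \}_2$ is already recorded in \cite{Wu}, it suffices to check, for each $n\ge1$, that $\mathcal{P}_1(dH_{n+N})$ and $\mathcal{P}_2(dH_n)$ both equal the vector field on the right-hand side of \eqref{disphir01}, and that $\mathcal{P}_1(d\hat H_{n+M})$ and $\mathcal{P}_2(d\hat H_n)$ both equal the vector field on the right-hand side of \eqref{disphir02}; equation \eqref{Hamtd} is then immediate from the definition \eqref{poiss0}.

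The first step is to compute the variational gradients. From $\lambda(z)=a(z)^{1/N}$ in \eqref{rela} one has $\delta\lambda=\tfrac1N\lambda^{1-N}\delta a$, hence $\delta H_n=\tfrac1{2\pi\bm{i}}\oint_{S^1}\oint_{|z|=1}\lambda(z)^{n-N}\,\delta a(z)\,\tfrac{dz}{z}\,dx$. Comparing with the pairing and recalling the shape \eqref{WZ2.5} of the cotangent space, the part of $\lambda^{n-N}$ of order below $-N+1$ pairs trivially against $\delta a\in z^{N-1}\mathH^-$, so that $dH_n=\big((\lambda(z)^{n-N})_{\ge-N+1},\,0\big)$; symmetrically $d\hat H_n=\big(0,\,(\hat\lambda(z)^{n-M})_{\le M}\big)$.

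The second step is to insert these gradients into the explicit expressions \eqref{poiss01}--\eqref{poiss02} and simplify. Two structural facts will be used throughout: the vanishing $\{\lambda^n,a\}=0$ and $\{\hat\lambda^n,\hat a\}=0$ (immediate from $a=\lambda^N$, $\hat a=\hat\lambda^M$ and the definition \eqref{lieb} of the bracket), which lets one trade a projected power of $\lambda$ against its complement; and order counts at $z=\infty$ and $z=0$, exploiting that the coefficient of $z^N$ in $a$ is the fixed constant $1$ while all the remaining coefficients (and all coefficients of $\hat a$) are dynamical, which force terms such as $(\{\,\cdot\,,a\})_{>0}$ to vanish. With these, the two contributions in each component of \eqref{poiss01} collapse by linearity of the bracket and yield $\mathcal{P}_1(dH_{n+N})=\big(\{(\lambda^n)_{\ge0},a\},\{(\lambda^n)_{\ge0},\hat a\}\big)$, exactly the right-hand side of \eqref{disphir01}, and a parallel argument handles $\mathcal{P}_1(d\hat H_{n+M})$. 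For $\mathcal{P}_2$ I would write $a\,(\lambda^{n-N})_{\ge-N+1}=\lambda^n-a\,(\lambda^{n-N})_{\le-N}$ in \eqref{poiss02}, note that $a\,(\lambda^{n-N})_{\le-N}$ has only nonpositive powers of $z$, extract the main term $\big(\{(\lambda^n)_{\ge0},a\},\{(\lambda^n)_{\ge0},\hat a\}\big)$, and show that the leftover — which is supported entirely in the $z^0$-mode — is cancelled by the anomalous term $(z a'\p_x f,\,z\hat a'\p_x f)$ of \eqref{poiss02}. The identity that makes this work is $\tfrac1{2\pi\bm{i}}\oint\lambda^{n-N}a'\,dz=\tfrac{N}{n}\tfrac1{2\pi\bm{i}}\oint\p_z(\lambda^n)\,dz=0$, which ties the function $f$ to precisely the residue generating the $z^0$-discrepancy; in the $\hat H_n$ case the analogous residue vanishes outright and the anomalous term simply drops. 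One concludes that $\mathcal{P}_2(dH_n)$ and $\mathcal{P}_2(d\hat H_n)$ equal the right-hand sides of \eqref{disphir01} and \eqref{disphir02} respectively.

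The step I expect to be the main obstacle is the last one — controlling the constant ($z^0$) mode in the $\mathcal{P}_2$ computation. One has to keep careful account of how the fixed normalizations of $a$ and $\hat a$ interact with the projections $(\,\cdot\,)_{\ge0}$, $(\,\cdot\,)_{<0}$, and verify that the anomalous term, which is present in $\mathcal{P}_2$ precisely so that the flow remains in $z^{N-1}\mathH^-\times z^{-M}\mathH^+$, at the same time repairs this $z^0$-defect. The rest is a lengthy but routine manipulation of Laurent series and of the bracket \eqref{lieb}, and the membership of the output in the tangent space \eqref{TaM} is already guaranteed by the discussion following \eqref{poiss02}.
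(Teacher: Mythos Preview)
The paper does not supply a proof of this proposition: it is imported wholesale from \cite{Wu}, as the bracketed citation in the proposition header indicates, and the text moves directly to the next result. So there is no ``paper's own proof'' to compare against; your proposal is in effect a sketch of the direct verification that \cite{Wu} presumably carries out (in an $R$-matrix framework).

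That said, your outline is the natural one and is essentially sound. The gradient computation is correct, the $\mathcal{P}_1$ step goes through cleanly using $\{\lambda^n,a\}=0$ and the order count you describe, and you have correctly located the only delicate point: in the $\mathcal{P}_2(dH_n)$ computation the function $f$ does \emph{not} vanish, because $\omega=(\lambda^{n-N})_{\ge -N+1}$ is the truncated series, and while $\tfrac{1}{2\pi\bm{i}}\oint\lambda^{n-N}a'\,dz=0$, the truncation tail contributes a boundary term (the $z^{-N}$ coefficient of $\lambda^{n-N}$ against the leading $Nz^{N-1}$ of $a'$). Your phrasing ``ties the function $f$ to precisely the residue generating the $z^0$-discrepancy'' is exactly right: one must check that this nonzero $f$, fed through the anomalous term $(za'\partial_x f,\,z\hat a'\partial_x f)$, cancels the constant-mode defect left over from the main rewriting $a\,\omega=\lambda^n-a(\lambda^{n-N})_{\le -N}$. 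This bookkeeping is straightforward once isolated, but it is the one place where a careless computation would go wrong.
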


\begin{prop}
The bi-hamiltonian structure \eqref{poiss01}--\eqref{poiss02}
coincides with the one induced by the pencil consisting of the
metric \eqref{WZ2.14} and the intersection form \eqref{WZ2.81} on
$\cM_{N,M}$.
\end{prop}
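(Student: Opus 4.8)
The plan is to reduce the proposition to a comparison of the leading coefficients of the two Hamiltonian operators. Recall the Dubrovin--Novikov classification \cite{DN}: a Poisson bracket of hydrodynamic type, written in local coordinates as $\{u^\alpha(x),u^\beta(y)\}=\mathsf{g}^{\alpha\beta}\,\delta'(x-y)+b^{\alpha\beta}_\gamma\,u^\gamma_x\,\delta(x-y)$ with $\mathsf{g}$ nondegenerate, is Poisson precisely when $\mathsf{g}$ is a flat contravariant metric and $b^{\alpha\beta}_\gamma=-\mathsf{g}^{\alpha\sigma}\Gamma^\beta_{\sigma\gamma}$ with $\Gamma$ the Levi-Civita connection of $\mathsf{g}$; hence the subleading term is forced by the leading one, and two Poisson brackets of hydrodynamic type sharing the same nondegenerate leading coefficient coincide. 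The operators $\mathcal{P}_1,\mathcal{P}_2$ of \eqref{poiss01}--\eqref{poiss02} are, by the result of \cite{Wu} recalled in \eqref{poiss0}, a compatible pair of Hamiltonian operators of hydrodynamic type, and their leading coefficients are nondegenerate because the maps $\eta$ of \eqref{WZ2.15} and $g$ of \eqref{WZ2.82} are bijections (Lemma~\ref{lem2.3} and its analogue for $g$). Therefore it suffices to verify that the $\partial_x$-coefficient of $\mathcal{P}_1$ is the metric \eqref{WZ2.14} and that of $\mathcal{P}_2$ is the intersection form \eqref{WZ2.81}, both read off as $\bm a$-dependent contravariant metrics; the flat-pencil property of $g^{\alpha\beta}+s\,\eta^{\alpha\beta}$, which is what makes $\{\,,\,\}_2+s\{\,,\,\}_1$ on the loop space the bi-hamiltonian structure induced by the pencil, then follows from the Frobenius structure already constructed in the previous subsections.

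To perform the comparison I would feed the generating covectors $da(q)$ and $d\hat a(q)$ of \eqref{rpgener01}--\eqref{rpgener02} into $\mathcal{P}_\nu$ and pair the outcome with $da(p)$ and $d\hat a(p)$ via \eqref{daX}. Since $\mathcal{P}_\nu$ is assembled from the first-order Lie bracket \eqref{lieb}, pointwise multiplications, and the projections $(\cdot)_{<0},(\cdot)_{\ge0}$, the pairing $\langle d\alpha(p),\mathcal{P}_\nu(d\beta(q))\rangle$ splits into the part in which $\partial_x$ lands on the covector argument, which carries the $\partial_x\delta(x-y)$, and the part in which $\partial_x$ lands on the field $\bm a$. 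One then checks that the coefficient of $\partial_x\delta(x-y)$ is $\langle d\alpha(p),d\beta(q)\rangle^{*}=\frac{pq}{p-q}\bigl(\alpha'(p)-\beta'(q)\bigr)$ when $\nu=1$ and $\bigl(d\alpha(p),d\beta(q)\bigr)^{*}=\frac{pq}{p-q}\bigl(\alpha'(p)B(\beta(q))-B(\alpha(p))\beta'(q)\bigr)$ when $\nu=2$, where $B(\alpha(p))=\alpha(p)-\frac{p}{N}\alpha'(p)$. The dominant pieces come from the ``geometric'' parts of \eqref{poiss01}--\eqref{poiss02}, and the extra summand $(z a'(z)\partial_x f,\ z\hat a'(z)\partial_x f)$ in $\mathcal{P}_2$ is exactly what is responsible for replacing $\alpha(q)$ by $B(\alpha(q))$, consistently with the Laurent expression \eqref{WZ2.72} for $\mathcal{E}_{N,M}$ and with the defining contraction $(d\alpha(p),d\beta(q))^{*}=i_{\mathcal{E}_{N,M}}(d\alpha(p)\cdot d\beta(q))$ in \eqref{WZ2.80}. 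This is a contour-integral computation of the same kind as in the proofs of Lemma~\ref{lem2.3}, Lemma~\ref{lem2.7} and the proposition yielding \eqref{WZ2.81}: the Cauchy-type identities $\frac{1}{2\pi\bm{i}}\oint_{|q|=1}\frac{q}{q-z}\,(\cdots)\,dq$ reproduce the very same projections that occur there.

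The step I expect to be the main obstacle is keeping track of the two expansion conventions ($|p|>|q|$ against $|p|<|q|$) together with the projections $(\cdot)_{<0},(\cdot)_{\ge0}$ when $\alpha$ and $\beta$ sit in different components, so that the off-diagonal entries genuinely reassemble into the single closed expressions \eqref{WZ2.14} and \eqref{WZ2.81} rather than into piecewise formulas; one must also make sure that the anomalous $f$-dependent summand of $\mathcal{P}_2$ is handled in accordance with the vanishing of the $z^{N}$-coefficient of its first component, already checked just before the proposition. A minor additional point is to phrase the Dubrovin--Novikov uniqueness statement in the infinite-dimensional context: I would argue coefficientwise in the flat coordinates $\bm t\cup\bm h\cup\hat{\bm h}$ of Proposition~\ref{thm-inpd}, or equivalently observe that both brackets are determined by their action on polynomial local functionals, on which the identity has just been established.
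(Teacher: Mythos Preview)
Your proposal is correct and follows essentially the same approach as the paper: feed the generating covectors $da(p),d\hat a(p)$ into $\mathcal{P}_\nu$, extract the coefficient of $\delta'(x-y)$, verify it coincides with \eqref{WZ2.14} for $\nu=1$ and with \eqref{WZ2.81} for $\nu=2$, and then invoke the Dubrovin--Novikov theory to conclude. The paper carries this out explicitly (writing down both the $\delta'$ and $\delta$ coefficients in closed form), and your identification of the $f$-dependent summand in $\mathcal{P}_2$ as the source of the $\frac{pq}{N}\alpha'(p)\beta'(q)$ term---equivalently the $B$-twist---is exactly right.
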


\begin{proof}
We introduce two generating functions for local functionals
\begin{equation}
 a(p,y)=p^{N}+\sum_{i\leq N-1}v_i(y) p^{i}, \quad
   \hat{a}(p,y)=\sum_{j\geq -M}\hat{v}_j(y) p^{j}.
\end{equation}
Their variational gradients are (cf.
\eqref{rpgener01}--\eqref{rpgener02}) respectively
  \begin{align}\label{oneform}
  & da(p,y)=\left(\frac{p^{N}}{z^{N-1}(p-z)}\delta(x-y),0\right),\quad
  |z|<|p|;
  \\
   & d\hat{a}(p,y)=\left(0,\frac{z^{M+1}}{p^{M}(z-p)}\delta(x-y)\right),\quad
  |z|>|p|.
  \end{align}
Substituting them into \eqref{poiss0}, then by a straightforward
calculation we have
 \begin{align}
 \label{bh1}
  \{\al(p,x),\beta(q,y)\}_1=&\frac{p q}{p-q}\big(\al'(p)-\beta'(q)\big)\delta'(x-y)\nn\\
  &+p q\left(\frac{\p_x\al(p)-\p_x\beta(q)}{(p-q)^2}-\frac{\p_x\beta'(q)}{p-q}\right)\delta(x-y),
\\
  \{\al(p,x),\beta(q,y)\}_2=&p q\left(\frac{\al'(p)\beta(q)-\al(p)\beta'(q)}{p-q}
  +\frac{\al'(p)\beta'(q)}{N}\right)\delta'(x-y)\nn\\
  &+p q\bigg(\frac{\p_x\al(p)\cdot\beta(q)-\al(p)\p_x\beta(q)}{(p-q)^2}
  \nn\\
  &+\frac{\al'(p)\p_x\beta(q)-\al(p)\p_x\beta'(q)}{p-q}+\frac{\al'(p)
  \p_x\beta'(q)}{N}\bigg)\delta(x-y),
  \label{bh2}
  \end{align}
where $\al,\beta\in\{a,\hat{a}\}$. These Poisson brackets are of
hydrodynamic type \cite{DN}.

Observe that the coefficients of $\delta'(x-y)$ in
\eqref{bh1}--\eqref{bh2} are exactly the same with the generating
functions for the metrics \eqref{WZ2.14} and \eqref{WZ2.81} on the
Frobenius manifold $\cM_{N,M}$. Therefore, according to the theory
of \cite{DN, Du}, the proposition is proved.
\end{proof}

One can see that the densities of Hamiltonian functionals $H_n$
($n=1,2,\dots,N$) and $\hat{H}_m$ ($m=1,2,\dots,M$) are multiple or
linear combination of flat coordinates in
\[
\{h^j\mid  j=1,\dots,N-1\}\cup\{\hat{h}^k\mid k=1,\dots,
M\}\cup\{t^{i}\}.
\]
By virtue of the bi-hamiltonian recursion relations \eqref{Hamtd},
we conclude that the dispersionless Toda lattice hierarchy
\eqref{disphir01}--\eqref{disphir02} is a subhierarchy of the
principal hierarchy associated to the infinite-dimensional Frobenius
manifold $\cM_{N,M}$.

In order to write down the whole principal hierarchy for
$\cM_{N,M}$, one needs to find flat coordinates for the deformed
flat metric on this infinite-dimensional Frobenius manifold, which
is still open in general. We remark that Carlet and Mertens
\cite{CM} obtained the principal hierarchy for the
infinite-dimensional Frobenius manifold $M_0$ (cf. $\cM_{1,1}$)
constructed in \cite{CDM}.

\begin{rem}
In \cite{BMRWZ} Boyarsky et al. derived a system of WDVV
associativity equations satisfied by the logarithm of tau function
of the dispersionless Toda lattice hierarchy (see also \cite{BX1,
BX2}). It is interesting to compare solutions to their associativity
equations and the potential $\mathcal{F}_{N,M}$ given in the present
paper. This might be clarified from an answer to a more general
question, that is, as indicated at the end of \cite{CDM}, how to
extend the potential $\mathcal{F}_{N,M}$ to the so called
topological solution of the principal hierarchy associated to
$\cM_{N,M}$. We will study it elsewhere.
\end{rem}

\section{Conclusions and remarks}

We have obtained a class of infinite-dimensional Frobenius manifolds
for the dispersionless Toda lattice hierarchy, which generalizes the
construction of \cite{CDM}. Moreover, from these
infinite-dimensional manifolds we rederive the structure of
Frobenius manifolds that were constructed on the orbit space of
extended affine Weyl groups of type A in \cite{DZ2}. This
relationship is consistent with the relation between the Toda
lattice and the extended bigraded Toda \cite{Ca-bt, UT} hierarchies
as well as their bi-hamiltonian structures \cite{Wu}.

In the particular case $N=M=1$, our Frobenius manifold $\cM_{1,1}$
is not exactly the same with $M_0$ given in \cite{CDM}. by a slight
difference (see \eqref{WZ2.14} and \eqref{metricCDM}). Up to now we
do not see any direct connection between these two Frobenius
manifolds, which might be hinted by the intertwining operators in
\cite{Wu} that relate the $r$-matrices on a Lie algebra for
constructing Hamiltonian structures for the Toda lattice hierarchy.
Such Hamiltonian structures deduced from different $r$-matrices
probably imply some other infinite-dimensional Frobenius manifolds
besides $\cM_{1,1}$ and $M_0$. We plan to consider these questions
in the future.

The analogy between the Frobenius manifolds underlying the Toda
lattice and the two-component BKP hierarchies is clear. In fact, the
potential $\mathcal{F}_{m,n}$ of the infinite-dimensional Frobenius
manifold in \cite{WX2} labeled by positive integers $m$ and $n$ can
be written more rigorously as
\begin{align*}
     \mathcal{F}_{m,n}=&\left(\frac{1}{2\pi\bm{i}}\right)^2
     \oint\oint_{|z_1|<|z_2|}\left(\frac{1}{2}\zeta(z_1)\zeta(z_2)-\zeta(z_1)l(z_2)
     +l(z_1)\zeta(z_2)\right)\times \nn \\
     & \qquad\qquad\qquad
     \times\log\left(\frac{z_2-z_1}{z_2}\right)dz_1dz_2+F_{m,n},
\end{align*}
where $\zeta(z)$ and $l(z)$ are certain even functions similar as
\eqref{WZ2.12}, and $F_{m,n}$ is the potential of the Frobenius
manifold defined on the orbit space of the Coxeter group of type
$B_{m+n}$, see \cite{Ber3, Du, Z2007} for details. Note that the
replacement of the factor $\log(z_2-z_1)$ in \cite{WX2} (see (1.19))
by $\log\left(\frac{z_2-z_1}{z_2}\right)$ does not change the result
due to the residue of an even function always vanishes. It shows
that the approach in \cite{CDM, WX2} and the present paper is indeed
efficient for constructing infinite-dimensional Frobenius manifold
related to two-component difference or differential integrable
hierarchies. What is more, in such constructions,
infinite-dimensional Frobenius manifolds can be naturally connected
to Frobenius manifolds of finite dimension. We hope that this
observation provides a hint to enlarge the family of
infinite-dimensional Frobenius manifolds, as well as to understand
their relation with the Frobenius manifolds given in \cite{Ra, Sz}.

\noindent{\bf Acknowledgments.} The authors are grateful to
Professors Boris Dubrovin, Youjin Zhang and Qing Chen
for constant supports and helpful comments. They also thank the
editors and the referees for their patience and constructive suggestions,
as well as referring the authors to the paper \cite{St}. D.\,Zuo also
thanks the University of Glasgow for the hospitality  and Professor
Ian Strachan for elaborately explaining his paper \cite{St}.
The research of
C.-Z.\,Wu has received specific funding under the ``Young SISSA
Scientists' Research Projects'' scheme 2012-1013, promoted by the
International School for Advanced Studies (SISSA), Trieste, Italy.
The research of D.\,Zuo is partially supported by NSFC (11271345, 11371138),
NCET-13-0550, ``PCSIRT", SRF for
ROCS,SEM and OATF,USTC.


\end{document}